\newcommand{\lh}[1]{}\newcommand{\ab}[1]{}\newcommand{\mj}[1]{}
\renewcommand{\lh}[1]{\todo[color=red!20,size=\footnotesize]{LH: #1}{}}
\renewcommand{\ab}[1]{\todo[color=purple!20,size=\footnotesize]{AB: #1}{}}
\renewcommand{\mj}[1]{\todo[color=green!20,size=\footnotesize]{MJ: #1}{}}
\newcommand*\samethanks[1][\value{footnote}]{\footnotemark[#1]}
\definecolor{cite_color}{RGB}{0, 0, 255}
\definecolor{link_color}{RGB}{153, 0,0}  
\definecolor{url_color}{RGB}{153, 102,  0}
\definecolor{emp_color}{RGB}{0,0,255}
\crefname{section}{Section}{Sections}
\crefname{theorem}{Theorem}{Theorems}
\crefname{lemma}{Lemma}{Lemmas}
\crefname{equation}{Equation}{Equations}
\crefname{proposition}{Proposition}{Propositions}
\crefname{claim}{Claim}{Claims}
\crefname{assumption}{Assumption}{Assumptions}
\crefname{appendix}{Appendix}{Appendices}
\crefname{algorithm}{Algorithm}{Algorithms}
\crefname{figure}{Figure}{Figures}
\crefname{table}{Table}{Tables}
\crefname{remark}{Remark}{Remarks}
\crefname{definition}{Definition}{Definitions}
\crefname{equatinon}{Equation}{Equations}
\crefname{corollary}{Corollary}{Corollaries}
\newcommand{\appendixtitle}[1]{
    \begin{center}
        \LARGE \bf #1
    \end{center}
}
\def\E{{\mathbb E}}
\def \v{\mathbf{v}}
\def \w{\mathbf{w}}
\def \a{\mathbf{a}}
\def \b{\mathbf{b}}
\def \d{\mathbf{d}}
\def \x{\mathbf{x}}
\def \y{\mathbf{y}}
\def \s{\mathbf{s}}
\def \u{\mathbf{u}}
\def \z{\mathbf{z}}
\def \BA{\mathbf{A}}
\def \BI{\mathbf{I}}
\def \BE{\mathbf{E}}
\def \BG{\mathbf{G}}
\def \BV{\mathbf{V}}
\def \BW{\mathbf{W}}
\def \blambda{\bm{\lambda}}
\def \P{{\cal{P}}}
\def \R{{\mathbb{R}}}
\def \trans{\top}
\newcommand{\pare}[1]{{(#1)}}  
\def \E{{\mathbb{E}}} 
\newcommand{\argmin}{{\arg\min}}
\newcommand{\dom}[1]{{\texttt{dom}(#1)}}
\newcommand{\dtp}[2]{\langle #1, #2\rangle}
\newtheorem{assumption}{Assumption}
\newtheorem{lemma}{Lemma}
\newtheorem{definition}{Definition}
\newtheorem{remark}{Remark}
\newtheorem{claim}{Claim}
\newcommand{\zero}{\mathbf{0}} 
\newcommand{\one}{\mathbf{1}} 
\newcommand{\mixingmat}{\mathcal {W}}
\let \oldtextcircled \textcircled
\renewcommand{\textcircled}[1]{\oldtextcircled{\footnotesize #1}}
\newcommand{\cocoa}{\textsc{CoCoA}\xspace} 
\newcommand{\cola}{\textsc{CoLa}\xspace}
\newcommand{\diging}{{DIGing}\xspace}
\newcommand{\cocoap}{\textsc{CoCoA$\!^{\bf \textbf{\footnotesize+}}$}\xspace}
\newcommand{\wv}{ {\bf w}}
\newcommand{\xv}{ {\bf x}}
\newcommand{\vv}{ {\bf v}}
\newcommand{\uv}{ {\bf u}}
\newcommand{\OA}{F_{\hspace{-1pt}A}}
\newcommand{\OB}{F_{\hspace{-1pt}B}}
\newcommand{\ooa}{\mathcal{H}_{\hspace{-1pt}A}}
\newcommand{\oob}{\mathcal{H}_{\hspace{-1pt}B}}
\newcommand{\Pk}{\mathcal{P}_k}
\newcommand{\vsubset}[2]{#1_{[#2]}}
\newcommand{\eqdef}{:=}
\newcommand{\aggpar}{\gamma}
\newcommand{\vc}[2]{#1^{(#2)}}                   
\newcommand{\0}{ {\bf 0}}
\newcommand{\norm}[1]{\left\lVert{#1}\right\rVert}
\newcommand{\cconj}[1]{#1^*}
\renewcommand{\d}{d}
\newcommand{\n}{n}
\newcommand{\emphtitle}[1]{\textbf{#1}}
\newcommand{\newsub}{\mathscr{G}^{\sigma'}_k\hspace{-0.08em}}
\newcommand{\bigo}[1]{\mathcal O(#1)}
\newcommand{\locgra}{\mathbf{g}}
\newcommand{\hh}{\mathcal{H}} 
\newcommand{\colagap}{G_\mathcal{H}} 
\title{
\cola:
Decentralized Linear Learning
}
\author{
Lie He
\thanks{These two authors contributed equally}
\\
EPFL\\
\texttt{lie.he@epfl.ch}
\And
{An Bian\samethanks[1] } \thanks{Now known as Yatao A. Bian. ORCID: \href{https://orcid.org/0000-0002-2368-4084}{orcid.org/0000-0002-2368-4084}}
\\
ETH Zurich\\
\texttt{ybian@inf.ethz.ch}
\And
Martin Jaggi\\
EPFL\\
\texttt{martin.jaggi@epfl.ch}
}
\begin{document}

\maketitle

\begin{abstract}
Decentralized machine learning is a promising emerging paradigm in view of global challenges of data ownership and privacy. We consider learning of linear classification and regression models, in the setting where the training data is decentralized over many user devices, and the learning algorithm must run on-device, on an arbitrary communication network, without a central coordinator.
We propose \cola, a new decentralized training algorithm with strong theoretical guarantees and superior practical performance. Our framework overcomes many limitations of existing methods, and achieves communication efficiency, scalability, elasticity as well as resilience to changes in data and allows for unreliable and heterogeneous participating devices.
\end{abstract}

%

\section{Introduction}

With the immense growth of data, decentralized machine learning has become not only attractive but a necessity. 
Personal data from, for example, smart phones, wearables and many other mobile devices is sensitive and exposed to a great risk of data breaches and abuse when collected by a centralized authority or enterprise. 
Nevertheless, many users have gotten accustomed to giving up control over their data in return for useful machine learning predictions (e.g. recommendations), which benefits from joint training on the data of all users combined in a centralized fashion.

In contrast, decentralized learning aims at learning this same global machine learning model, without any central server. Instead, we only rely on distributed computations of the devices themselves, with each user's data never leaving its device of origin.
While increasing research progress has been made towards this goal, major challenges in terms of the privacy aspects as well as algorithmic efficiency, robustness and scalability remain to be addressed.
Motivated by aforementioned challenges, we make progress in this work addressing the important problem of training generalized linear models in a fully decentralized environment.

Existing research on decentralized optimization, $\min_{\xv\in\R^n} F(\xv)$, can be categorized into two main directions. The seminal line of work started by Bertsekas and Tsitsiklis in the 1980s, cf. \citep{Tsitsiklis:1986ee}, tackles this problem by splitting the parameter vector $\xv$ by coordinates/components among the devices.
A second more recent line of work including e.g. \citep{nedic2009distributed,duchi2012ddual,shi2015extra,mokhtari2016dsa,nedic2017achieving}
 addresses sum-structured  $F(\xv) = \sum_k F_k(\xv)$  where $F_k$ is the local cost function of node $k$. This structure is closely related to empirical risk minimization in a learning setting. See e.g. \citep{cevher2014review} for an overview of both directions. 
While the first line of work typically only provides convergence guarantees for smooth objectives~$F$, the second approach often suffers from a ``lack of consensus'', that is, the minimizers of $\{F_k\}_k$ are typically different since the data is not distributed i.i.d. between devices in general.

\paragraph{Contributions.}
In this paper, our main contribution is to propose \cola, a new decentralized framework for training generalized linear models with convergence guarantees.
Our scheme resolves both described issues in existing approaches, using techniques from primal-dual optimization, and can be seen as a generalization of \cocoa \citep{smith2016cocoa} to the decentralized setting.
More specifically, the proposed algorithm offers \vspace{-1mm}
\begin{itemize}
\item \emph{Convergence Guarantees:} Linear and sublinear convergence rates are guaranteed for strongly convex and general convex objectives respectively.  
Our results are free of the restrictive assumptions made by stochastic methods  \citep{zhang2015elastic,Wang:2017th}, which requires i.i.d. data distribution over all devices. 
\item \emph{Communication Efficiency and Usability:} Employing a data-local subproblem between each communication round, \cola not only achieves communication efficiency but also allows the re-use of existing efficient single-machine solvers for on-device learning. 
We provide practical decentralized primal-dual certificates to diagnose the learning progress.
\item \emph{Elasticity and Fault Tolerance:} 
Unlike sum-structured approaches such as SGD
, \cola is provably resilient to changes in the data, in the network topology, and participating devices disappearing, straggling or re-appearing in the network.
\end{itemize}

Our implementation is publicly available under \href{https://github.com/epfml/cola}{github.com/epfml/cola} .

%
%

\subsection{Problem statement}

\textbf{Setup.}
Many machine learning and signal processing models are formulated as a composite convex optimization problem of the form
\[
\min_{\u}  \;  l(\u) + r(\u),
\]
where $l$ is a convex loss function of a linear predictor over data and $r$ is a convex regularizer. Some cornerstone applications include e.g. logistic regression, SVMs, Lasso, generalized linear models, each combined with or without L1, L2 or elastic-net regularization.
Following the setup of \citep{Dunner:2016vga,smith2016cocoa}, these training problems can be mapped to either of the two following formulations, which are dual to each other
\begin{align}
	& \min_{\xv\in\R^n} \big[\OA(\xv) \eqdef f(\BA\xv) + \textstyle\sum_i g_i(x_i)\ \big] \label{eq_A} \tag{\color{link_color}{A}}\\
	& \min_{\wv\in\R^d} \big[\OB(\w) \eqdef \cconj{f}(\w) + \textstyle\sum_i \cconj{g_i}(- \BA_i^\trans \w)\big], \label{eq_B} \tag{\color{link_color}{B}}
\end{align}
where $\cconj{f}, \cconj{g}_i$ are the convex conjugates of $f$ and $g_i  $, respectively. Here
$\xv\in\R^n$ is a parameter vector and $\BA\eqdef [\BA_1; \ldots; \BA_n] \in\R^{d\times n}$ is a data matrix with column vectors $\BA_i\in\R^d, i\in[n]$. We assume that $f$ is smooth (Lipschitz gradient) and $g(\xv) := \sum_{i=1}^ng_i(x_i)$ is \emph{separable}.
%
%

\textbf{Data partitioning.}  As in \citep{jaggi2014communication,Dunner:2016vga,smith2016cocoa}, we assume the dataset~$\BA$ is distributed over~$K$ machines according to a partition $\{\Pk\}_{k=1}^K$ of the \textit{columns} of $\BA$. Note that 
this convention maintains the flexibility of partitioning the training dataset 
either by samples (through mapping applications to~\eqref{eq_B}, e.g. for SVMs) or by features (through mapping applications to \eqref{eq_A}, e.g. for Lasso or L1-regularized logistic regression). 
For $\xv\in \R^n$, we write $\vsubset{\xv}{k}\in\R^n$ for the $n$-vector with elements $(\vsubset{\xv}{k})_i:=x_i$ if $i\in\Pk$ and $(\vsubset{\xv}{k})_i:=0$ otherwise, and analogously $\vsubset{\BA}{k}\in\R^{d\times n_k}$ for the corresponding set of local data columns on node $k$, which is of size $n_k=|\Pk|$.

\textbf{Network topology.}
We consider the task of joint training of a global machine learning model in a {decentralized} network of  $K$ nodes. Its connectivity is modelled by a mixing matrix $\mixingmat\in \R_+^{K\times K}$.
More precisely, $\mixingmat_{ij} \in [0,1]$ denotes the connection strength
between nodes $i$ and $j$, with a non-zero weight indicating the existence of a pairwise communication link.
We assume $\mixingmat$ to be symmetric and
doubly stochastic, which means each row and column of~$\mixingmat$ sums to
one.

The spectral properties of $\mixingmat$ used in this paper are that the
eigenvalues of $\mixingmat$ are real, and $1 = \lambda_1(\mixingmat) \geq \cdots \geq  \lambda_n(\mixingmat)  \geq -1$.  Let the second largest
magnitude of the eigenvalues of $\mixingmat$ be
$\beta := \max \{ |\lambda_2(\mixingmat)|, | \lambda_n(\mixingmat) | \}$.
$1-\beta$ is called the \textit{spectral gap}, a quantity well-studied in graph theory and network analysis. 
The spectral gap measures the level of connectivity among nodes. In the extreme case when $\mixingmat$ is diagonal, and thus an identity matrix, the spectral gap is 0 and there is no communication among nodes.
To ensure convergence of decentralized algorithms, we impose the standard assumption of positive spectral gap of the network which includes all connected graphs, such as e.g. a ring or 2-D grid topology, see also \cref{sec:graphTopology} for details.



\subsection{Related work}

Research in decentralized optimization dates back to the 1980s with the seminal work of Bertsekas and Tsitsiklis, cf. \citep{Tsitsiklis:1986ee}.
Their framework focuses on the minimization of a (smooth) function by distributing the components of the parameter vector $\xv$ among agents. In contrast, a second more recent line of work \citep{nedic2009distributed,duchi2012ddual,shi2015extra,mokhtari2016dsa,nedic2017achieving,icmlScamanBBLM17,scaman2018optimal} considers minimization of a sum of individual local cost-functions $F(\xv) = \sum_i F_i(\xv)$, which are potentially non-smooth. Our work here can be seen as bridging the two scenarios to the primal-dual setting \eqref{eq_A} and~\eqref{eq_B}.


While decentralized optimization is a relatively mature area in the operations research and automatic control communities, it has recently received a surge of attention for machine learning applications, see e.g. \citep{cevher2014review}.
Decentralized gradient descent (DGD) with diminishing stepsizes was proposed by~\citep{nedic2009distributed,jakovetic2012convergence}, showing convergence to the optimal solution at a sublinear rate. 
 \citep{yuan2016convergence} further prove that DGD will 
converge to the neighborhood of a global optimum at a linear rate when used with a  constant stepsize for strongly convex objectives.
\citep{shi2015extra} present EXTRA, which offers a significant performance
boost compared to DGD by using a gradient tracking technique.
\citep{nedic2017achieving} propose the \diging algorithm
to handle a time-varying network topology. 
For a static and symmetric $\mixingmat$, \diging recovers EXTRA by redefining the two mixing matrices in EXTRA.
The dual averaging method~\citep{duchi2012ddual} converges at a sublinear rate with a dynamic stepsize.
Under a strong convexity assumption, decomposition techniques such as
decentralized ADMM (DADMM, also known as consensus ADMM) have linear convergence for time-invariant undirected graphs, if subproblems are solved exactly~\citep{Shi:2014js,Wei:2013wy}.
DADMM+ \citep{bianchi2016coordinate} is a different primal-dual approach with more efficient closed-form updates in each step (as compared to ADMM), and is proven to converge but without a rate. Compared to \cola, neither of DADMM and DADMM+ can be flexibly adapted to the communication-computation tradeoff due to their fixed update definition, and both require additional hyperparameters to tune in each use-case (including the $\rho$ from ADMM).
Notably \cola shows superior performance compared to DIGing  and decentralized ADMM in our experiments.
\citep{icmlScamanBBLM17,scaman2018optimal} present lower complexity bounds and optimal algorithms for 
objectives in the form $F(\xv) = \sum_i F_i(\xv)$. Specifically,   \citep{icmlScamanBBLM17}  assumes each $F_i(\xv)$ is 
smooth and strongly convex, and~\citep{scaman2018optimal} assumes each $F_i(\xv)$ is 
Lipschitz continuous and  convex. 	Additionally~\citep{scaman2018optimal} needs a boundedness constraint for the input problem.  In contrast, \cola can handle non-smooth and non-strongly convex objectives \eqref{eq_A} and \eqref{eq_B}, suited to the mentioned applications in machine learning and signal processing.
%
%
For smooth nonconvex models, 
\citep{lian2017can} demonstrate that a variant of decentralized parallel SGD can outperform the centralized variant when the network latency is high. They further extend it to the asynchronous
setting~\citep{lian2017asynchronous} and to deal with large data variance
among nodes~\citep{tang2018d} or with unreliable network links~\citep{tang2018decentralized}. 
For the decentralized, asynchronous consensus optimization, \citep{wu2018decentralized} extends the existing PG-EXTRA and proves convergence of the algorithm. \citep{sirb2018decentralized} proves a $O(K/\epsilon^2)$ rate for stale and stochastic gradients. \citep{lian2017asynchronous} achieves $O(1/\epsilon)$ rate and has linear speedup with respect to number of workers.

In the distributed setting with a central server, algorithms of the \cocoa family  \citep{yang2013disdca,jaggi2014communication,ma2015adding,duenner2018trust}---see \citep{smith2016cocoa} for a recent overview---are targeted for problems of the forms \eqref{eq_A} and \eqref{eq_B}. For convex models, \cocoa has shown to significantly outperform competing methods including e.g., ADMM, 
distributed SGD etc.
Other centralized algorithm representatives are parallel SGD variants such as~\citep{agarwal2011distributed,zinkevich2010parallelized} and more recent
distributed second-order methods \citep{zhang2015disco,reddi2016aide,gargiani2017master,lee2017distributed,duenner2018trust,lee2018distributed}.

In this paper we extend \cocoa to the challenging \textit{decentralized} environment---with no central coordinator---while maintaining all of its nice properties.
We are not aware of any existing primal-dual methods in the decentralized setting, 
except the recent work of \citep{smith2017federated} on federated learning for the special case of multi-task learning problems.
Federated learning was first described by  \citep{konevcny2015federated,konevcny2016federated,mcmahan2017communication} as decentralized learning for on-device learning  applications, combining a global shared model with local personalized models.
Current federated optimization algorithms (like FedAvg in  \citep{mcmahan2017communication}) are still close to the centralized setting. In contrast, our work provides a fully decentralized
alternative algorithm for federated learning with generalized linear models.

\section{The decentralized algorithm: \cola}\label{sec:cola}

The \cola framework is summarized  in \cref{alg_dcocoa}. For a given input problem we map it to either of the \eqref{eq_A} or \eqref{eq_B} formulation,
and define the locally stored dataset $\vsubset{\BA}{k}$ and local part of the weight vector $\vsubset{\x}{k}$ in node $k$ accordingly. 
While $\vv=\BA\x$ is the shared state being communicated in \cocoa, this is generally unknown to a node in the fully decentralized setting. Instead, we maintain~$\vv_k$, a local estimate of $\vv$ in node $k$, and use it as a surrogate in the algorithm. 

\begin{algorithm}[t]
	\caption{\cola: \emphtitle{Co}mmunication-Efficient Decentralized \emphtitle{L}inear Le\emphtitle{a}rning}
	\label{alg_dcocoa}
	\textbf{Input}: Data matrix $\BA$ distributed column-wise according to partition $\{\Pk\}_{k=1}^K$. Mixing matrix $\mixingmat$.
	Aggregation parameter $\aggpar\!\in\![0,1]$,
		and local subproblem parameter $\sigma'$ 
		as in  \labelcref{eq_new_subproblem}.
		Starting point $\vc{\xv}{0} := \0 \in \R^n$, $\vc{\vv}{0}:=\0\in \R^d$, $\vc{\vv}{0}_k:=\0\in \R^d~ \forall~ k=1,\ldots K$;

	\BlankLine
	\For {$t = 0, 1, 2, \dots, T$}{
		\For {$k \in \{1,2,\dots,K\}$ {\bf in parallel over all nodes}}{

			compute locally averaged shared vector
				$\vc{\vv}{t+\frac{1}{2}}_k   :=
				\textstyle \sum_{l=1}^K \mixingmat_{kl} \vc{\vv}{t}_l $\\

			$\vsubset{\Delta \xv}{k}\leftarrow$ $\Theta$-approximate solution
				to subproblem~\eqref{eq_new_subproblem} at $\vc{\vv}{t+\frac{1}{2}}_k$\\[-1mm]
			update local variable $\vsubset{\vc{\xv}{t+1}}{k} := \vsubset{\vc{\xv}{t}}{k} + \aggpar \, \vsubset{\Delta \xv}{k}$\\
			compute  update of   local    estimate    $\Delta \vv_k :=  \BA_{[k]}
				\vsubset{\Delta \xv}{k}$\\


			$\vc{\vv}{t+1}_k := \vc{\vv}{t+\frac{1}{2}}_k  + \gamma K\Delta \vv_k$ \label{step_strategy2}

		}
	}
\end{algorithm}


\bigskip
\textbf{New data-local quadratic subproblems.}
During a computation step, node $k$ locally solves the following minimization problem\vspace{-2mm}
\begin{equation}
\label{eq_new_subproblem} 
\min_{\vsubset{\Delta \xv}{k}\in\R^{\n}} \ %
\newsub(\vsubset{\Delta\xv}{k};\v_k, \vsubset{\xv}{k}), \vspace{-2mm}
\end{equation}
where\vspace{-2mm}
\begin{equation}\label{eq:subproblem_modified_1}
	\begin{split}\textstyle
		\newsub(\vsubset{\Delta\xv}{k};\v_k, \vsubset{\xv}{k})
		:=\ &
		\textstyle
		\frac{1}{K}
			f(\v_k)
			+ 		\nabla f(\v_k) ^\top
					\vsubset{\BA}{k}\vsubset{\Delta\xv}{k}
				\\
			&
			\textstyle
			+ \frac{\sigma'}{2\tau} \norm{\vsubset{\BA}{k}\vsubset{\Delta\xv}{k}}^2
			+ \sum_{i \in \Pk} g_i(x_i + {(\vsubset{\Delta \xv}{k})}_i).
	\end{split}
\end{equation}
Crucially, this subproblem only depends on the local data $\vsubset{\BA}{k}$, and local vectors~$\vv_l$ from the neighborhood of the current node $k$.
In contrast, in \cocoa \citep{smith2016cocoa} the subproblem is defined in terms of a global aggregated shared vector $\vv_c := \BA\xv \in\R^{\d}$, which is not available in the decentralized setting.\footnote{%
\emph{Subproblem interpretation:} Note that for the special case of $\aggpar:=1$, $\sigma':= K$, by smoothness of $f$, our subproblem in \eqref{eq:subproblem_modified_1} is an upper bound on
\begin{equation}\label{eq:subproblem_2}\textstyle
\min_{\vsubset{\Delta \xv}{k}\in\R^{\n}} 
\frac{1}{K} f(\BA(\x + K \vsubset{\Delta \xv}{k})) + \sum_{i\in\Pk} g_i(x_i + {(\vsubset{\Delta \xv}{k})}_i),
\end{equation}
which is a scaled block-coordinate update of block $k$ of the original objective \eqref{eq_A}. This assumes that we have consensus $\v_k\equiv\BA\x$ $\forall~k$.
For \textit{quadratic} objectives (i.e. when $f\equiv\norm{.}_2^2$ and~$\BA$ describes the quadratic), the equality of the formulations \eqref{eq:subproblem_modified_1} and \eqref{eq:subproblem_2} holds.
Furthermore, by convexity of $f$, the sum of \eqref{eq:subproblem_2} is an upper bound on the centralized updates $f(\x+ \Delta\x) + g(\x + \Delta\x)$. 
Both inequalities quantify the overhead of the distributed algorithm over the centralized version, see also \citep{yang2013disdca,ma2015adding,smith2016cocoa} for the non-decentralized case.
}
The aggregation parameter $\aggpar\in[0, 1]$ does not need to be tuned; in fact, we use the default $\aggpar:=1$ throughout the paper, see \citep{ma2015adding} for a discussion.
Once $\aggpar$ is settled, a safe choice of the subproblem relaxation parameter $\sigma'$ is given as $\sigma':=\aggpar K$. 
$\sigma'$ can be additionally tightened using 
an improved Hessian subproblem (\cref{ssec:hessian}).



\textbf{Algorithm description.} At time $t$ on node $k$, $\vc{\vv}{t+\frac{1}{2}}_k$ is a local estimate of  
the shared variable after a communication step (i.e. gossip mixing). The local subproblem \eqref{eq_new_subproblem} based on this estimate is solved and yields
$\vsubset{\Delta \xv}{k}$.
%
Then we calculate $\Delta \vv_k :=  \vsubset{\BA}{k} \vsubset{\Delta \xv}{k}$, and update the local shared vector 
 $\vc{\vv}{t+1}_k$. 
%
%
We allow the local subproblem to be solved approximately:
\begin{assumption}[$\Theta$-approximation solution]\label{assumption:theta}
	Let $\Theta\in[0, 1]$ be the \emph{relative accuracy} of the local solver (potentially randomized), in the sense of returning an approximate solution $\Delta\vsubset{\xv}{k}$ at each step $t$, s.t.\vspace{-1mm}
	\begin{equation*}
	\frac{
	\mathbb{E}[
	\newsub(\Delta\vsubset{\xv}{k}; \v_k, \vsubset{\xv}{k})
	-\newsub(\Delta\vsubset{\xv^\star}{k}; \v_k, \vsubset{\xv}{k})
	]
	}{
	~~\newsub(~~\0~~~~; \v_k, \vsubset{\xv}{k})
	~-\newsub(\Delta\vsubset{\xv^\star}{k}; \v_k, \vsubset{\xv}{k})
	}
	 \le
	\Theta,
	\end{equation*}
	where
	$\Delta\vsubset{\xv^\star}{k}\in\argmin_{\Delta\xv\in\R^n} \newsub(\Delta\vsubset{\xv}{k}; \v_k, \vsubset{\xv}{k}), \text{ for each } k \in [K]$.
\end{assumption}

\paragraph{Elasticity to network size, compute resources and changing data---and fault tolerance.}\label{par:adaptivity}
Real-world communication networks are not homogeneous and static, but greatly vary in availability, computation, communication and storage capacity. Also, the training data is subject to changes. While these issues impose significant challenges for most existing distributed training algorithms, we hereby show that \cola offers adaptivity to such dynamic and heterogenous scenarios. 

Scalability and elasticity in terms of availability and computational capacity can be modelled by a node-specific local accuracy parameter $\Theta_k$ in \cref{assumption:theta}, as proposed by \citep{smith2017federated}. The more resources node $k$ has, the more accurate (smaller) $\Theta_k$ we can use.
The same mechanism also allows dealing with fault tolerance and stragglers, which is crucial e.g. on a network of personal devices.
More specifically, when a new node $k$ joins the network, its $\vsubset{\xv}{k}$ variables are initialized to~$\0$; when node $k$ leaves, its $\vsubset{\xv}{k}$ is frozen, and its subproblem is not touched anymore (i.e. $\Theta_k=1$). Using the same approach, we can adapt to dynamic changes in the dataset---such as additions and removal of local data columns---by adjusting the size of the local weight vector accordingly. Unlike gradient-based methods and ADMM, \cola does not require parameter tuning to converge, increasing resilience to drastic changes. 

\textbf{Extension to improved second-order subproblems.}
In the centralized setting, it has recently been shown that the Hessian information  of $f$ can be properly utilized to define improved local subproblems \citep{lee2017distributed,duenner2018trust}.
Similar techniques can be applied to \cola as well, details on which are left in \cref{app_extensions}.

\textbf{Extension to time-varying graphs.}
Similar to scalability and elasticity, it is also straightforward to extend \cola to a time varying graph under proper assumptions. If we use the time-varying model in \cite[Assumption 1]{nedic2017achieving}, where an undirected graph is connected with $B$ gossip steps, then changing \cola to perform $B$ communication steps and one computation step per round still guarantees convergence. Details of this setup are provided in \cref{app_extensions}.



\section{On the convergence of \cola}
In this section we present a convergence analysis of the proposed decentralized
algorithm \cola for both general convex and strongly convex objectives.
In order to capture the evolution of \cola, we reformulate the original problem \eqref{eq_A} by incorporating both $\x$ and local estimates $\{\vv_k\}_{k=1}^K$
\begin{align}\label{eq:H}
& \textstyle
\min_{ \xv, \{\vv_k\}_{k=1}^K }
 \ooa(\xv, \{\vv_k\}_{k=1}^K)
:=
\frac1K \sum_{k=1}^{K}
f(\vv_k)
+
g(\xv),   \qquad \tag{\color{link_color}{DA}}
\\\notag 
& \text{ such that } \qquad  \vv_k = \BA \x, \;  k=1,...,K. 
\end{align}
While the consensus is not always satisfied during
 \cref{alg_dcocoa},
the following relations between the decentralized objective and the original one \eqref{eq_A} always hold. All proofs are deferred to Appendix~\ref{sec:proofs}.
%
%
%
%
\begin{restatable}{lemma}{restatlemmaDalphaHalpha}
\label{lemma:DH}
Let $\{\v_k\}$ and $\x$ be the iterates generated during the execution of \cref{alg_dcocoa}. At any timestep, it holds that\vspace{-2mm}
%
\begin{align}\label{eq:DH:1}
& \textstyle	
\frac{1}{K}\sum_{k=1}^{K}{\vv_k} = \BA\xv,\\
& \textstyle	
\OA(\xv) \le \ooa (\xv, \{\vv_k\}_{k=1}^K)\le
\OA(\xv)
+
\frac{1}{2\tau K}\sum  _{k=1}^{K} \norm{\vv_k- \BA\x}^2.
\end{align}
\end{restatable}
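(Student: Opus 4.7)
The two claims split naturally. I will establish the averaging identity \eqref{eq:DH:1} first, by induction on $t$, and then use it as an ingredient in the sandwich bound. The argument uses two properties separately: double stochasticity of $\mixingmat$ for the identity, and convexity plus smoothness of $f$ for the sandwich.

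For the averaging identity, at $t=0$ both sides vanish since $\vc{\vv}{0}_k=\0$ and $\vc{\x}{0}=\0$. Assume $\frac{1}{K}\sum_k \vc{\vv}{t}_k = \BA\vc{\x}{t}$. After the gossip step, $\frac{1}{K}\sum_k \vc{\vv}{t+\frac12}_k = \frac{1}{K}\sum_k \sum_l \mixingmat_{kl}\vc{\vv}{t}_l = \frac{1}{K}\sum_l \vc{\vv}{t}_l$, because columns of $\mixingmat$ sum to one. The local update then sets $\vc{\vv}{t+1}_k = \vc{\vv}{t+\frac12}_k + \gamma K\,\BA_{[k]}\vsubset{\Delta\x}{k}$, so averaging and using $\sum_k \BA_{[k]}\vsubset{\Delta\x}{k}=\BA\,\Delta\x$ with $\Delta\x=\sum_k \vsubset{\Delta\x}{k}$ (disjoint support), gives $\frac{1}{K}\sum_k\vc{\vv}{t+1}_k = \BA\vc{\x}{t} + \gamma\BA\,\Delta\x = \BA(\vc{\x}{t}+\gamma\,\Delta\x)=\BA\vc{\x}{t+1}$, completing the induction.

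For the sandwich, the lower bound follows by applying Jensen's inequality to the convex loss $f$ and using the just-established identity:
\begin{equation*}
f(\BA\x) \;=\; f\!\left(\tfrac{1}{K}\sum_k \vv_k\right) \;\le\; \tfrac{1}{K}\sum_k f(\vv_k),
\end{equation*}
and adding $g(\x)$ on both sides yields $\OA(\x)\le \ooa(\x,\{\vv_k\})$. For the upper bound, I would invoke the smoothness of $f$ (with Lipschitz-gradient constant $1/\tau$, matching the subproblem definition in~\eqref{eq:subproblem_modified_1}) at the reference point $\BA\x$ for each $k$:
\begin{equation*}
f(\vv_k) \;\le\; f(\BA\x) + \langle \nabla f(\BA\x),\, \vv_k - \BA\x\rangle + \tfrac{1}{2\tau}\norm{\vv_k-\BA\x}^2.
\end{equation*}
Averaging over $k$ and using part one to observe that $\frac{1}{K}\sum_k(\vv_k-\BA\x)=\0$ kills the linear term, leaving $\frac{1}{K}\sum_k f(\vv_k) \le f(\BA\x) + \frac{1}{2\tau K}\sum_k\norm{\vv_k-\BA\x}^2$, and again adding $g(\x)$ gives the claim.

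The only subtle point is matching the smoothness constant $1/\tau$ to the $\tau$ appearing in the stated bound, which is consistent with the convention used in the quadratic subproblem~\eqref{eq:subproblem_modified_1}. Aside from this, the proof is routine once the invariant $\frac{1}{K}\sum_k\vv_k=\BA\x$ is in hand; the main conceptual step is recognizing that double stochasticity of $\mixingmat$ plus the carefully chosen scaling $\gamma K$ in the local update of $\vv_k$ are exactly what is needed to preserve this average through both phases of each round.
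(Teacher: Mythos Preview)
Your proposal is correct and follows essentially the same approach as the paper: an induction using double stochasticity of $\mixingmat$ (plus the $\gamma K$ scaling) to establish the averaging invariant, then Jensen's inequality for the lower bound and $1/\tau$-smoothness of $f$ expanded around $\BA\xv$ (with the linear term vanishing by the first part) for the upper bound. The only cosmetic difference is that you split the update into the gossip step and the local step explicitly, whereas the paper handles them in one line; the content is identical.
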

The dual problem and duality gap of the decentralized objective \eqref{eq:H} are given in \cref{definition:GH}.
\begin{restatable}[Decentralized Dual Function and Duality Gap]{lemma}{restaDecenDual}\label{definition:GH} 
The Lagrangian dual of the decentralized formation \eqref{eq:H} is	
\begin{equation}\label{decentralized_dual} \textstyle
\min_{ \{\w_k \}_{k=1}^K}  
\oob ({\{\w_k \}_{k=1}^K})
:= \frac{1}{K} \sum_{k=1}^{K}  f^*(\w_k) 
+ \sum_{i =1}^n g^*_i \!\left(-  \BA_i^\top (\frac{1}{K}\sum_{k=1}^{K}\w_k ) \right).
\tag{\color{link_color}{DB}}
\end{equation}
Given primal variables $ \{  \x, \{\v_k \}_{k=1}^K \}$ and dual variables $ \{\w_k\}_{k=1}^K$, the duality gap is:
\begin{equation}\label{eq:GH} \textstyle
	\colagap (\x, \{\v_k \}_{k=1}^K, \{\w_k \}_{k=1}^K )  :=
	\frac{1}{K} \sum_k (f(\v_k) \!+\! f^*(\w_k)) + 
	g(\x)
	\!+\! \sum_{i =1}^n g^*_i \!\left( - \frac{1}{K}\sum_k  \BA_i^\top \w_k  \right).
\end{equation}
\end{restatable}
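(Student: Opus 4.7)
The plan is to obtain \eqref{decentralized_dual} as the standard Lagrangian dual of \eqref{eq:H} and then read off the gap \eqref{eq:GH} by summing the primal and (negated) dual objectives, using the Fenchel--Young inequality to certify nonnegativity at consensus points.

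First I would introduce a multiplier $\bm\mu_k\in\R^d$ for each consensus constraint $\vv_k = \BA\xv$ and form the Lagrangian
\begin{equation*}
L(\xv,\{\vv_k\},\{\bm\mu_k\})
= \tfrac{1}{K}\sum_{k=1}^K f(\vv_k) + g(\xv) + \sum_{k=1}^K \bm\mu_k^\top(\BA\xv - \vv_k).
\end{equation*}
Because $L$ decouples across the blocks $\vv_k$ and $\xv$, I can minimize them independently. For each $\vv_k$ block I would use the identity $\inf_{\vv}\{\tfrac{1}{K}f(\vv) - \bm\mu_k^\top\vv\} = -\tfrac{1}{K} f^*(K\bm\mu_k)$, which follows from the definition of the convex conjugate and the positive scaling rule $(\alpha f)^*(\cdot) = \alpha f^*(\cdot/\alpha)$. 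For the $\xv$ block, using separability $g(\xv) = \sum_i g_i(x_i)$ and writing $\BA\xv = \sum_i \BA_i x_i$, the minimization factors coordinate-wise and yields $-\sum_i g_i^*\!\bigl(-\BA_i^\top \sum_k \bm\mu_k\bigr)$.

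Next I would perform the change of variables $\w_k := K\bm\mu_k$, so that $\sum_k \bm\mu_k = \tfrac{1}{K}\sum_k \w_k$. Substituting gives the dual as a maximization of
\begin{equation*}
-\tfrac{1}{K}\sum_{k=1}^K f^*(\w_k) - \sum_{i=1}^n g_i^*\!\Bigl(-\BA_i^\top \tfrac{1}{K}\sum_{k=1}^K \w_k\Bigr),
\end{equation*}
which rewritten as a minimization is exactly \eqref{decentralized_dual}. The duality gap \eqref{eq:GH} is then just $\ooa(\xv,\{\vv_k\}) + \oob(\{\w_k\})$; to see it is a valid certificate, I would apply Fenchel--Young: $f(\vv_k)+f^*(\w_k) \ge \w_k^\top\vv_k$ and $g_i(x_i)+g_i^*(-\BA_i^\top \bar\w) \ge -x_i\BA_i^\top \bar\w$ with $\bar\w := \tfrac{1}{K}\sum_k\w_k$, so that under consensus $\vv_k=\BA\xv$ (which \cref{lemma:DH} shows holds on average, and exactly at a feasible primal point), the cross terms cancel via $\tfrac{1}{K}\sum_k \w_k^\top \BA\xv = \bar\w^\top \BA\xv = \sum_i x_i \BA_i^\top \bar\w$, giving $\colagap\ge 0$.

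The only real subtlety, and the place I would be most careful, is the bookkeeping for the $1/K$ scaling: the primal weights $f(\vv_k)$ carry a $1/K$ factor, and this forces the rescaling $\w_k = K\bm\mu_k$ in order to land on the symmetric form in \eqref{decentralized_dual}, where both $f^*$ appears with $1/K$ weight and the coupling term $\BA_i^\top \tfrac{1}{K}\sum_k \w_k$ uses the arithmetic mean of the duals. Once that scaling is tracked consistently through the Fenchel transforms, there is no further obstacle; the rest is direct algebra.
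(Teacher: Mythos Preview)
Your proposal is correct and follows essentially the same route as the paper's proof: introduce a multiplier for each constraint $\vv_k=\BA\xv$, minimize the separable Lagrangian blockwise to obtain $-\tfrac{1}{K}f^*(K\bm\mu_k)$ and $-\sum_i g_i^*(-\BA_i^\top\sum_k\bm\mu_k)$, then rescale $\w_k:=K\bm\mu_k$ to land on \eqref{decentralized_dual} and read off the gap as $\ooa+\oob$. The only difference is that you add a Fenchel--Young verification of nonnegativity at feasible primal points, which the paper's proof does not include (the statement itself only asserts the form of the dual and the gap), but this addition is correct and harmless.
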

If the dual variables are fixed to the optimality condition $\w_k = \nabla f(\v_k)$, then the dual variables can be omitted in the argument list of duality gap, namely $\colagap (\x, \{\v_k \}_{k=1}^K)$.
Note that the decentralized duality gap generalizes the duality  gap of \cocoa: when consensus is ensured, i.e., $\v_k\equiv\BA\x$ and $\w_k\equiv\nabla f(\BA\x)$, the decentralized duality gap recovers that of \cocoa.

\subsection{Linear rate for strongly convex objectives}




We use the  following data-dependent quantities in our main theorems
\begin{align}\label{eq:sigma}\textstyle
\sigma_k:= \max_{\vsubset{\xv}{k}\in\R^n}
\norm{\BA_{[k]}\vsubset{\xv}{k}}^2 / \|\vsubset{\xv}{k}\|^2, ~
\sigma_\text{max} = \max_{k=1,\ldots, K}{\sigma_k}, ~
\sigma := \sum_{k=1}^{K} \sigma_k n_k.
\end{align}
If $\{g_i\}$ are  strongly convex, \cola achieves  the following linear rate of  convergence.
\begin{restatable}[Strongly Convex $g_i$]{theorem}{TheoremLinearRate}\label{theorem:linear_rate}
	Consider \cref{alg_dcocoa} with $\gamma:=1$ and let $\Theta$ be the quality of the local solver in \cref{assumption:theta}.
	Let $g_i$ be $\mu_g$-strongly convex for all $i\in[n]$ and let $f$ be $1/\tau$-smooth.
	Let $\bar{\sigma}':=(1 + \beta)\sigma'$, $\alpha:=(1+\frac{(1-\beta)^2}{36(1+\Theta)\beta})^{-1}$ and $\eta:=\gamma(1-\Theta)(1-\alpha)$
	\begin{align} \textstyle
	s_0=
	\frac{\tau\mu_g}{\tau\mu_g+\sigma_\text{max} \bar{\sigma}'}\in [0, 1].
	\end{align}
	Then after $T$ iterations of \cref{alg_dcocoa} with\footnote{$\vc{\varepsilon}{0}_\hh := \ooa(\vc{\x}{0},  \{\vc{\v_k}{0} \}_{k=1}^K) - \ooa(\x^\star,  \{{\v_k^\star} \}_{k=1}^K)$ is the initial suboptimality.}
	\begin{equation*}
	\textstyle
	T \ge  \frac{1 + \eta s_0}{\eta s_0}
	\log \frac{\vc{\varepsilon}{0}_\hh}{\varepsilon_\hh}, 
	\end{equation*}
	it holds that $\E\big[\ooa(\vc{\x}{T},  \{\vc{\v_k}{T} \}_{k=1}^K) - \ooa(\x^\star,  \{{\v_k^\star} \}_{k=1}^K)\big] \le \varepsilon_\hh$.
	Furthermore, after $T$ iterations with\vspace{-2mm}
	\begin{align*}\textstyle
	T \ge  \frac{1 + \eta s_0}{\eta s_0}
	\log \left(
	\frac{1}{\eta s_0}
	\frac{\vc{\varepsilon_\hh}{0}}{\varepsilon_{G_\hh}},
	\right)
	\end{align*}
	we have the expected duality gap $\E[G_\hh(\vc{\x}{T}, \{\vc{\sum_{k=1}^K\mixingmat_{kl}\v_l}{T} \}_{k=1}^K)] \le \varepsilon_{G_\hh}$.
\end{restatable}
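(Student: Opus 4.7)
The strategy is to mimic the per-step descent analysis used for \cocoa, but to track the \emph{consensus error} $\sum_k \|\v_k - \BA\x\|^2$ alongside the suboptimality, since in the decentralized setting the local $\v_k$ are only noisy surrogates of the global $\BA\x$. By \cref{lemma:DH} the decentralized objective $\ooa$ equals $\OA$ plus a scaled consensus error, so bounding both yields a bound on the true suboptimality. I would therefore introduce a composite Lyapunov quantity $\Phi^{(t)} = \varepsilon_{\hh}^{(t)} + c \cdot \tfrac{1}{K}\sum_k \|\v_k^{(t)} - \BA\x^{(t)}\|^2$ and aim for a contraction $\E[\Phi^{(t+1)}] \le (1-\eta s_0) \Phi^{(t)}$ with the stated $\eta,s_0$.

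The first building block is a per-coordinate progress bound coming from the $\Theta$-approximate local solver: by \cref{assumption:theta} applied to the subproblem $\newsub(\cdot;\v_k^{(t+1/2)},\x_{[k]}^{(t)})$, one gets a lower bound on $\newsub(\0;\cdot) - \E[\newsub(\Delta\x_{[k]};\cdot)]$. Summing over $k$ and interpreting $\newsub$ via the quadratic upper bound in \eqref{eq:subproblem_modified_1}, plug in a carefully chosen convex-combination point $\Delta\x_{[k]} = s(\x^\star_{[k]} - \x_{[k]}^{(t)})$ (with $s = s_0$) to use $\mu_g$-strong convexity of the $g_i$ and the spectral bound $\sigma_k \le \sigma_{\max}$. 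This gives descent in $\ooa$ of order $\eta s_0 \varepsilon_\hh^{(t)}$ \emph{modulo} an extra term $\|\vv_k^{(t+1/2)} - \BA\x^{(t)}\|^2$ arising because the subproblem is centred at the post-mixing $\vv_k^{(t+1/2)}$ rather than at $\BA\x^{(t)}$.

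The second building block controls this consensus-error overhead. The gossip step $\vv_k^{(t+1/2)} = \sum_l \mixingmat_{kl}\vv_l^{(t)}$ contracts $\sum_k \|\vv_k - \bar\vv\|^2$ by the factor $\beta^2$ thanks to the doubly stochastic $\mixingmat$ with second-largest-magnitude eigenvalue $\beta$; the subsequent additive update by $\gamma K \Delta\vv_k$ enlarges it by a term that is quadratic in $\|\BA_{[k]}\Delta\x_{[k]}\|^2$, which is exactly what the $\tfrac{\sigma'}{2\tau}\|\BA_{[k]}\Delta\x_{[k]}\|^2$ term of the subproblem (plus Young's inequality) lets us absorb. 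Choosing the mixing weight $\alpha = (1 + (1-\beta)^2/(36(1+\Theta)\beta))^{-1}$ optimally balances ``progress'' against ``consensus pollution'': this is precisely where $(1+\beta)\sigma'$ appears, replacing the standard \cocoa subproblem parameter $\sigma'$, and where the $(1-\alpha)$ factor enters $\eta$.

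Combining the two building blocks into the Lyapunov recursion $\Phi^{(t+1)} \le (1-\eta s_0)\Phi^{(t)}$, taking logarithms gives the first iteration-complexity claim. For the duality-gap statement, I would follow the classical \cocoa template: \cref{definition:GH} expresses $\colagap$ as a sum whose individual summands are each upper-bounded by the per-step progress (via $\nabla f$-smoothness and strong convexity of $g$), so summing the already-established descent inequality over an extra $\log(1/(\eta s_0))$ window telescopes into the desired $\varepsilon_{G_\hh}$ bound on the expected duality gap evaluated at the post-mixed iterate. The main obstacle I anticipate is a clean handling of step two—tracking how one mixing step plus one update step interacts with the subproblem residual—because the term $\vv_k^{(t+1/2)} - \BA\x^{(t)}$ is not a ``pure'' gossip residual but mixes in the previous subproblem update, so one needs to telescope the consensus error carefully through both substeps rather than naively bounding it by $\beta^2$ times the previous consensus error.
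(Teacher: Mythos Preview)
Your high-level plan---Lyapunov combining suboptimality with consensus error, gossip contraction by $\beta$, balancing via the split parameter $\alpha$---matches the paper's route. But two technical choices differ, and one of them leaves a real gap.

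\textbf{Comparison point.} You propose plugging the point $\Delta\x_{[k]}=s(\x^\star_{[k]}-\x_{[k]}^{(t)})$ into the subproblem lower bound. The paper instead uses the Fenchel-dual point $u_i\in\partial g_i^*(-\BA_i^\top\bar{\locgra}')$, where $\bar{\locgra}'=\tfrac1K\sum_k\nabla f(\v_k')$. With that choice the identity $g_i(u_i)+g_i^*(-\BA_i^\top\bar{\locgra}')=-u_i\BA_i^\top\bar{\locgra}'$ makes the decentralized duality gap $\colagap$ appear \emph{directly} in the per-step bound (the paper's Lemma~8 gives $\E[\ooa^{(t)}-\ooa^{(t+1)}+\tfrac{1+\beta}{2}\delta^{(t)}-\delta^{(t+1)}]\ge \eta s_0\,\colagap^{(t)}$). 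Both claims of the theorem then follow in one line: the suboptimality via $\colagap\ge\ooa^{(t+1)}-\ooa^\star$, and the duality gap via $\eta s_0\,\colagap^{(T)}\le\varepsilon_\hh^{(T)}$. Your choice of $\x^\star$ can be made to work for the suboptimality claim (the extra gradient mismatch $\langle\locgra_k'-\nabla f(\BA\x),\BA(\x^\star-\x)_{[k]}\rangle$ is handled by the same Young-inequality trick), but it does \emph{not} produce $\colagap$ on the right-hand side. Your proposed fix for the gap statement---``summing the descent inequality over an extra $\log(1/(\eta s_0))$ window''---does not convert a suboptimality bound into a duality-gap bound; you would in fact need to rerun the per-step argument with the dual comparison point. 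This is the missing idea.

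\textbf{Lyapunov add-on.} The paper does not put the raw consensus error $\sum_k\|\v_k-\BA\x\|^2$ into the Lyapunov. It introduces an auxiliary sequence $\delta^{(t+1)}:=\beta\delta^{(t)}+cc_1\sum_k\|\Delta\v_k^{(t)}\|^2$ (Lemma~7) that upper-bounds the consensus error and telescopes against the ``update-size credit'' $-\alpha\tfrac{\gamma\sigma_1'}{2\tau}\sum_k\|\BA\Delta\x_{[k]}\|^2$ extracted from the subproblem via strong convexity of $\newsub$ in $\BA\Delta\x_{[k]}$ (Lemma~4). The final recursion is on $\ooa^{(t)}-\ooa^\star+\delta^{(t)}$, and the specific value $\alpha=(1+\tfrac{(1-\beta)^2}{36(1+\Theta)\beta})^{-1}$ is exactly what makes the $\delta$-coefficient equal $\tfrac{1+\beta}{2}<1$. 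Your direct-consensus Lyapunov can likely be made to work, but the reason the paper goes through $\delta^{(t)}$ is precisely the obstacle you flag at the end: the post-update consensus error is not $\beta^2$ times the pre-update one, and $\delta^{(t)}$ is the clean way to carry the $\|\Delta\v_k\|^2$ contribution forward.
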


\subsection{Sublinear rate for general convex objectives}

Models such as sparse logistic regression, Lasso, group Lasso
are non-strongly convex. For such models, we show
that \cola enjoys  a  $\bigo{1/T}$ sublinear rate of convergence for all network topologies with a positive spectral gap. 

\begin{restatable}[Non-strongly Convex Case]{theorem}{TheoremSublinearRate}
	\label{theorem:sublinear_rate}
	Consider \cref{alg_dcocoa}, using a local solver of quality~$\Theta$. Let $g_i(\cdot)$ have $L$-bounded support, and let $f$ be $(1/\tau)$-smooth. 
	Let $\varepsilon_{G_\hh}>0$ be the desired duality gap. Then after $T$ iterations where
\begin{align*}
&\textstyle
T\ge
T_0 + \max \bigg\{ \left\lceil \frac{1}{\eta} \right\rceil,
\frac{4L^2\sigma\bar{\sigma}'}{\tau\varepsilon_{G_\hh}\eta}\bigg\} , \qquad
T_0 \ge
 t_0 + \bigg[\frac{2}{\eta} \left(\frac{8L^2\sigma\bar{\sigma}'}{\tau\varepsilon_{G_\hh}} -1 \right) \bigg]_+\\
&\textstyle
t_0\ge 
\max \left\{0, \left\lceil \frac{1 + \eta}{\eta } \log 
\frac{2\tau ( \ooa(\vc{\x}{0}, \{\vc{\v_l}{0}\}) -\ooa(\x^\star, \{\v^\star\}) ) }
{4L^2\sigma \bar{\sigma}'} \right\rceil\right\}
\end{align*}
and $\bar{\sigma}':=(1 + \beta)\sigma'$, $\alpha:=(1+\frac{(1-\beta)^2}{36(1+\Theta)\beta})^{-1}$ and $\eta:=\gamma(1-\Theta)(1-\alpha)$.
We have that the expected duality gap satisfies
\begin{align*}
\E \big[G_\hh(\bar{\x}, \{ \bar{\v}_k \}_{k=1}^K,\{\bar{\w}_k \}_{k=1}^K )\big]
\le \varepsilon_{G_\hh}
\end{align*}
at the averaged iterate $\bar{\x} := \frac{1}{T-T_0} \sum_{t=T_0+1}^{T-1} \vc{\x}{t}$,
and $\v_k':= \sum_{l=1}^K\mixingmat_{kl}\v_l$ and
$\bar{\v}_k := \frac{1}{T-T_0}\sum_{t=T_0+1}^{T-1} \vc{(\v_k')}{t}$
and $\bar{\w}_k := \frac{1}{T-T_0} \sum_{t=T_0+1}^{T-1} \nabla f(\vc{(\v_k')}{t})$.
\end{restatable}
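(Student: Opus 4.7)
The plan is to adapt the \cocoa-style proof of Smith et al.\ (2016) for the non-strongly convex case to the decentralized setting, using the decentralized objective $\ooa$ from \eqref{eq:H} as the Lyapunov function. A key observation from \cref{lemma:DH} is that $\ooa$ already folds in the consensus error $\tfrac{1}{2\tau K}\sum_k\|\v_k-\BA\x\|^2$, which means a single potential suffices rather than a hand-tuned combination.

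\textbf{Step 1 (per-round recursion).} The main technical lemma I would prove is a one-step inequality of the form
\[
\E\bigl[\varepsilon_\hh^{(t+1)}\bigr] \;\le\; \bigl(1-\eta s\bigr)\varepsilon_\hh^{(t)} \;+\; \eta s^2 \cdot \frac{\bar\sigma'}{2\tau K}\sum_k \|\u_k^{(t)}\|^2,
\]
valid for every $s\in[0,1]$, where $\varepsilon_\hh^{(t)}:=\ooa(\vc{\x}{t},\{\vc{\v_k}{t}\})-\ooa(\x^\star,\{\v_k^\star\})$ and $\u_k^{(t)}$ is an auxiliary direction interpolating the current and optimal local weights. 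This combines three ingredients: (i)~the $\Theta$-approximate guarantee from \cref{assumption:theta} applied to the block-separable subproblem~\eqref{eq_new_subproblem}, (ii)~the contraction of the consensus error under the gossip step $\vc{\v}{t+1/2}_k=\sum_l\mixingmat_{kl}\vc{\v}{t}_l$, quantified by $\alpha=(1+(1-\beta)^2/(36(1+\Theta)\beta))^{-1}$ via the spectral gap $1-\beta$, and (iii)~convexity of $f$ combined with the quadratic upper bound in the definition of~$\newsub$.

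\textbf{Step 2 (bounded support).} Since each $g_i$ has $L$-bounded support, the relevant dual iterate $\w_k$ remains in a compact set, and $\|\u_k^{(t)}\|^2\le 4L^2 n_k$ uniformly in $t$. Plugging this into Step~1 and using the definition of $\sigma$ in \eqref{eq:sigma} yields
\[
\E\bigl[\varepsilon_\hh^{(t+1)}\bigr] \;\le\; (1-\eta s)\,\varepsilon_\hh^{(t)} \;+\; \eta s^2\,\frac{2L^2\sigma\bar\sigma'}{\tau K}.
\]

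\textbf{Step 3 (two-phase decay).} Following the template of Ma et al.\ (2015)/Smith et al.\ (2016), I would split the horizon into two phases. In phase~1 (up to $t_0$), I set $s=1$ and iterate the recursion to obtain geometric decrease until $\E[\varepsilon_\hh^{(t_0)}]\le 4L^2\sigma\bar\sigma'/\tau$; the estimate for $t_0$ in the statement is precisely the number of contractions with factor $(1-\eta)$ needed for this. In phase~2, I set $s=2/(t-T_0+1)$ and prove by induction that $\E[\varepsilon_\hh^{(t)}]\le 4L^2\sigma\bar\sigma'/(\tau\eta(t-T_0))$, which requires $T-T_0\ge 4L^2\sigma\bar\sigma'/(\tau\varepsilon_{G_\hh}\eta)$ to drive the suboptimality below half of the target gap.

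\textbf{Step 4 (duality gap at the averaged iterate).} Re-running Step~1 with a slightly tighter analysis exposes the duality gap $G_\hh$ on the right-hand side, giving $\eta s\cdot\E[G_\hh(\vc{\x}{t},\{\vc{\v_k'}{t}\},\{\nabla f(\vc{\v_k'}{t})\})]\le \varepsilon_\hh^{(t)}-\E[\varepsilon_\hh^{(t+1)}]+\eta s^2 R/2$ with $R=4L^2\sigma\bar\sigma'/\tau$. Summing over $t=T_0+1,\ldots,T-1$ telescopes the suboptimality differences, and the convexity of $f$, $f^*$, $g$, $g_i^*$ in the definition \eqref{eq:GH} of $G_\hh$ lets me apply Jensen's inequality at the averages $\bar\x$, $\bar\v_k$, $\bar\w_k$, yielding the final $\E[G_\hh(\bar\x,\{\bar\v_k\},\{\bar\w_k\})]\le \varepsilon_{G_\hh}$.

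The main obstacle is Step~1. Unlike centralized \cocoa where only the local subproblem step matters, here each iteration couples the gossip averaging (which shrinks disagreement $\sum_k\|\v_k-\BA\x\|^2$ by $\beta^2$ in the ideal case) with the subproblem update (which perturbs the consensus by $\gamma K\Delta\v_k$). Getting the right potential decrease requires decomposing $\vc{\v}{t+1/2}_k-\BA\vc{\x}{t}$ into a ``gossip-contracted'' part and an ``update-injected'' part, applying Young's inequality with weight tuned to $\beta$, and balancing against the subproblem descent. The specific constant $(1-\beta)^2/(36(1+\Theta)\beta)$ in $\alpha$ is precisely what falls out of this balancing, and the same lemma underpins the linear rate in \cref{theorem:linear_rate}, so only the two-phase argument in Steps~3--4 is genuinely new relative to the strongly convex case.
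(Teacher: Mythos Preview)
Your Steps 2--4 match the paper's strategy closely (bound $R^{(t)}\le 4L^2\sigma$ via bounded support, a two-phase schedule with $s=1$ followed by $s\propto 1/(t-t_0)$, then Jensen on the averaged iterates). But Step~1 contains a real gap.

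You assert that ``a single potential suffices rather than a hand-tuned combination,'' using $\varepsilon_\hh^{(t)}=\ooa^{(t)}-\ooa^\star$ alone. This does not go through. The per-step descent lemma (the paper's \cref{lemma:sum_of_subproblems_reductionNewGH_CV}) unavoidably produces, on the right-hand side, a term $-\tfrac{9\beta\eta}{2\tau\sigma'}\sum_k\|\vc{\v_k}{t}-\BA\vc{\x}{t}\|^2$: consensus violation \emph{reduces} the guaranteed descent rather than being absorbed by the drop in $\ooa$. \Cref{lemma:DH} only gives $\ooa-\OA\le\tfrac{1}{2\tau K}\sum_k\|\v_k-\BA\x\|^2$, an \emph{upper} bound; without strong convexity of $f$ there is no matching lower bound $\ooa-\OA\ge c\sum_k\|\v_k-\BA\x\|^2$, so you cannot trade the bad consensus term against decrease in $\ooa$.

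What the paper actually does is split off a fraction $\alpha$ of the subproblem progress to manufacture a positive term $\alpha\tfrac{\gamma\sigma_1'}{2\tau}\sum_k\|\BA\Delta\vsubset{\x}{k}\|^2$ (via \cref{lemma:convexity:subproblem}), and then introduces an explicit auxiliary sequence $\vc{\delta}{t}$ satisfying $\vc{\delta}{t+1}=\beta\vc{\delta}{t}+cc_1\sum_k\|\Delta\vc{\v_k}{t}\|^2$ (\cref{lemma:relate_cv_with_updates}). This $\vc{\delta}{t}$ upper-bounds the consensus violation and is recharged each round by the saved descent, yielding a clean recursion in the \emph{combined} potential $\ooa^{(t)}-\ooa^\star+\vc{\delta}{t}$ (\cref{lemma:sum_of_subproblems_reductionNewGH_timeinvar_prime}). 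The constant $(1-\beta)^2/(36(1+\Theta)\beta)$ in $\alpha$ is precisely the choice that makes the recharge offset the consensus cost---but this is balancing \emph{between two tracked quantities}, not within $\ooa$ alone. Your Step~1 recursion, as stated with only $\varepsilon_\hh^{(t)}$ on both sides, cannot be established without this auxiliary term.
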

Note that the assumption of bounded support for the $g_i$ functions is not restrictive in the general convex case, as discussed e.g. in \citep{Dunner:2016vga}.

\subsection{Local certificates for global accuracy}

Accuracy certificates for the training error are very useful for practitioners to diagnose the learning progress. In the centralized setting, the duality gap serves as such a certificate, and is available as a stopping criterion on the master node. In the decentralized setting of our interest, this is more challenging as consensus is not guaranteed. Nevertheless, we show in the following \cref{prop:localcertificate} that certificates for the decentralized objective \eqref{eq:H} can be computed from local quantities:
\begin{restatable}[Local Certificates]{proposition}{localcertificate}\label{prop:localcertificate}
	Assume $g_i$ has $L$-bounded support, 
	and let $\mathcal{N}_k:= \{j: \mixingmat_{jk} > 0\}$ be the set of nodes accessible to node $k$.
	Then for any given $\varepsilon > 0$, we have
	\[
	G_\mathcal{H} (\x; \{\v_k\}_{k=1}^K) \le \varepsilon ,
	\]
	if for all $k=1, \ldots, K$ the following two local conditions are satisfied:
	\begin{align}
			\v_k^\top \nabla f(\v_k) + 
		\sum_{i \in\Pk} \left( g_i(\x_i) + g_i^*(-\BA_i^\top\nabla f(\v_k)) \right) \ \le\ & \frac{\varepsilon}{2K} \label{eq:prop:14}
		\\
		\textstyle	\norm{\nabla f(\v_k) - \frac{1}{|\mathcal{N}_k|} \sum_{j\in \mathcal{N}_k}  \nabla f(\v_j)}_2
		\ \le\ & \textstyle
		 \left(\sum_{k=1}^{K} n_k^2 \sigma_{k} \right)^{-1/2} \frac{1-\beta}{2L\sqrt{K}} \varepsilon, \label{eq:prop:15}
	\end{align}
\end{restatable}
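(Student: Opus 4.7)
The plan is to set the dual variables to their canonical choice $\w_k := \nabla f(\v_k)$ and bound the decentralized gap $G_\mathcal{H}(\x,\{\v_k\},\{\w_k\})$ from Lemma~\ref{definition:GH} by splitting it into (i) a sum of per-node primal--dual ``local gaps'' controlled by \eqref{eq:prop:14}, and (ii) a ``consensus error'' term controlled by \eqref{eq:prop:15} combined with the spectral gap of $\mixingmat$. My first step is to simplify the gap using the Fenchel--Young equality $f(\v_k) + f^*(\nabla f(\v_k)) = \v_k^\top \nabla f(\v_k)$, which gives
\[
G_\mathcal{H} = \tfrac{1}{K}\sum_{k} \v_k^\top \nabla f(\v_k) + \sum_i g_i(x_i) + \sum_i g_i^*\!\bigl(-\BA_i^\top \bar{\w}\bigr), \qquad \bar{\w} := \tfrac{1}{K}\sum_k \nabla f(\v_k).
\]

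Next, I would localize the troublesome global average $\bar{\w}$ inside $g_i^*$. The $L$-bounded support assumption on $g_i$ makes $g_i^*$ an $L$-Lipschitz function, so for every $i\in\Pk$,
\[
g_i^*\bigl(-\BA_i^\top \bar{\w}\bigr) \le g_i^*\bigl(-\BA_i^\top \nabla f(\v_k)\bigr) + L\,|\BA_i^\top(\bar{\w} - \nabla f(\v_k))|.
\]
Summing over $i\in\Pk$ and then over $k$, and grouping with the first two terms of $G_\mathcal{H}$, each node contributes exactly the left-hand side of~\eqref{eq:prop:14}, so the first piece is bounded by $K\cdot\tfrac{\varepsilon}{2K} = \tfrac{\varepsilon}{2}$. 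What remains is the consensus error $L\sum_k\sum_{i\in\Pk}|\BA_i^\top(\bar{\w}-\nabla f(\v_k))|$, which I handle by twice applying Cauchy--Schwarz together with the definition \eqref{eq:sigma} of $\sigma_k$: the inner sum is at most $n_k\sqrt{\sigma_k}\,\|\bar{\w}-\nabla f(\v_k)\|$, and the outer sum is at most $\sqrt{\sum_k n_k^2 \sigma_k}\cdot \sqrt{\sum_k \|\bar{\w}-\nabla f(\v_k)\|^2}$.

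The third, and hardest, step is to convert \eqref{eq:prop:15}---a \emph{local} neighborhood-deviation bound---into a \emph{global} consensus-distance bound $\sqrt{\sum_k \|\bar{\w}-\nabla f(\v_k)\|^2}$. Stacking the vectors $\nabla f(\v_k)$ as rows of a matrix $Y \in \R^{K\times d}$ and letting $\bar Y$ be its row-wise average, I use the fact that the second-largest eigenvalue modulus of the doubly stochastic $\mixingmat$ equals $\beta$, so
\[
\|\mixingmat Y - \bar Y\|_F \le \beta\,\|Y - \bar Y\|_F, \qquad \text{hence} \qquad \|Y - \bar Y\|_F \le \tfrac{1}{1-\beta}\,\|Y - \mixingmat Y\|_F
\]
by the triangle inequality. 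Condition \eqref{eq:prop:15} upper bounds $\|y_k - (\mixingmat Y)_k\|$ at each node (identifying the $\tfrac{1}{|\mathcal{N}_k|}$-weighted neighborhood average with the gossip step $\mixingmat y$ acting on the support neighbors), so $\|Y-\mixingmat Y\|_F \le \sqrt{K}\cdot\bigl(\sum_k n_k^2\sigma_k\bigr)^{-1/2}\tfrac{1-\beta}{2L\sqrt{K}}\varepsilon$. Plugging back, the consensus contribution collapses to $L\cdot\sqrt{\sum_k n_k^2\sigma_k}\cdot \tfrac{1}{1-\beta}\cdot \tfrac{(1-\beta)\varepsilon}{2L\sqrt{\sum_k n_k^2\sigma_k}} = \tfrac{\varepsilon}{2}$, and adding to the first piece yields $G_\mathcal{H} \le \varepsilon$.

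\textbf{Main obstacle.} The step that really needs the geometry of the network is the spectral-gap argument in the third paragraph: one must carefully reconcile the uniform-weight neighborhood average $\tfrac{1}{|\mathcal{N}_k|}\sum_{j\in\mathcal{N}_k}\nabla f(\v_j)$ used in \eqref{eq:prop:15} with the $\mixingmat$-weighted average that enjoys the $\beta$-contraction, and then book-keep the $\sqrt{K}$ and $\bigl(\sum_k n_k^2\sigma_k\bigr)^{1/2}$ factors so that the constants on the right-hand side of \eqref{eq:prop:15} match exactly. The rest is a routine chain of Fenchel--Young, Lipschitzness of $g_i^*$, and two applications of Cauchy--Schwarz.
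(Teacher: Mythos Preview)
Your proposal is correct and follows essentially the same route as the paper: Fenchel--Young to simplify the gap, split into per-node local gaps (bounded by \eqref{eq:prop:14}) plus a consensus residual, then $L$-Lipschitzness of $g_i^*$, the column bound $\sum_{i\in\Pk}\|\BA_i\|\le n_k\sqrt{\sigma_k}$, Cauchy--Schwarz, and finally the spectral-gap inequality $\|Y-\bar Y\|_F\le\frac{1}{1-\beta}\|Y-\mixingmat Y\|_F$ to convert the neighborhood condition \eqref{eq:prop:15} into a global consensus bound. The only cosmetic difference is that the paper obtains the $\frac{1}{(1-\beta)^2}$ factor via Young's inequality with parameter $\delta=(1-\beta)/\beta$ on squared norms, whereas your triangle-inequality argument on the unsquared Frobenius norm reaches the same constant more directly; the identification of the $\frac{1}{|\mathcal N_k|}$-average with the $\mixingmat$-weighted average that you flag as the main obstacle is indeed made implicitly in the paper as well.
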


The local conditions \eqref{eq:prop:14} and \eqref{eq:prop:15} have a clear interpretation. 
	The first one ensures the duality gap of the local subproblem given by $\v_k$ as on the left hand side of \eqref{eq:prop:14} is small.
	The second condition \eqref{eq:prop:15} guarantees that consensus violation is bounded, by ensuring that the gradient of each node is similar to its neighborhood nodes. 
	
\begin{remark} The resulting certificate from \cref{prop:localcertificate} is local, in the sense that no global vector aggregations are needed to compute it.
For a certificate on the global objective, the boolean flag of each local condition \eqref{eq:prop:14} and \eqref{eq:prop:15} being satisfied or not needs to be shared with all nodes, but this requires extremely little communication.
Exact values of the  parameters $\beta$ and $\sum_{k=1}^{K} n_k^2 \sigma_{k}$  are not required to be known, and any valid upper bound can be used instead.
We can use the local certificates to avoid unnecessary work on local problems which are already optimized, as well as to continuously quantify how newly arriving local data has to be re-optimized in the case of online training.  The local certificates can also be used to quantify the contribution of newly joining or departing nodes, which is particularly useful in the elastic scenario described above. 
\end{remark}


\begin{figure}[b]
	
	\begin{subfigure}{.49\linewidth}
		\includegraphics[width=1.0\linewidth, keepaspectratio]{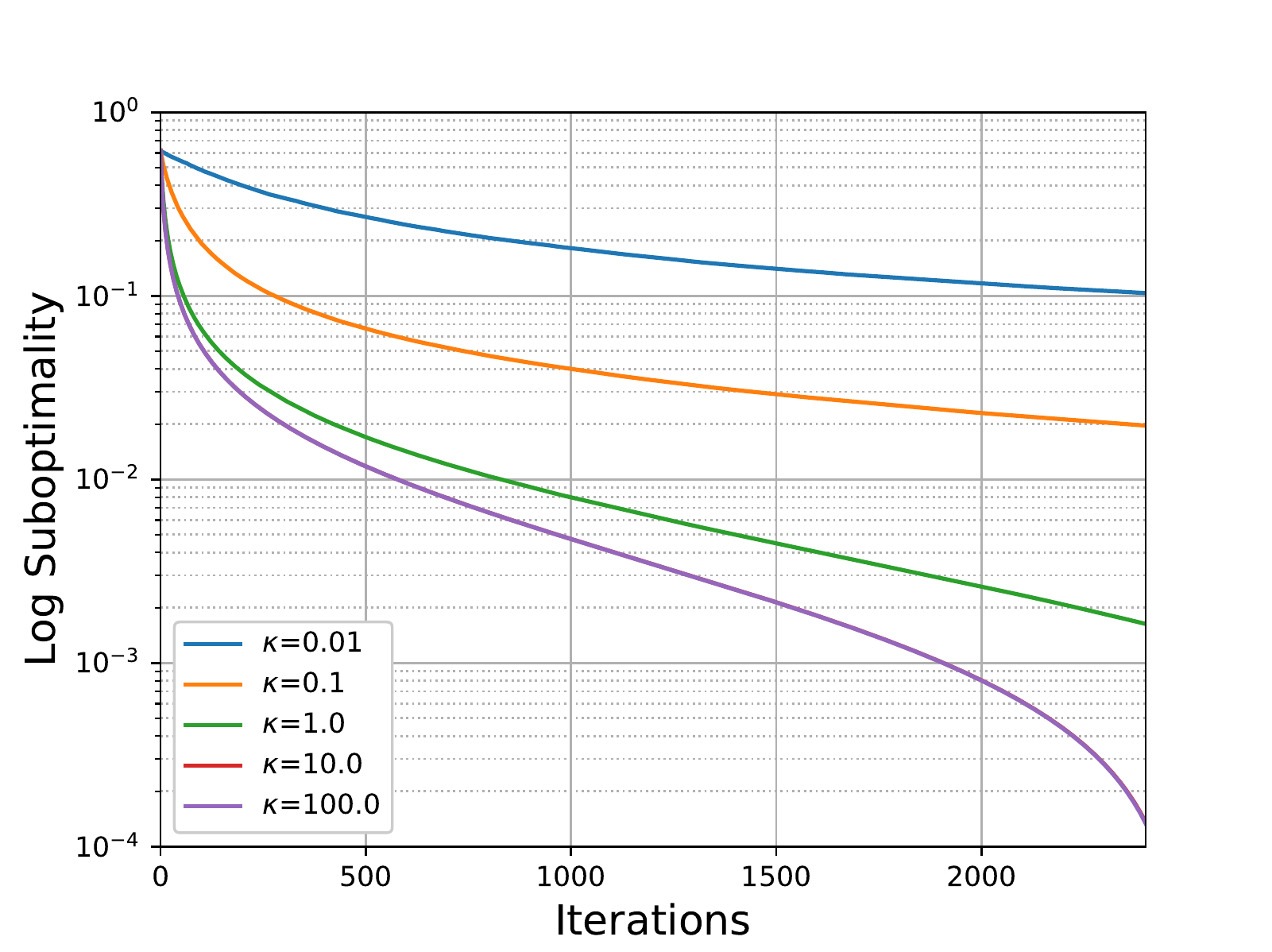}  
	\end{subfigure}
	\begin{subfigure}{.49\linewidth}
		\includegraphics[width=1.0\linewidth, keepaspectratio]{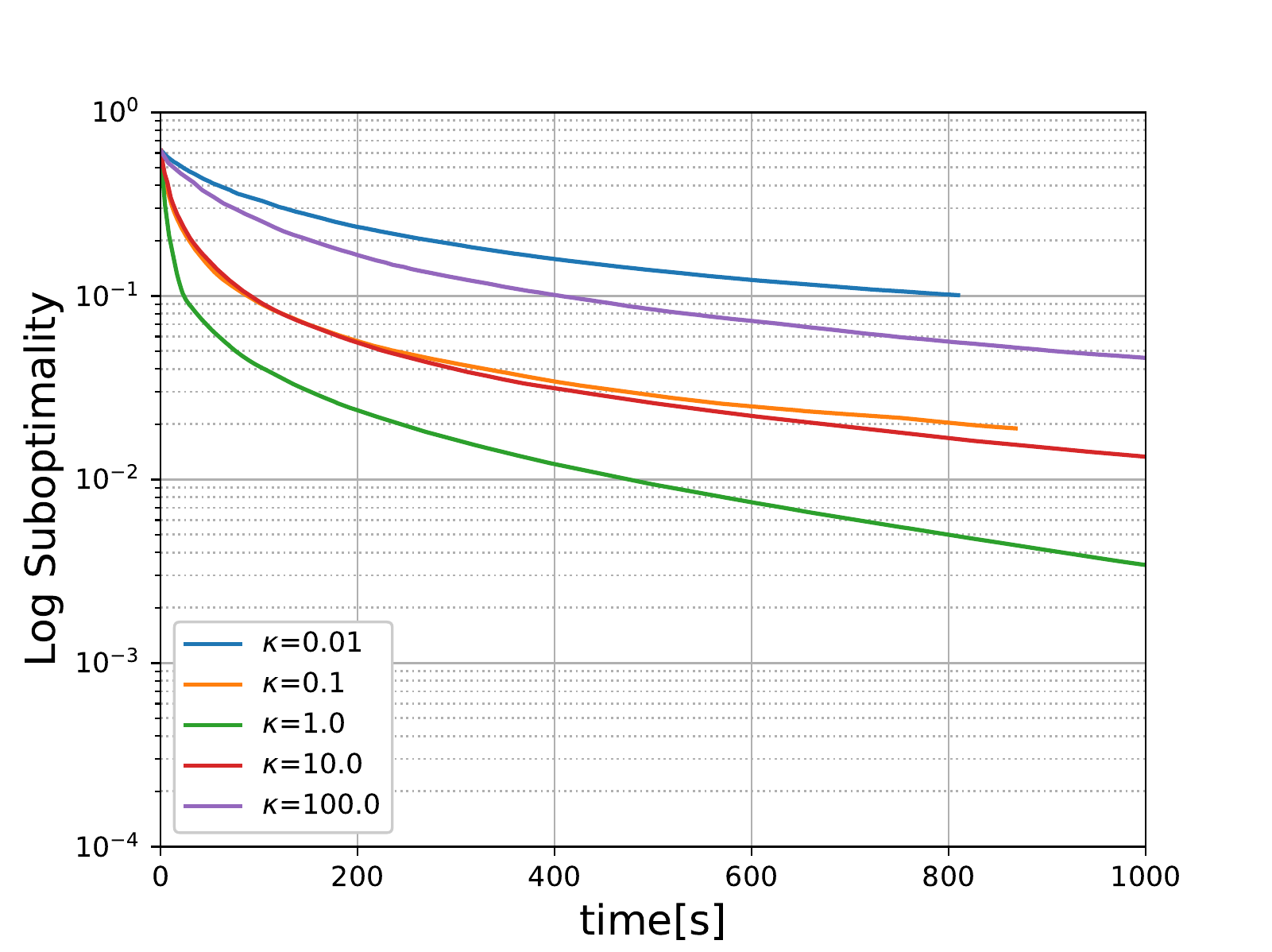}  
	\end{subfigure}
	\caption{
		Suboptimality for solving Lasso ($\lambda$=$10^{-6}$) for the RCV1 dataset 
		on a ring of 16 nodes.
		We illustrate the performance of \cola: a) number of iterations;
		b) time.  $\kappa$ here denotes the number of local data passes per communication round.
	} 
	\label{fig:varying_local_iters_plot}
\end{figure}

\section{Experimental results}

Here we illustrate the advantages of \cola in three respects: firstly we investigate the application in different network topologies and with varying subproblem quality~$\Theta$; 
secondly, we  compare \cola  with state-of-the-art decentralized baselines:
\textcircled{1}, {\diging} 
\citep{nedic2017achieving}, which generalizes the gradient-tracking technique of the  EXTRA algorithm \citep{shi2015extra}
, and 
\textcircled{2}, Decentralized ADMM (aka. consensus ADMM), which
extends the classical 
ADMM (Alternating Direction Method of Multipliers) method~\citep{boyd2011distributed} to the decentralized setting \citep{Shi:2014js,Wei:2013wy}; Finally, we show that \cola works 
in the challenging unreliable network environment where each node has a certain chance to drop out of the network.

We implement  all algorithms in PyTorch with MPI backend. The decentralized network topology is simulated by running one thread per graph node, on a $2$$\times$$12$ core Intel Xeon CPU E5-2680 v3 server with 256 GB RAM.
\cref{tab:datasets} describes the datasets\footnote{\url{https://www.csie.ntu.edu.tw/~cjlin/libsvmtools/datasets/}} used in the experiments. For Lasso, the columns of $\BA$ are features. For ridge regression, the columns are features and samples for \cola  primal and \cola dual, respectively. The order of columns is shuffled once before being distributed across the nodes.
Due to space limit, details on the experimental configurations are included in  \cref{sec:experiment_settings}.

\begin{wraptable}{r}{.55\linewidth}
	\caption{Datasets Used  for Empirical Study}
    \begin{tabular}{l|r|r|r} 
        \textbf{Dataset} & \!\!\textbf{\#Training} & \!\!\textbf{\#Features} & \!\!\textbf{Sparsity} \\\hline
        URL & 2M & 3M & 3.5e-5\\
        Webspam & 350K & 16M & 2.0e-4\\
		Epsilon & 400K& 2K& 1.0 \\
		RCV1 Binary & 677K & 47K & 1.6e-3
    \end{tabular}
    \vspace{2mm}
    \label{tab:datasets}
\end{wraptable}

\begin{figure}
\begin{subfigure}{.49\textwidth}
	\includegraphics[width=1.0\textwidth, keepaspectratio]{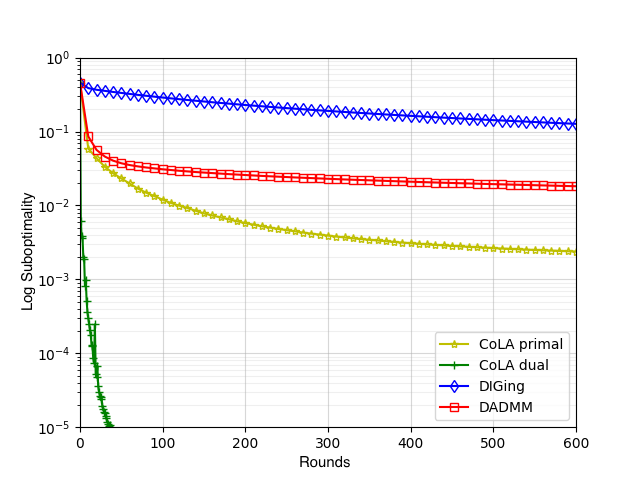}
	\label{fig:convergence:a}
\end{subfigure}
\begin{subfigure}{.49\textwidth}
\includegraphics[width=1.0\textwidth, keepaspectratio]{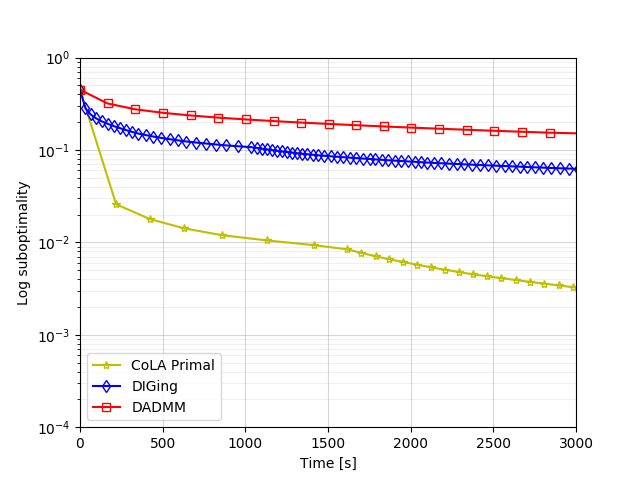}  
\label{fig:convergence:b}
\end{subfigure}
\vspace{0.cm}
\caption{Convergence of \cola for solving problems on a ring of $K$=$16$ nodes.
	Left) Ridge regression on  URL reputation dataset ($\lambda$=$10^{-4}$); Right) Lasso on  webspam dataset ($\lambda$=$10^{-5}$).
}
\label{fig:convergence}
\end{figure}

\textbf{Effect of approximation quality~$\Theta$. }
We study the convergence behavior in terms of the approximation quality $\Theta$.
Here, $\Theta$ is controlled by the number of data passes~$\kappa$ on subproblem~\eqref{eq_new_subproblem} per node.
\cref{fig:varying_local_iters_plot} shows that increasing $\kappa$ always results in less number of iterations (less communication rounds) for \cola. However, given a fixed 
network bandwidth, it leads to a clear trade-off
for the overall wall-clock time, showing the cost of both communication and computation. 
Larger~$\kappa$ leads to less communication rounds,
however, it also takes more time to solve subproblems. 
 The observations suggest  that one can adjust $\Theta$ for each node to handle system heterogeneity, as what we have discussed at the end of \cref{{sec:cola}}.

\textbf{Effect of graph topology. } 
Fixing $K$=$16$, we test the performance of \cola on 5 different topologies:  ring, 2-connected cycle, 3-connected cycle, 2D grid and complete graph.
The mixing matrix~$\mixingmat$ is given by Metropolis weights for all test cases (details in  \cref{sec:graphTopology}).
Convergence curves are plotted in \cref{fig:different_topologies}.
One can observe that for all  topologies, \cola converges monotonically and especailly when all nodes in the network are equal, smaller $\beta$ leads to a faster convergence rate. This is consistent with the intuition that $1-\beta$ measures the 
connectivity level of the topology.

\begin{figure}
   \begin{minipage}{0.49\textwidth}
   \centering
   \includegraphics[width=\textwidth, keepaspectratio]{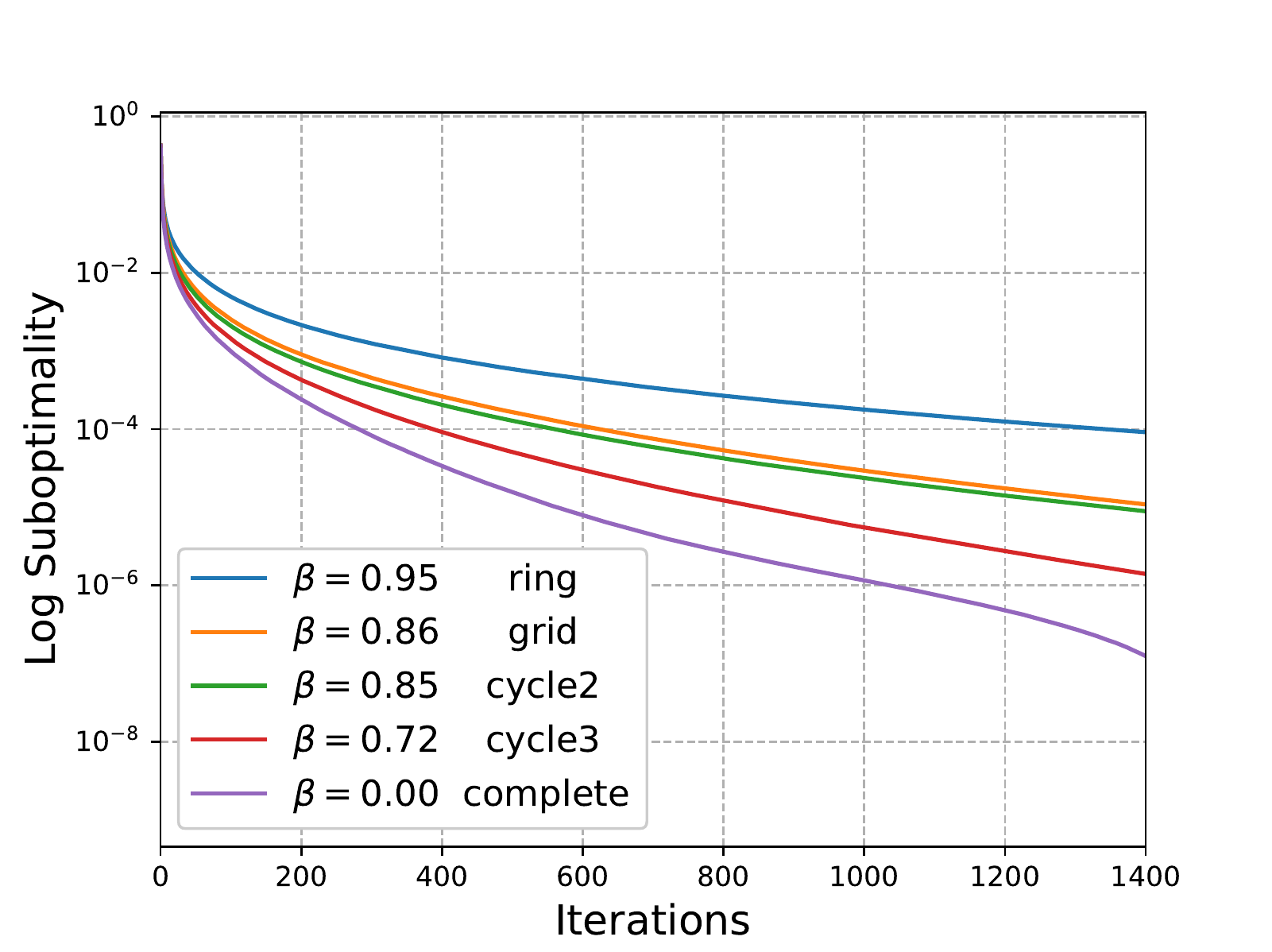}
   \caption{
   	Performance comparison of \cola on different topologies.
   	Solving Lasso regression ($\lambda$=$10^{-6}$) for RCV1 dataset with $16$ nodes.}
   \label{fig:different_topologies}
   \end{minipage}
   ~~
   \begin{minipage}{0.49\textwidth}
   \centering
   \includegraphics[width=\textwidth, keepaspectratio]{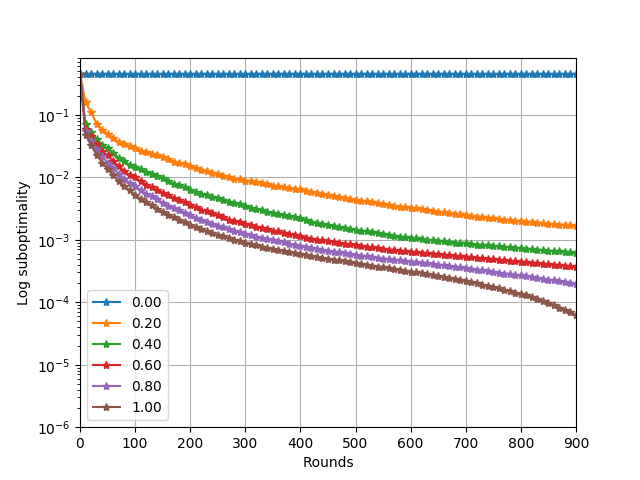}
   \caption{
   	Performance of \cola when nodes have $p$ chance of staying in the network on the URL dataset ($\lambda$=$10^{-4}$). Freezing $\vsubset{\x}{k}$ when node~$k$ leaves the network.
   }
   \label{fig:time_varying_graph}
   \end{minipage}
\end{figure}

\textbf{Superior performance compared to baselines. }
We compare \cola with \diging and D-ADMM for strongly and general convex problems.
For general convex objectives, we use Lasso regression with  $\lambda=10^{-4}$ on the webspam dataset;
for the strongly convex objective, we use Ridge regression with  $\lambda=10^{-5}$ on the  URL reputation dataset.
For Ridge regression, we can map \cola  to both primal and dual problems. 
 \cref{fig:convergence} traces the results on log-suboptimality. 
One can observe that for both generally and strongly convex objectives, \cola significantly outperforms \diging and decentralized ADMM in terms of number of communication rounds and computation time.
While \diging and D-ADMM need parameter tuning to ensure convergence and efficiency,
\cola is much easier to deploy as it is parameter free.
Additionally, convergence guarantees of ADMM relies on exact subproblem solvers, whereas inexact solver is allowed for \cola.

%
%

\textbf{Fault tolerance to unreliable nodes.}  
Assume each node of a network only has a chance of $p$ to participate in each round.
If a new node $k$ joins the network, then local variables are initialized as $\vsubset{\x}{k}=0$; if node~ $k$ leaves the network, then $\vsubset{\x}{k}$ will be frozen with $\Theta_k=1$.
All remaining nodes dynamically adjust their weights to maintain the doubly stochastic property of $\mixingmat$.
We run \cola on such unreliable networks of different $p$s and show the results in \cref{fig:time_varying_graph}.
First, one can observe that 
for all $p>0$ the suboptimality decreases monotonically
as \cola progresses.
It is also clear from the result that a smaller dropout rate (a larger $p$) leads to a faster convergence of \cola.


\section{Discussion and conclusions}

In this work we have studied training generalized linear models in the fully  decentralized setting.
%
We proposed a communication-efficient decentralized framework, termed 
\cola, which  is free of  parameter tuning. We proved that it has a sublinear rate of convergence for general convex problems, allowing e.g. L1 regularizers, and has a linear rate of convergence for strongly convex objectives. Our scheme offers primal-dual certificates which are useful in the decentralized setting. We demonstrated that \cola offers full adaptivity to heterogenous distributed systems on arbitrary network topologies, and is adaptive to changes in network size and data, and offers fault tolerance and elasticity. 
Future research directions include improving subproblems, as well as extension to the network topology with directed graphs, as well as recent communication compression schemes~\citep{stich2018sparsified}.
\\

\textbf{Acknowledgments.}
We thank Prof. Bharat K. Bhargava for fruitful discussions.
We acknowledge funding from SNSF grant 200021\_175796,  Microsoft Research JRC project `Coltrain', as well as a Google Focused Research Award.


{
\small
\bibliographystyle{unsrtnat} 
\bibliography{./bib}
}


\clearpage
\appendix
\appendixtitle{Appendix}

\section{Definitions}
\label{appe_def}
\begin{definition}[$L$-Lipschitz continuity]
	A function $h: \R^n \rightarrow \R$ is $L$-Lipschitz continuous
	if $\forall \;\u, \v\in \R^n$, it holds that 
		$$ |h(\u) - h(\v)| \leq L\| \u-\v \| .$$
\end{definition}
\begin{definition}[$1/\tau$-Smoothness]
		A differentiable function $f: \R^n \rightarrow \R$ is $1/\tau$-smooth if its gradient is $1/\tau$-Lipschitz continuous, or equivalently, $\forall~\u,\v$ it holds 
		\begin{equation}\textstyle
		f(\u) \leq f(\v) + \dtp{\nabla f(\v)}{\u-\v} + \frac{1}{2\tau} \|\u -\v\|^2.
		\end{equation}
\end{definition}
\begin{definition}[$L$-Bounded support]
	The function $g: \R^n \rightarrow \R \cup \{+\infty \}$ has $L$-bounded
	support if it holds $g(\u) < +\infty \Rightarrow \|\u\| \leq L$.
\end{definition}

\begin{definition}[$\mu$-Strong convexity] A function $h: \R^n \rightarrow \R$ is $\mu$-strongly convex for $\mu\geq 0$ if $\forall~\u, \v$, it holds $h(\u) \geq h(\v) + \dtp{\s}{\u - \v} + \frac{\mu}{2} \|\u - \v\|^2 $, for any $\s\in \partial h(\v)$, where   $\partial h(\v)$ is the subdifferential of $h$ at $\v$.
\end{definition}

\begin{lemma}[Duality between Lipschitzness and L-Bounded Support]  \label{lemma:duality}
	A generalization of \cite[Corollary 13.3.3]{rockafellar2015convex}. Given a proper convex function $g$ it holds that $g$ has $L$-bounded support w.r.t. the norm $\norm{.}$ if and only if $g^*$ is $L$-Lipschitz w.r.t. the dual norm $\norm{.}_*$.
\end{lemma}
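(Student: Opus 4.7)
The plan is to prove both directions of the equivalence directly from the definitions of the Fenchel conjugate, bounded support, and the standard duality between a norm and its dual norm (i.e., $\langle w,u\rangle \le \|w\|_* \|u\|$, with equality attainable by an appropriate choice of $w$ given $u$, or vice versa).

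For the forward direction, suppose $g$ has $L$-bounded support. The key observation is that $g(u)=+\infty$ whenever $\|u\|>L$, so the supremum defining the conjugate effectively restricts to the ball of radius $L$:
\begin{equation*}
g^*(w) \;=\; \sup_{u\in\R^n} \{\langle w,u\rangle - g(u)\} \;=\; \sup_{\|u\|\le L} \{\langle w,u\rangle - g(u)\}.
\end{equation*}
Given $w_1,w_2$ and any $\varepsilon>0$, I would pick a near-maximizer $u_\varepsilon$ with $\|u_\varepsilon\|\le L$ and $\langle w_1,u_\varepsilon\rangle - g(u_\varepsilon) \ge g^*(w_1)-\varepsilon$. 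Comparing with the lower bound $g^*(w_2) \ge \langle w_2,u_\varepsilon\rangle - g(u_\varepsilon)$ and applying the generalized Cauchy–Schwarz inequality $\langle w_1-w_2,u_\varepsilon\rangle \le \|w_1-w_2\|_* \|u_\varepsilon\| \le L\|w_1-w_2\|_*$ yields $g^*(w_1)-g^*(w_2)\le \varepsilon + L\|w_1-w_2\|_*$. Sending $\varepsilon\to 0$ and swapping $w_1,w_2$ gives the $L$-Lipschitz bound.

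For the reverse direction, I would argue by contrapositive: if $g$ does not have $L$-bounded support, then there exists $u_0$ with $g(u_0)<+\infty$ and $\|u_0\|>L$. By the defining property of the dual norm, I can choose $w_0$ with $\|w_0\|_*=1$ and $\langle w_0,u_0\rangle = \|u_0\|$. Then for every $t>0$,
\begin{equation*}
g^*(tw_0) \;\ge\; \langle tw_0,u_0\rangle - g(u_0) \;=\; t\|u_0\| - g(u_0).
\end{equation*}
If $g^*$ were $L$-Lipschitz, it would be finite everywhere (in particular at $0$), and we would have $g^*(tw_0) - g^*(0) \le L\|tw_0\|_* = Lt$. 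Combining these gives $t(\|u_0\|-L) \le g(u_0)+g^*(0)$ for all $t>0$; since $\|u_0\|-L>0$, the left side tends to $+\infty$, contradicting finiteness of the right side. Hence $g$ must have $L$-bounded support.

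The only subtle point, which is the main thing to be careful about, is making sure $g^*(0)$ is finite so the contradiction in the reverse direction lands; this follows because $L$-Lipschitz on all of $\R^d$ forces $g^*$ to be real-valued, and properness of $g$ ensures $g^*(0)=-\inf_u g(u)>-\infty$. Everything else is routine bookkeeping with the definitions. No additional regularity of $g$ beyond properness and convexity is needed for this equivalence, so the statement is indeed the claimed generalization of Rockafellar's Corollary 13.3.3 from the Euclidean to a general norm setting.
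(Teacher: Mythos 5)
Your proof is correct. Note that the paper itself does not prove this lemma: it is stated in the appendix purely by reference to \cite[Corollary 13.3.3]{rockafellar2015convex}, so there is no in-paper argument to compare against; your write-up supplies exactly the missing general-norm argument. Both directions are sound: the forward direction via an $\varepsilon$-near-maximizer of the conjugate supremum (restricted to the ball $\norm{u}\le L$) combined with $\langle w_1-w_2,u\rangle\le\norm{w_1-w_2}_*\norm{u}$, and the reverse direction by contrapositive, letting $g^*(t w_0)$ grow linearly with slope $\norm{u_0}>L$ along a norming functional $w_0$ for a feasible point $u_0$ outside the ball. The one point you gloss over is in the forward direction: to extract a near-maximizer satisfying $\langle w_1,u_\varepsilon\rangle-g(u_\varepsilon)\ge g^*(w_1)-\varepsilon$ you implicitly need $g^*(w_1)<+\infty$, equivalently that $g$ is bounded below on the ball $\norm{u}\le L$; this holds because every proper convex function admits an affine minorant, but it deserves one sentence (alternatively, the same two-inequality comparison shows $g^*(w_1)=+\infty$ would force $g^*(w_2)=+\infty$, contradicting real-valuedness required by the Lipschitz definition). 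You are also right that properness and convexity are the only hypotheses needed, which is what makes the statement the claimed generalization of Rockafellar's Euclidean-norm corollary.
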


 \section{Graph topology}\label{sec:graphTopology}
 Let $\mathcal{E}$ be the set of edges of a graph.
 For time-invariant undirected graph the mixing matrix should satisfy the following properties:
 \begin{enumerate}
 	\item (\textit{Double stochasticity}) $\mixingmat \one=\one$, $\one^\top\mixingmat=\one^\top$;
 	\item (\textit{Symmetrization}) For all $i, j$, $\mixingmat_{ij}=\mixingmat_{ji}$;
 	\item (\textit{Edge utilization}) If $(i, j)\in \mathcal{E}$, then
 	$\mixingmat_{ij}>0$; otherwise $\mixingmat_{ij}=0$.
 \end{enumerate}
 A desired mixing matrix can be constructed using Metropolis-Hastings weights \citep{hastings1970monte}:
 \begin{align*}
 \mixingmat_{ij} = 
 \begin{cases}
 1 / (1 + \max \{d_i, d_j \}), & \text{if } (i, j) \in \mathcal{E}\\
 0, & \text{if } (i, j) \not\in \mathcal{E} \text{ and } j\neq i\\
 1 - \sum_{l\in\mathcal{N}_i}\mixingmat_{il}, & \text{if }  j = i,
 \end{cases}
 \end{align*}
 where $d_i = | \mathcal{N}_i|$ is the degree of node  $i$.
 
 

\section{Proofs}\label{sec:proofs}
This section consists of three parts. Tools and observations are provided in \cref{ssec:prelemmas};
The main lemmas for the convergence analysis are proved in \cref{ssec:lemmas} ;
The main theorems and implications are proved in \cref{ssec:theorems}.

In some circumstances, it is convenient to use notations of array of stack column vectors. For example,
one can stack local estimates $\v_k$ to matrix $\BV := [\v_1; \cdots; \v_K]$, 
$\Delta \BV = [\Delta\v_1; \cdots; \Delta\v_K]$. 
The consensus vector $\v_c$ is repeated $K$ times which will be stacked similarly:
$\BV_c := \BA\x\one_K^\trans= \BV \BE$ where $\BE = \frac{1}{K} \one_K\one_K^\trans$.
The consensus violation under the two notations is written as
\begin{align*}\textstyle
\norm{\BV - \BV_c}_F^2 = \sum_{k=1}^{K} \norm{\v_k - \BA\x}_2^2.
\end{align*}
Then  Step~\labelcref{step_strategy2}  in \cola  is equivalent to 
\begin{equation}\label{eq:stacked_update_rule}
\vc{\BV}{t+1} = \vc{\BV}{t} \mixingmat + \gamma K \Delta \vc{\BV}{t}
\end{equation}
Besides, we also adopt following notations in the proof when there is no ambiguity: $\v_k' := \sum_{l=1}^{K} \mixingmat_{kl} \v_l$, 
$\locgra_k := \nabla f(\v_k)$, $\locgra_k' := \nabla f(\v_k')$
and $\bar{\locgra} := \frac{1}{K} \sum_{k=1}^{K} \locgra_k$.
For the decentralized duality gap $\colagap (\x, \{\v_k \}_{k=1}^K, \{\w_k \}_{k=1}^K )$, when $\w_k = \nabla f(\v_k)$, we simplify $\colagap (\x, \{\v_k \}_{k=1}^K, \{\w_k \}_{k=1}^K )$ to be $\colagap (\x, \{\v_k \}_{k=1}^K)$ in the sequel.

On a high level, we prove the convergence rates by bounding per-iteration reduction $\mathbb{E}[\vc{\ooa}{ t} - \vc{\ooa}{t+1}]$ using decentralized duality gap and other related terms,
then try to obtain the final rates by properly using specific properties of the objectives.

However,  the specific  analysis of the new fully decentralized algorithm \cola poses many new  challenges, and we propose significantly new proof techniques in the analysis. Specifically, i) we introduce the 
decentralized duality gap, which is suited for the   decentralized algorithm \cola; ii) 
consensus violation is the usually   challenging part  in analyzing decentralized algorithms. 
Unlike using uniform bounds for consensus violations, e.g., \citep{yuan2016convergence}, we properly combine  
the  consensus violation term and the objective decrease term  (c.f. \cref{lemma:sum_of_subproblems_reductionNewGH_CV,lemma:sum_of_subproblems_reductionNewGH_timeinvar_prime}),  thus reaching  arguably 
tight convergence bounds   for both the consensus violation term and the objective.

\subsection{Observations and properties}\label{ssec:prelemmas}
In this subsection we introduce basic lemmas. 
\cref{lemma:DH} establishes the relation between $\{\vv_k\}_{k=1}^K$ and ${\v_c}$ and  bounds     $\OA(\xv)$ using  $\ooa (\xv)$ and the  consensus violation. 


\restatlemmaDalphaHalpha*
\begin{proof}[Proof of \cref{lemma:DH}] 

Let ${\widetilde{\vv}} := \frac{1}{K} \sum_{k=1}^{K} \v_k $. 
Using the doubly stochastic property of the matrix $\mixingmat$
\begin{align*}
\notag
\vc{\widetilde{\vv}}{t+1}
& =\frac{1}{K}\sum_{k=1}^K\vc{\vv_k}{t+1}
=\frac{1}{K}\sum_{k=1}^K \left( \sum_{l=1}^K \mixingmat_{kl} \vc{\vv}{t}_l  + \gamma K\Delta \vc{\vv_k}{t}  \right)\\
&=\frac{1}{K}\sum_{l=1}^K \vc{\vv}{t}_l  + {\gamma}\sum_{k=1}^K\Delta \vc{\vv_k}{t}                       
=\vc{\widetilde{\vv}}{t}  + \gamma\sum_{k=1}^K \Delta \vc{\vv_k}{t}
\end{align*}
On the other hand, $\vv_c^{(t)}:=\BA\vc{\x}{t}$ is updated based on all changes of local variables $\{\vsubset{\xv}{k}\}_{k=1}^K$
$$ \vv_c^{(t+1)} = \vv_c^{(t)} + \gamma \sum_{k=1}^K\Delta \vc{\vv_k}{t}.$$
Since $\vc{\widetilde{\vv}}{0} = \vv_c^{(0)}$, we can conclude that $\vc{\widetilde{\vv}}{t} = \vv_c^{(t)}$ $\forall~t$.	
From convexity of $f$ we know
\begin{align*}
\OA(\xv)
=f(\vv_c) + g(\xv)
= f\left(\frac{1}{K}\sum_{k=1}^{K} \vv_k\right)+ g(\xv)
\le \frac{1}{K}\sum_{k=1}^{K} f(\vv_k)+ g(\xv)
=\mathcal{H}(\xv)
\end{align*}
Using $1/\tau$-smoothness of $f$ gives
\begin{align*}
\ooa (\xv)
=&\frac{1}{K}\sum_{k=1}^{K} f(\vv_k)+ g(\xv)\\
\le&
\frac{1}{K}\sum_{k=1}^{K}
\left(
f(\vv_c)
+
\nabla f(\vv_c) ^\top ( \vv_k - \vv_c )
+
\frac{1}{2\tau} \norm{\vv_k- \vv_c}^2
\right)
+ g(\xv)\\
= &
\OA(\xv) +
\frac{1}{2\tau K}\sum_{k=1}^{K} \norm{\vv_k- \vv_c}^2. 
\end{align*}
\end{proof}

The following lemma introduces the dual problem and the duality gap of \eqref{eq:H}.
\restaDecenDual* 

\begin{proof}
	Let $\blambda_k$ be the Lagrangian multiplier for the constraint $\vv_k = \BA \x$,  the 	Lagrangian  function   is
	\begin{align*}
	L(\x, \{\v_k \}_{k=1}^{K}, \{\blambda_k \}_{k=1}^{K} )
	= \frac{1}{K} \sum_{k=1}^{K} f(\v_k) + \sum_{i =1}^n g_i(\x_i) + \sum_{k=1}^{K} \langle \blambda_k , \BA\x - \v_k \rangle
	\end{align*}
	The dual problem of \eqref{eq:H} follows by taking the infimum with  respect to both $\x$ and $\{\v_k \}_{k=1}^{K}$:
	\begin{align*}
	&\inf_{ \x, \{\v_k \}_{k=1}^K} 	L(\x, \{\v_k \}_{k=1}^{K}, \{\blambda_k \}_{k=1}^{K} )\\
	=&
	\inf_{ \x, \{\v_k \}_{k=1}^K} 
	\frac{1}{K} \sum_{k=1}^{K} f(\v_k) + \sum_{i =1}^n g_i(\x_i) + \sum_{k=1}^{K} \langle \blambda_k , \BA\x - \v_k \rangle
	\\
	=&
	\sum_{k=1}^{K} \inf_{ \{\v_k \}_{k=1}^K} 
	( \frac{1}{K} f(\v_k) -  \langle \blambda_k , \v_k \rangle 
	)
	+
	\inf_{ \x} (\sum_{i =1}^n g_i(\x_i) + \sum_{k=1}^{K} \langle \blambda_k , \BA\x\rangle 
	) \\
	=&
	- \sum_{k=1}^{K} \sup_{ \{\v_k \}_{k=1}^K} 
	( \langle \blambda_k , \v_k \rangle  - \frac{1}{K} f(\v_k)
	)
	- \sup_{ \x} ( -\sum_{k=1}^{K} \langle \a_k , \BA\x\rangle - \sum_{i =1}^n g_i(\x_i)
	) \\
	=&
	- \sum_{k=1}^{K} \frac{1}{K} f^*(K \blambda_k)
	- \sum_{i =1}^n g^*_i ( -\sum_{k=1}^{K}  \BA_i^\top \blambda_k  )
	\end{align*}
	
	Let us change variables from $\blambda_k$ to $\w_k$ by setting
	$\w_k := K\blambda_k$. 
	If written in terms of minimization, the Lagrangian  dual of $\ooa$  is
	\begin{align}\textstyle
	\min_{ \{\w_k \}_{k=1}^K}  \oob {\{\w_k \}_{k=1}^K}
	= \frac{1}{K} \sum_{k=1}^{K}  f^*(\w_k) 
	+ \sum_{i =1}^n g^*_i \left( - \frac{1}{K}\sum_{k=1}^{K}  \BA_i^\top \w_k \right)
	\end{align}
	The optimality condition is that $\w_k = \nabla f(\v_k)$.
	Now we can see the duality gap  is
	\begin{align*}
	\colagap  =&   \ooa + \oob \\
	=& 
	\frac{1}{K} \sum_{k=1}^{K} f(\v_k) + \sum_{i=1}^{n} g_i(x_i) + 
 \frac{1}{K} \sum_{k=1}^{K}  f^*(\w_k) 
+ \sum_{i =1}^n g^*_i \left( - \frac{1}{K}\sum_{k=1}^{K}  \BA_i^\top \w_k  \right)
	\end{align*}
\end{proof}
The following lemma correlates the consensus violation with the magnitude of the $\v$ parameter updates~$\norm{\Delta \v_k}_2^2$.
\begin{lemma}\label{lemma:consensus_violation}
	The consensus violation during the execution of \cref{alg_dcocoa} can be bound by
	\begin{align}\label{eq:consensus_violation}
	\sum_{k=1}^{K} \norm{\vc{\v_k}{t+1} - \vc{\BA\x}{t+1}}_2^2 \le 
	\beta \sum_{k=1}^{K} \norm{\vc{\v_k}{t} - \vc{\BA\x}{t}}_2^2
	+ (1-\beta)c_1(\beta, \gamma, K)  \sum_{k=1}^{K} \norm{\Delta \vc{\v_k}{t}}_2^2
	\end{align}
	where $c_1(\beta, \gamma, K):= \gamma^2K^2/(1- \beta)^2$.
\end{lemma}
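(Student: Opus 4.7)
\textbf{Proof proposal for \cref{lemma:consensus_violation}.}
The plan is to work in the stacked matrix notation introduced before the lemma, so that the consensus violation at time $t$ is captured by the single matrix $\BY^{(t)} := \vc{\BV}{t}(\BI - \BE)$ with $\BE = \frac{1}{K}\one_K\one_K^\trans$. By \cref{lemma:DH} we already know $\BV_c^{(t)} = \BV^{(t)}\BE$, so $\BY^{(t)}$ indeed has rows $\v_k^{(t)} - \BA\x^{(t)}$ and $\|\BY^{(t)}\|_F^2 = \sum_k \|\v_k^{(t)} - \BA\x^{(t)}\|_2^2$.

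First I would plug the vectorized update rule \eqref{eq:stacked_update_rule} into $\BY^{(t+1)} = \vc{\BV}{t+1}(\BI - \BE)$ and use two key algebraic facts that follow from $\mixingmat$ being doubly stochastic: $\mixingmat\BE = \BE\mixingmat = \BE$ and $\BE^2 = \BE$. Specifically, writing
\begin{equation*}
\BY^{(t+1)} = \vc{\BV}{t}(\mixingmat - \BE) + \gamma K\,\Delta\vc{\BV}{t}(\BI - \BE),
\end{equation*}
and observing that $\vc{\BV}{t}\BE\,(\mixingmat - \BE) = 0$, I can replace $\vc{\BV}{t}$ by $\BY^{(t)}$ in the first term and obtain the clean recursion
\begin{equation*}
\BY^{(t+1)} = \BY^{(t)}(\mixingmat - \BE) + \gamma K\,\Delta\vc{\BV}{t}(\BI - \BE).
\end{equation*}

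Next I would take Frobenius norms and apply the triangle inequality. The spectral norm of $\mixingmat - \BE$ equals $\beta$ (its eigenvalues are $0,\lambda_2,\ldots,\lambda_K$, whose maximum magnitude is exactly $\beta$), and $\|\BI - \BE\|_{op} = 1$, hence
\begin{equation*}
\|\BY^{(t+1)}\|_F \ \le\ \beta\,\|\BY^{(t)}\|_F + \gamma K\,\|\Delta\vc{\BV}{t}\|_F.
\end{equation*}
Squaring and applying Young's inequality $(a+b)^2 \le (1+\theta)a^2 + (1+\theta^{-1})b^2$ with the choice $\theta = (1-\beta)/\beta$ makes the coefficient of $\|\BY^{(t)}\|_F^2$ collapse from $(1+\theta)\beta^2$ to exactly $\beta$, while the coefficient of $\|\Delta\vc{\BV}{t}\|_F^2$ becomes $\gamma^2 K^2/(1-\beta) = (1-\beta)\,c_1(\beta,\gamma,K)$. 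Unpacking the Frobenius norms back into sums over $k$ then gives exactly \eqref{eq:consensus_violation}.

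The only non-routine step is recognizing that $\mixingmat\BE = \BE$ lets us kill the $\vc{\BV}{t}\BE$ component inside the recursion (so the contraction factor is $\beta$ and not $1$), and matching the Young's-inequality parameter to produce precisely the constant $c_1(\beta,\gamma,K) = \gamma^2 K^2/(1-\beta)^2$; everything else is a direct consequence of the doubly stochastic structure of $\mixingmat$ and the spectral gap assumption.
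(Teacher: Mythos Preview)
Your proposal is correct and follows essentially the same route as the paper: stacked-matrix notation, the identity $\mixingmat(\BI-\BE)=(\BI-\BE)(\mixingmat-\BE)$ (equivalently $\BE(\mixingmat-\BE)=0$), the spectral bound $\|\mixingmat-\BE\|_{\mathrm{op}}=\beta$, and Young's inequality with parameter $\theta=(1-\beta)/\beta$ (the paper's $\varepsilon_{\v}=1/\beta-1$). The only cosmetic difference is that the paper first writes the one-step bound with a free Young parameter, unrolls the recursion and sums the geometric series to define $c_1$, minimizes $c_1$ over the parameter, and then substitutes back; you pick the optimal parameter directly, which is cleaner. (Minor slip: the $\v_k^{(t)}-\BA\x^{(t)}$ are the \emph{columns} of $\BY^{(t)}$, not the rows, given $\BV=[\v_1;\ldots;\v_K]\in\R^{d\times K}$.)
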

\begin{proof}
	Consider the norm of consensus violation at time $t+1$ and apply Algo. Step~\labelcref{step_strategy2}
	\begin{equation*}\small
	\norm{\vc{\mathbf{V}}{t+1} - \vc{\mathbf{V}}{t+1}_c}_F^2
	=
	\norm{\vc{\mathbf{V}}{t+1} (\BI - \BE)}_F^2
	=
	\norm{(\vc{\mathbf{V}}{t}\mixingmat
		+
		\gamma K \vc{\Delta \mathbf{V}}{t})
		(\BI - \BE)}_F^2.
	\end{equation*}
	Further, use $\mixingmat(\BI - \BE) = (\BI - \BE)(\mixingmat- \BE)$, $\norm{\BI - \BE}_\infty=1$,
	and Young's inequality with $\varepsilon_\v$
	\begin{align*}
	\norm{\vc{\mathbf{V}}{t+1} - \vc{\mathbf{V}}{t+1}_c}_F^2\le
	(1 + \varepsilon_\v) \norm{\vc{\mathbf{V}}{t} (\BI - \BE) (\mixingmat - \BE)}_F^2
	+
	(1 + \frac{1}{\varepsilon_\v})
	\gamma^2 K^2\norm{\vc{\Delta \mathbf{V}}{t}}_F^2.
	\end{align*}
	Use the spectral property of $\mixingmat$ we therefore have:
	\begin{align}\label{eq:consensus_violation:1}
	\norm{\vc{\mathbf{V}}{t+1} - \vc{\mathbf{V}}{t+1}_c}_F^2\le
	(1 + \varepsilon_\v) \beta^2
	\norm{\vc{\mathbf{V}}{t} - \vc{\BV}{t}_c}_F^2
	+ (1 + \frac{1}{\varepsilon_\v}) \gamma^2 K^2 \norm{\Delta \vc{\mathbf{V}}{t}}_F^2.
	\end{align}
	Recursively apply \eqref{eq:consensus_violation:1} for $i=0, \ldots, t-1$ gives
	\begin{align}\label{eq:consensus_violation:2}
	\norm{\vc{\mathbf{V}}{t} - \vc{\mathbf{V}}{t}_c}_F^2 \le
	(1 + \frac{1}{\varepsilon_\v}) \gamma^2 K^2 
	\sum_{i=0}^{t-1} ((1 + \varepsilon_\v) \beta^2)^{t-1-i} 
	\norm{\Delta \vc{\mathbf{V}}{i}}_F^2.
	\end{align}
	Consider $\norm{\Delta \vc{\mathbf{V}}{t}}_F^2$ generated at time $t$, it will be used in \eqref{eq:consensus_violation:2} from time $t+1, t+2, \ldots,$ with coefficients
	$1, (1+\varepsilon_{\v})\beta^2, ((1+\varepsilon_{\v})\beta^2)^2, \ldots$. Sum of such coefficients are finite
	\begin{align}\label{eq:consensus_violation:3}
	(1 + \frac{1}{\varepsilon_\v}) \gamma^2 K^2 
	\sum_{t=T}^{\infty} ((1 + \varepsilon_\v) \beta^2)^{t-T}
	\le 
	\gamma^2 K^2 \frac{1 + 1/\varepsilon_\v}{1- (1 + \varepsilon_\v) \beta^2}
	=: c_1(\beta, \gamma, K)
	\end{align}
	where we need $(1 + \varepsilon_\v) \beta^2 < 1$. To minimize $c_1(\beta, \gamma, K)$ we can choose $\varepsilon_\v=1/\beta - 1$
	\begin{align}
	c_1(\beta, \gamma, K) = \gamma^2 K^2 / (1 - \beta)^2
	\end{align}
	
	Then \eqref{eq:consensus_violation:1} becomes
	\begin{align*}
	\sum_{k=1}^{K} \norm{\vc{\v_k}{t+1} - \vc{\BA\x}{t+1}}_2^2 \le 
	\beta \sum_{k=1}^{K} \norm{\vc{\v_k}{t} - \vc{\BA\x}{t}}_2^2
	+ (1-\beta)c_1(\beta, \gamma, K)  \sum_{k=1}^{K} \norm{\Delta \vc{\v_k}{t}}_2^2
	\end{align*}
\end{proof}

\begin{lemma}\label{lemma:convexity:subproblem}
	Let $\Delta{\vsubset{\xv^\star}{k}}$ and $\Delta{\vsubset{\xv}{k}}$ be the exact and $\Theta$-inexact solution of subproblem $\mathscr{G}^{\sigma'}_k\hspace{-0.08em}({}\cdot{}; \v_k, \vsubset{\xv}{k})$.
	The change of iterates satisfies the following inequality
	\begin{equation}
		\frac{\sigma'}{4\tau} \sum_{k=1}^K \norm{\BA\vsubset{\Delta \x}{k}}_2^2
		\le (1+\Theta)(\ooa(\0; \{\v_k\}) - \ooa(\Delta\x^\star; \{\v_k\}))
	\end{equation}
\end{lemma}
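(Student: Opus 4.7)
The plan is to bound $\|\vsubset{\BA}{k}\vsubset{\Delta\x}{k}\|^2$ by splitting it via triangle inequality into (i) the distance from the exact minimizer and (ii) the magnitude of the exact minimizer itself, then controlling each term separately using the quadratic structure of $\mathscr{G}^{\sigma'}_k$ together with Assumption~\ref{assumption:theta}. Throughout I will read $\ooa(\0;\{\v_k\}) - \ooa(\Delta\x^\star;\{\v_k\})$ as $\sum_k[\mathscr{G}^{\sigma'}_k(\0;\v_k,\vsubset{\x}{k}) - \mathscr{G}^{\sigma'}_k(\Delta\vsubset{\x^\star}{k};\v_k,\vsubset{\x}{k})]$, which is the natural aggregation given that $\mathscr{G}^{\sigma'}_k$ is the per-node subproblem.

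\textbf{Step 1 (distance to the exact minimizer).} The only curvature in $\mathscr{G}^{\sigma'}_k$ comes from the regularizer $\tfrac{\sigma'}{2\tau}\|\vsubset{\BA}{k}\vsubset{\Delta\x}{k}\|^2$; all remaining terms are linear in $\vsubset{\Delta\x}{k}$ or separable convex. Hence $\mathscr{G}^{\sigma'}_k$ is strongly convex along $\vsubset{\BA}{k}$-directions with modulus $\sigma'/\tau$, and subgradient optimality of $\Delta\vsubset{\x^\star}{k}$ gives $\mathscr{G}^{\sigma'}_k(\vsubset{\Delta\x}{k}) \ge \mathscr{G}^{\sigma'}_k(\Delta\vsubset{\x^\star}{k}) + \tfrac{\sigma'}{2\tau}\|\vsubset{\BA}{k}(\vsubset{\Delta\x}{k}-\Delta\vsubset{\x^\star}{k})\|^2$. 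Combining with Assumption~\ref{assumption:theta} rewritten as $\mathscr{G}^{\sigma'}_k(\vsubset{\Delta\x}{k}) - \mathscr{G}^{\sigma'}_k(\Delta\vsubset{\x^\star}{k}) \le \Theta[\mathscr{G}^{\sigma'}_k(\0) - \mathscr{G}^{\sigma'}_k(\Delta\vsubset{\x^\star}{k})]$ yields $\tfrac{\sigma'}{2\tau}\|\vsubset{\BA}{k}(\vsubset{\Delta\x}{k}-\Delta\vsubset{\x^\star}{k})\|^2 \le \Theta[\mathscr{G}^{\sigma'}_k(\0) - \mathscr{G}^{\sigma'}_k(\Delta\vsubset{\x^\star}{k})]$.

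\textbf{Step 2 (magnitude of the exact minimizer).} From $0 \in \partial \mathscr{G}^{\sigma'}_k(\Delta\vsubset{\x^\star}{k})$ I pick a subgradient $\s \in \partial g_{[k]}(\vsubset{\x}{k}+\Delta\vsubset{\x^\star}{k})$ satisfying $\vsubset{\BA}{k}^\top\nabla f(\v_k) + \tfrac{\sigma'}{\tau}\vsubset{\BA}{k}^\top\vsubset{\BA}{k}\Delta\vsubset{\x^\star}{k} + \s = 0$. Applying convexity of each $g_i$ between $x_i$ and $x_i + (\Delta\vsubset{\x^\star}{k})_i$ with this subgradient, summing over $i\in\Pk$, and substituting back into the definition of $\mathscr{G}^{\sigma'}_k$ makes the cross terms $\nabla f(\v_k)^\top\vsubset{\BA}{k}\Delta\vsubset{\x^\star}{k}$ cancel, leaving $\tfrac{\sigma'}{2\tau}\|\vsubset{\BA}{k}\Delta\vsubset{\x^\star}{k}\|^2 \le \mathscr{G}^{\sigma'}_k(\0) - \mathscr{G}^{\sigma'}_k(\Delta\vsubset{\x^\star}{k})$.

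\textbf{Step 3 (combination).} Decomposing $\vsubset{\BA}{k}\vsubset{\Delta\x}{k} = \vsubset{\BA}{k}(\vsubset{\Delta\x}{k}-\Delta\vsubset{\x^\star}{k}) + \vsubset{\BA}{k}\Delta\vsubset{\x^\star}{k}$ and applying $\|a+b\|^2\le 2\|a\|^2+2\|b\|^2$, Steps 1 and 2 give per node $\tfrac{\sigma'}{4\tau}\|\vsubset{\BA}{k}\vsubset{\Delta\x}{k}\|^2 \le (1+\Theta)[\mathscr{G}^{\sigma'}_k(\0) - \mathscr{G}^{\sigma'}_k(\Delta\vsubset{\x^\star}{k})]$. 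Summing over $k=1,\ldots,K$ and using the shorthand noted above delivers the claim. The main obstacle is Step 2: since $g_i$ may be non-smooth (e.g.\ $\ell_1$), I must be careful to pick the correct subgradient from $\partial g_{[k]}$ that matches the stationarity equation for $\mathscr{G}^{\sigma'}_k$, and then use convexity of $g_i$ in the correct direction so that the linear terms cancel cleanly against the quadratic. Once that subgradient is fixed, the remaining calculation is algebraic.
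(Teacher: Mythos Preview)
Your proposal is correct and follows essentially the same route as the paper. The paper states the single quadratic-growth inequality
\[
\frac{\sigma'}{2\tau}\norm{\vsubset{\BA}{k}(\Delta\z-\Delta\vsubset{\xv^\star}{k})}^2 \le \newsub(\Delta\vsubset{\z}{k};\v_k,\vsubset{\xv}{k}) - \newsub(\Delta\vsubset{\xv^\star}{k};\v_k,\vsubset{\xv}{k})
\]
and applies it at $\Delta\z=\0$ (your Step~2) and at $\Delta\z=\vsubset{\Delta\xv}{k}$ (your Step~1), then combines via the same $\norm{a+b}^2\le 2\norm{a}^2+2\norm{b}^2$ bound. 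Your Step~2 derivation via first-order optimality is simply a re-proof of the $\Delta\z=\0$ case of the inequality you already used in Step~1, so the ``main obstacle'' you flag is not really one: once the quadratic growth is established (as in Step~1), Step~2 follows for free with $\Delta\z=\0$, and no separate subgradient bookkeeping for non-smooth $g_i$ is needed.
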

\begin{proof}
	First use the Taylor expansion of $\mathscr{G}^{\sigma'}_k\hspace{-0.08em}({}\cdot{}; \v_k, \vsubset{\xv}{k})$ and the defnition of $\Delta{\vsubset{\xv^\star}{k}}$ we have
	\begin{equation}\label{eq:convexity:subproblem}
		\frac{\sigma'}{2\tau} \norm{\BA\vsubset{(\Delta\z - \Delta \x^\star)}{k}}_2^2 \le
		\mathscr{G}^{\sigma'}_k\hspace{-0.08em}\left(\Delta\vsubset{\z}{k}; \v_k, \vsubset{\xv}{k}\right)
		- \mathscr{G}^{\sigma'}_k\hspace{-0.08em}\left(\Delta{\vsubset{\xv^\star}{k}}; \v_k, \vsubset{\xv}{k}\right)
	\end{equation}
	for all $\Delta\vsubset{\z}{k} \in \R^n$ and $k=1,\ldots,K$.
	Apply \eqref{eq:convexity:subproblem} with $\Delta\vsubset{\z}{k}=\0$ for all $k$ and sum them up yields
	\begin{equation}\label{eq:convexity:subproblem:2}
		\frac{\sigma'}{2\tau}\sum_{k=1}^K \norm{\BA\vsubset{\Delta \x^\star}{k}}_2^2
		\le \ooa(\0; \{\v_k\}) - \ooa(\Delta\x^\star; \{\v_k\})
	\end{equation}
	Similarly, apply \eqref{eq:convexity:subproblem} for $\Delta\vsubset{\z}{k} = \Delta\vsubset{\x}{k}$ for all $k$ and sum them up gives
	\begin{equation}
		\frac{\sigma'}{2\tau}\sum_{k=1}^K \norm{\BA
		\vsubset{(\Delta\x - \Delta \x^\star)}{k}}_2^2
		\le \ooa(\Delta\x; \{\v_k\}) - \ooa(\Delta\x^\star; \{\v_k\})
	\end{equation}
	By \cref{assumption:theta} the previous inequality becomes
	\begin{equation}\label{eq:convexity:subproblem:3}
		\frac{\sigma'}{2\tau}\sum_{k=1}^K \norm{\BA
		\vsubset{(\Delta\x - \Delta \x^\star)}{k}}_2^2
		\le \Theta (\ooa(\0; \{\v_k\}) - \ooa(\Delta\x^\star; \{\v_k\}))
	\end{equation}
	%
	%
	The following inequality is straightforward
	\begin{equation}\label{eq:convexity:subproblem:4}
		\frac{1}{2} \sum_{k=1}^K \norm{\BA\vsubset{\Delta \x}{k}}_2^2
		\le \sum_{k=1}^K \norm{\BA\vsubset{\Delta \x^\star}{k}}_2^2
		+ \sum_{k=1}^K
		\norm{\BA \vsubset{(\Delta\x - \Delta \x^\star)}{k}}_2^2
	\end{equation}
	Multiply \eqref{eq:convexity:subproblem:4} with $\sigma'/(2\tau)$ and use \eqref{eq:convexity:subproblem:2} and \eqref{eq:convexity:subproblem:3}
	\begin{equation}
		\frac{\sigma'}{4\tau} \sum_{k=1}^K \norm{\BA\vsubset{\Delta \x}{k}}_2^2
		\le (1+\Theta)(\ooa(\0; \{\v_k\}) - \ooa(\Delta\x^\star; \{\v_k\}))
	\end{equation}
\end{proof}

\subsection{Main lemmas}\label{ssec:lemmas}
We first present two main lemmas for the per-iteration improvement.

\begin{lemma}
	\label{lemma:sum_of_subproblems_reductionNewGH_CV}
	Let $g_i$ be strongly convex with convexity parameter $\mu_g\ge 0$ with respect to the norm $\norm{\cdot}, \forall~i\in[n]$.
	Then for all iterations $t$ of outer loop, and any $s\in[0, 1]$, it holds that
	\begin{equation}\label{eq:sum_of_subproblems_reductionNewGH_CV}
	\begin{split}
	&\mathbb{E} \left[\ooa(\vc{\x}{t}; \vc{\v_k}{t}) - \ooa(\vc{\x}{t+1}; \vc{\v_k}{t+1})
		-\alpha\frac{\gamma\sigma_1'}{2\tau} \sum_{k=1}^K \norm{\BA \Delta \vc{\vsubset{\x}{k}}{t} }_2^2
	\right]\\
	\ge&
	\eta \left(
	sG_\mathcal{H}(\vc{\xv}{t}; \{{\vc{\v_k'}{t}}\}_{k=1}^K)
	-  \frac{s^2\bar{\sigma}'}{2\tau} \vc{R}{t}
	\right)
	- \frac{9\beta\eta}{ 2\tau\sigma'} \sum_{k=1}^K \norm{\vc{\v_k}{t} - \vc{\BA\x}{t} }_2^2
	\end{split}
	\end{equation}
	where $\alpha\in[0,1]$ is a constant and $\eta:=\gamma(1-\Theta)(1-\alpha)$ and $\sigma_1' := \frac{(1-\Theta)}{2(1+\Theta)}\sigma'$
	and $\bar{\sigma}':= (1 + \beta)\sigma'$ and $\v_k':= \sum_{l=1}^K \mixingmat_{kl} \v_l$.
	\begin{equation}
	\textstyle
	R^{(t)}:= - \frac{\tau\mu_g  (1-s)}{\bar{\sigma}'s}\norm{\vc{\uv}{t} - \vc{\xv}{t} }^2
	+ \sum_{k=1}^{K}\norm{\BA\vsubset{(\vc{\uv}{t} - \vc{\xv}{t})}{k}}^2
	\end{equation}
	for $\vc{\uv}{t}\in\R^n$ with 
	$\vc{\bar{\locgra}'}{t}:= \frac{1}{K}\sum_{k=1}^K \nabla f(\vc{\v_k'}{t})$
	\begin{equation}
	\vc{u_{i}}{t} \in \partial g_i^*(-\BA_i^\top \vc{\bar{\locgra}'}{t} ) \qquad k \text{ s.t. }i \in \Pk
	\end{equation}
\end{lemma}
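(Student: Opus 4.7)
The plan is to lower-bound the per-iteration decrease of $\ooa$ in terms of the local subproblems and then convert that into the duality gap $G_\mathcal{H}$ via a proxy. Since the algorithm performs $\vc{\v_k}{t+1} = \vc{\v_k'}{t} + \gamma K\vsubset{\BA}{k}\vsubset{\Delta\x}{k}$ and $\vsubset{\vc{\x}{t+1}}{k}=\vsubset{\vc{\x}{t}}{k}+\gamma\vsubset{\Delta\x}{k}$, applying $1/\tau$-smoothness of $f$ at each node together with the $\gamma$-convex-combination bound $g_i(x_i+\gamma\Delta x_i)\le(1-\gamma)g_i(x_i)+\gamma g_i(x_i+\Delta x_i)$ yields, provided $\sigma'\ge\gamma K$,
\begin{equation*}
\ooa(\vc{\x}{t+1};\{\vc{\v_k}{t+1}\})-\ooa(\vc{\x}{t};\{\vc{\v_k'}{t}\})\ \le\ \gamma\sum_{k=1}^K\big[\newsub(\vsubset{\Delta\x}{k};\v_k',\vsubset{\x}{k})-\newsub(\0;\v_k',\vsubset{\x}{k})\big].
\end{equation*}
Since $\mixingmat$ is doubly stochastic, Jensen's inequality gives $\sum_k f(\vc{\v_k'}{t})\le\sum_k f(\vc{\v_k}{t})$, hence $\ooa(\vc{\x}{t};\{\v_k'\})\le\ooa(\vc{\x}{t};\{\v_k\})$, so the display transfers to a lower bound on $\ooa(\vc{\x}{t};\{\v_k\})-\ooa(\vc{\x}{t+1};\{\vc{\v_k}{t+1}\})$ as required.

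Taking expectation and invoking the $\Theta$-approximation assumption replaces $\newsub(\vsubset{\Delta\x}{k})$ by the convex combination $\Theta\newsub(\0)+(1-\Theta)\newsub(\vsubset{\Delta\x^\star}{k})$, so the per-iteration decrease is bounded below by $\gamma(1-\Theta)\sum_k[\newsub(\0)-\newsub(\vsubset{\Delta\x^\star}{k})]$. I would split $(1-\Theta)=(1-\Theta)\alpha+(1-\Theta)(1-\alpha)$ and combine the $\alpha$-share with \cref{lemma:convexity:subproblem}, which yields $\sum_k[\newsub(\0)-\newsub(\vsubset{\Delta\x^\star}{k})]\ge\frac{\sigma'}{4\tau(1+\Theta)}\sum_k\|\BA\vsubset{\Delta\x}{k}\|^2$; a direct check then recovers exactly the $\alpha\frac{\gamma\sigma_1'}{2\tau}\sum_k\|\BA\vsubset{\Delta\x}{k}\|^2$ term on the LHS of the claim with $\sigma_1'=\frac{1-\Theta}{2(1+\Theta)}\sigma'$. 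The remaining $(1-\alpha)$-share leaves $\eta\sum_k[\newsub(\0)-\newsub(\vsubset{\Delta\x^\star}{k})]$ to be analyzed, with $\eta=\gamma(1-\Theta)(1-\alpha)$.

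To expose the decentralized duality gap, I upper-bound $\newsub(\vsubset{\Delta\x^\star}{k})$ by $\newsub(s(\vsubset{\uv}{k}-\vsubset{\x}{k}))$, expand, and invoke $\mu_g$-strong convexity of $g_i$ along $\vsubset{\x}{k}\to\vsubset{\uv}{k}$. Summed over $k$, the linear term decomposes as $s\bar{\locgra}'^\top\BA(\uv-\x)+s\sum_k(\nabla f(\v_k')-\bar{\locgra}')^\top\vsubset{\BA}{k}\vsubset{(\uv-\x)}{k}$; the first summand combines with $s[g(\uv)-g(\x)]$ via the Fenchel--Young equality at $\vsubset{u_i}{k}\in\partial g_i^*(-\BA_i^\top\bar{\locgra}')$ together with the identity $\BA\x=\frac{1}{K}\sum_k\v_k'$ (which follows from \cref{lemma:DH} and double-stochasticity of $\mixingmat$) to yield $-sG_\mathcal{H}(\x;\{\v_k'\})$, up to a Fenchel remainder $-\frac{s}{K}\sum_k(\v_k')^\top(\nabla f(\v_k')-\bar{\locgra}')$. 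The strong-convexity term $-\frac{\mu_g s(1-s)}{2}\|\uv-\x\|^2$ and the quadratic $\frac{\sigma's^2}{2\tau}\sum_k\|\vsubset{\BA}{k}\vsubset{(\uv-\x)}{k}\|^2$ together match the shape of $-\frac{s^2\bar\sigma'}{2\tau}R^{(t)}$.

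The final---and hardest---step is absorbing the two decentralization remainders (the cross term above and the Fenchel remainder) into the quadratic residual (thereby inflating $\sigma'$ to $\bar\sigma'=(1+\beta)\sigma'$) and the consensus-violation penalty. I plan to apply Young's inequality with weights tuned so a $\beta$-share of $\sum_k\|\vsubset{\BA}{k}\vsubset{(\uv-\x)}{k}\|^2$ augments the quadratic residual while the spillover lands on $\sum_k\|\v_k-\BA\x\|^2$. To control $\|\nabla f(\v_k')-\bar{\locgra}'\|$ I use $1/\tau$-smoothness of $f$ and Jensen to obtain $\sum_k\|\nabla f(\v_k')-\bar{\locgra}'\|^2\le\frac{C}{\tau^2}\sum_k\|\v_k'-\BA\x\|^2$ for an explicit constant $C$, and then invoke the spectral contraction $\sum_k\|\v_k'-\BA\x\|^2\le\beta^2\sum_k\|\v_k-\BA\x\|^2$, which holds because $\mixingmat-\BE$ has spectral norm $\beta$ on the consensus-orthogonal subspace (with $\BE=\frac{1}{K}\one\one^\top$). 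Tightly tuning the Young weights to simultaneously produce the exact constants $(1+\beta)\sigma'$ inside $R^{(t)}$ and $\frac{9\beta\eta}{2\tau\sigma'}$ on the consensus-violation term is the main book-keeping obstacle of the proof.
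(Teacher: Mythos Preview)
Your proposal is correct and follows essentially the same route as the paper's proof: decompose $\ooa^{(t)}-\ooa^{(t+1)}$ into a Jensen gain from the mixing step plus a subproblem decrease, apply the $\Theta$-approximation assumption, split off an $\alpha$-share via \cref{lemma:convexity:subproblem} to produce the $\sigma_1'$ term, substitute the proxy $\vsubset{\Delta\x}{k}=s(\vsubset{\uv}{k}-\vsubset{\x}{k})$ with $u_i\in\partial g_i^*(-\BA_i^\top\bar{\locgra}')$, and assemble $G_{\mathcal H}$ via Fenchel--Young. The two decentralization remainders you isolate---the cross term $s\sum_k\langle\locgra_k'-\bar{\locgra}',\BA\vsubset{(\uv-\x)}{k}\rangle$ and the term $\tfrac{s}{K}\sum_k\langle\locgra_k'-\bar{\locgra}',\v_k'\rangle$ (equivalently $\tfrac{s}{K}\sum_k\langle\locgra_k'-\bar{\locgra}',\v_k'-\BA\x\rangle$ since the $\locgra_k'-\bar{\locgra}'$ sum to zero)---are exactly the paper's $D_3$, and the paper bounds them just as you propose: the gradient deviation by $\sum_k\|\locgra_k'-\bar{\locgra}'\|^2\le\tfrac{4\beta^2}{\tau^2}\sum_k\|\v_k-\BA\x\|^2$ (smoothness plus spectral contraction through $\locgra_c$), then Young on the cross term with weight $\sigma'\beta/\tau$ to generate the extra $\beta\sigma'$ in $\bar\sigma'$, and a direct Cauchy/Young on the first remainder; the two consensus contributions $\tfrac{5\beta^2 s}{2\tau K}$ and $\tfrac{4\beta}{2\tau\sigma'}$ combine (using $\beta s\le1$ and $\sigma'\le K$) to give the stated $\tfrac{9\beta}{2\tau\sigma'}$.
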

\begin{proof}[Proof of \cref{lemma:sum_of_subproblems_reductionNewGH_CV}]
	

	For simplicity, we write $\vc{\ooa}{t}$ instead of $\ooa(\vc{\x}{t}; \{\vc{\v_k}{t}\}_{k=1}^K)$ and $\v_k':= \sum_{i=1}^K\mixingmat_{ik}\v_i$.
	\begin{align*}
	&\mathbb{E}[\vc{\ooa}{t} - \vc{\ooa}{t+1}]\\
	=& \frac{1}{K} \sum_{k=1}^K
	f({{\vv_k}})
	-
	\frac{1}{K} \sum_{k=1}^K
	f\left(\v_k'
	+ \gamma K \Delta \vsubset{\vv}{k}
	\right)
	+ g(\xv) - g(\xv + \gamma \Delta \xv)\\
	=&
	\underbrace{
		\frac{1}{K} \sum_{k=1}^K
		f({{\vv_k}})
		-
		\frac{1}{K} \sum_{k=1}^K
		f\left(\v_k'\right)
	}_{D_1}\\
	&+
	\underbrace{
		\sum_{k=1}^K
		\left\{
		\frac{1}{K}
		f\left(\v_k'\right)
		+
		g(\vsubset{\xv}{k})
		\right\}
		-
		\sum_{k=1}^K
		\left\{
		\frac{1}{K}
		f\left(\v_k' + \gamma K \Delta\vsubset{\vv}{k}\right)
		+
		g(\vsubset{\xv}{k} + \gamma\vsubset{\Delta\xv}{k})
		\right\}
	}_{D_2}
	\end{align*}
	By the convexity of $f$, $D_1\ge 0$.
	Using the convexity of $f$ and $g$ in $D_2$ we have
	\begin{align*}
	\frac{1}{\gamma} D_2 \ge &
	\sum_{k=1}^K
	\left\{
	\frac{1}{K}
	f\left(\v_k'\right)
	+
	g(\vsubset{\xv}{k})
	\right\}
	-
	\sum_{k=1}^K
	\left\{
	\frac{1}{K}
	f\left(\v_k' + K \Delta\vsubset{\vv}{k}\right)
	+
	g(\vsubset{\xv}{k} +\vsubset{\Delta\xv}{k})
	\right\}\\
	\ge &\mathbb{E}\left[
	\sum_{k=1}^{K} \newsub\left(\0; \v_k', \vsubset{\xv}{k}\right)
	- \sum_{k=1}^{K} \newsub\left(\Delta{\vsubset{\xv}{k}}; \v_k', \vsubset{\xv}{k}\right)
	\right]
	\end{align*}
	%
	%
	Use the \cref{assumption:theta} we have
	\begin{align*}
	D_2 \ge & \gamma \mathbb{E}\left[
	\sum_{k=1}^{K} \mathscr{G}^{\sigma'}_k\hspace{-0.08em}\left(\0; \v_k', \vsubset{\xv}{k}\right)
	- \sum_{k=1}^{K} \mathscr{G}^{\sigma'}_k\hspace{-0.08em}\left(\Delta{\vsubset{\xv}{k}}; \v_k', \vsubset{\xv}{k}\right)\right] \\
	\ge &
	\gamma (1-\Theta)
	\underbrace{
		\sum_{k=1}^{K}
		\left\{
		\mathscr{G}^{\sigma_2'}_k\hspace{-0.08em}\left(\0; \v_k', \vsubset{\xv}{k}\right)
		-  \mathscr{G}^{\sigma_2'}_k\hspace{-0.08em}\left(\Delta{\vsubset{\xv}{k}^\star}; \v_k', \vsubset{\xv}{k}\right)
		\right\}
	}_{C}
	\end{align*}
	Let $\alpha\in[0, 1]$ and apply \cref{lemma:convexity:subproblem}, the previous inequality becomes
	\begin{equation}
		D_2 \ge \gamma (1-\Theta) (1-\alpha) C
		+ \alpha\frac{\gamma\sigma_1'}{2\tau} 
			\sum_{k=1}^K \norm{\BA \Delta \vsubset{\x}{k}}_2^2
	\end{equation}
	where $\sigma_1' := \frac{(1-\Theta)}{2(1+\Theta)}\sigma'$.
	From the definition of $u_i$ we know
	\begin{equation}\label{eq:dual:u}
	g_i(u_i) = u_i(-\BA_i^\top \bar{\locgra}') - g_i^*(-\BA_i^\top \bar{\locgra}')
	\end{equation}
	Replacing $\Delta \xv_i=s(u_i - x_i)$ in $C$ gives
	\begin{align*}
	C \ge& \sum_{k=1}^{K} \left\{
	\sum_{i \in \Pk} (g_i(x_i) - g_i(x_i + \Delta x_i))
	-
	\left\langle
	\locgra_k',
	\BA \Delta \vsubset{\xv}{k}
	\right\rangle
	- \frac{\sigma'}{2\tau} \norm{\BA \Delta \vsubset{\xv}{k}}^2
	\right\}\\
	\ge& \sum_{k=1}^{K} \left\{
	\sum_{i \in \Pk} (sg_i(x_i) - sg_i(u_i) + \frac{\mu_g}{2\tau}(1-s)s(u_i - x_i)^2)
	-
	\left\langle
	\locgra_k',
	\BA \Delta \vsubset{\xv}{k}
	\right\rangle
	- \frac{\sigma'}{2\tau} \norm{\BA \Delta \vsubset{\xv}{k}}^2
	\right\}\\
	\stackrel{\eqref{eq:dual:u}}{=} &
	\sum_{k=1}^K (\sum_{i\in\Pk} \left(sg_i(x_i)
	+ sg^*_i(-\BA_i^\top \bar{\locgra}'))
	+ s\langle \v_k/K, \locgra_k'\rangle
	\right)
	-
	s\sum_{k=1}^K (\langle \v_k/K, \locgra_k'\rangle - \langle \BA\vsubset{\u}{k}, \bar{\locgra}'\rangle)\\
	&+
	\sum_{k=1}^K\sum_{i\in\Pk}\left\{
	\frac{\mu_g}{2\tau}(1-s)s(u_i - x_i)^2
	\right\}
	-
	\sum_{k=1}^{K}
	\left\langle
	\locgra_k',
	\BA \Delta \vsubset{\xv}{k}
	\right\rangle
	- \sum_{k=1}^{K} \frac{\sigma'}{2\tau} \norm{\BA \Delta \vsubset{\xv}{k}}^2\\
	=& \sum_{k=1}^K \big \{\sum_{i\in\Pk} \left(sg_i(x_i)
	+ sg^*_i(-\BA_i^\top \bar{\locgra}')\right)
	+ s \langle \v_k/ K,  {\locgra}_k'\rangle
	\big \}
	+ \frac{\mu_g }{2}(1-s)s\norm{\uv-\xv}^2\\
	&- \sum_{k=1}^{K} \frac{s^2\sigma'}{2\tau} \norm{\BA (\uv-\xv)_{[k]}}^2
	-
	s\sum_{k=1}^K 
	(\langle \v_k/K, \locgra_k'\rangle
	- \langle \BA\vsubset{\u}{k}, \bar{\locgra}'\rangle
	+
	\langle \locgra_k', \BA\vsubset{(\u - \x)}{k} \rangle)\\
	=&
	sG_\mathcal{H}(\xv; \{{\v_k'}\}_{k=1}^K)
	+ \frac{\mu_g}{2}(1-s)s\norm{\uv-\xv}^2\\
	&-  \frac{s^2\sigma'}{2\tau} \sum_{k=1}^{K}\norm{\BA (\uv-\xv)_{[k]}}^2
	-
	s\sum_{k=1}^K 
	(\langle \v_k'/K, \locgra_k'\rangle
	- \langle \BA\vsubset{\u}{k}, \bar{\locgra}'\rangle
	+
	\langle \locgra_k', \BA\vsubset{(\u - \x)}{k} \rangle)
	\end{align*}
	We can bound the last term of the previous equation as $D_3$
	\begin{align*}
	\frac{1}{s}
	D_3=&\sum_{k=1}^K 
	(\langle \v_k'/K, \locgra_k'\rangle
	- \langle \BA\vsubset{\u}{k}, \bar{\locgra}'\rangle
	+
	\langle \locgra_k', \BA\vsubset{(\u - \x)}{k} \rangle
	)\\
	=&
	\sum_{k=1}^K 
	(\langle \locgra_k', \v_k'/K\rangle
	- \langle \locgra_k', \BA\u/K \rangle
	+
	\langle \locgra_k', \BA\vsubset{(\u - \x)}{k} \rangle
	)\\
	=&
	\frac{1}{K} \sum_{k=1}^K 
	\langle \locgra_k', \v_k' - \BA\x\rangle
	- \sum_{k=1}^K 
	\langle \bar{\locgra}', \BA\vsubset{(\u - \x) }{k} \rangle
	+ \sum_{k=1}^K 
	\langle \locgra_k', \BA\vsubset{(\u - \x)}{k} \rangle\\
	=&
	\frac{1}{K} \sum_{k=1}^K 
	\langle \locgra_k' - \bar{\locgra}', \v_k' - \BA\x\rangle
	+ \sum_{k=1}^K 
	\langle \locgra_k' - \bar{\locgra}', \BA\vsubset{(\u - \x)}{k} \rangle
	\end{align*}
	
	Bound the gradient terms with consensus violation.
	First bound $\sum_{k=1}^{K} \norm{\locgra_k' - \bar{\locgra}'}_2^2$, 
	define $\locgra_c := \nabla f(\BA\x)$
	\begin{align*}
	\sum_{k=1}^{K} \norm{\locgra_k' - \bar{\locgra}'}_2^2
	\le& 2 \sum_{k=1}^{K} \left(
	\norm{\locgra_k' - \locgra_c}_2^2 + 
	\norm{\locgra_c - \bar{\locgra}'}_2^2
	\right)
	\le 2 \sum_{k=1}^{K}
	\norm{\locgra_k' - \locgra_c}_2^2 + 
	2 \frac{1}{K} \sum_{k=1}^{K}
	\norm{\locgra_c - {\locgra}_k'}_2^2
	\end{align*}
	Apply the $1/\tau$-smoothness of $f$ we have
	\begin{equation}\label{eq:sum_of_subproblems_reductionNewGH_CV:1}
	\sum_{k=1}^{K} \norm{\locgra_k' - \bar{\locgra}'}_2^2
	\le \frac{4}{\tau^2} \sum_{k=1}^{K}
	\norm{ \v_k' - \BA\x }_2^2
	\le \frac{4\beta^2}{\tau^2} \sum_{k=1}^{K}
	\norm{ \v_k - \BA\x }_2^2
	\end{equation}

	Bound the first term in $D_3$
	\begin{align*}
	s\frac{1}{K} \sum_{k=1}^K 
	\langle \locgra_k' - \bar{\locgra}', \v_k' - \BA\x\rangle
	\le& \frac{s}{2K}\sum_{k=1}^{K} 
	\left( \tau \norm{\locgra_k' - \bar{\locgra}'}_2^2
	+ \frac{1}{\tau}\norm{ \v_k' - \BA\x }_2^2
	\right) \\
	\stackrel{\eqref{eq:sum_of_subproblems_reductionNewGH_CV:1}}{\le}&
	\frac{5\beta^2s}{2\tau K} \sum_{k=1}^{K}  \norm{ \v_k - \BA\x }_2^2
	\end{align*}
	
	Bound the second term in $D_3$
	\begin{align*}
	&s \sum_{k=1}^K \langle \locgra_k' - \bar{\locgra}', \BA\vsubset{(\u - \x)}{k} \rangle
	\le
	\frac{\tau}{2\sigma'\beta} \sum_{k=1}^K \norm{\locgra_k' - \bar{\locgra}' }_2^2
	+ \frac{s^2\sigma'\beta}{2\tau} \sum_{k=1}^K \norm{\BA\vsubset{(\u - \x)}{k}}_2^2\\
	\stackrel{\eqref{eq:sum_of_subproblems_reductionNewGH_CV:1}}{\le}&
	\frac{4\beta}{ 2\tau\sigma'} \sum_{k=1}^K \norm{\v_k - \BA\x}_2^2
	+ \frac{s^2 \sigma'\beta}{2\tau} \sum_{k=1}^K \norm{\BA\vsubset{(\u - \x)}{k}}_2^2
	\end{align*}
	Then
	\begin{align*}
	C \ge& sG_\mathcal{H}(\x, \{\v_k'\}_{k=1}^K)
	+ \frac{\mu_g}{2}(1-s)s\norm{\uv-\xv}^2
	-  \frac{s^2 (\sigma'+\beta\sigma')}{2\tau} \sum_{k=1}^{K}\norm{\BA (\uv-\xv)_{[k]}}^2 \\
	&- \frac{9\beta}{ 2\tau\sigma'}  \sum_{k=1}^K \norm{\v_k - \BA\x}_2^2
	\end{align*}
	Then let $\bar{\sigma}' := (1+\beta)\sigma'$ and $\eta:=\gamma(1-\Theta)(1-\alpha)$  we have
	\begin{align*}
	&\mathbb{E}[\vc{\ooa}{t} - \vc{\ooa}{t+1}
		- \alpha\frac{\gamma\sigma_1'}{2\tau}
		\sum_{k=1}^K \norm{\BA \Delta \vc{\vsubset{\x}{k}}{t} }_2^2
	] \\
	\ge&
	\eta \left(
	sG_\mathcal{H}(\vc{\xv}{t}; \{{\vc{\v_k'}{t}}\}_{k=1}^K)
	-  \frac{s^2 \bar{\sigma}'}{2\tau} \vc{R}{t}
	\right)
	- \frac{9\eta\beta}{ 2\tau\sigma'} \sum_{k=1}^K \norm{\vc{\v_k}{t} - \BA\vc{\x}{t} }_2^2
	\end{align*}
\end{proof}

The following lemma correlates the consensus violation with the size of updates
\begin{lemma}\label{lemma:relate_cv_with_updates}
    Let $c>0$ be any constant value. Define $\vc{\delta}{0} := \0$ and
    \begin{equation}\label{eq:relate_cv_with_updates:1}
        \vc{\delta}{t+1} := \beta \vc{\delta}{t} + c c_1 \sum_{k=1}^K\norm{\Delta\vc{\v}{t}_k}_2^2
    \end{equation}
    Then the consensus violation has an upper bound.
    \begin{equation}
        \sum_{k=1}^K\norm{\vc{\v}{t}_k -\vc{\v}{t}}_2^2
        \le e_1 \vc{\delta}{t}
    \end{equation}
    where $e_1:= (1-\beta) / c$.
\end{lemma}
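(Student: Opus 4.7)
The plan is a straightforward induction on $t$, using \cref{lemma:consensus_violation} as the driving recurrence. The recursion defining $\delta^{(t)}$ has the same contraction factor $\beta$ as the consensus-violation recursion, so all we need to check is that the ``forcing term'' $c c_1 \sum_k \|\Delta \v_k^{(t)}\|_2^2$ is, up to the scaling factor $e_1 = (1-\beta)/c$, an upper bound on the forcing term $(1-\beta) c_1 \sum_k \|\Delta \v_k^{(t)}\|_2^2$ appearing in \cref{lemma:consensus_violation}. This is immediate since $(1-\beta) c_1 = e_1 \cdot c c_1$.

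More explicitly, let $\mathrm{CV}^{(t)} := \sum_{k=1}^K \|\vc{\v_k}{t} - \vc{\BA\x}{t}\|_2^2$. The base case $t=0$ holds because the algorithm is initialized at $\vc{\v_k}{0} = \0$ and $\vc{\x}{0} = \0$, so $\mathrm{CV}^{(0)} = 0 = e_1 \vc{\delta}{0}$. For the inductive step, assume $\mathrm{CV}^{(t)} \le e_1 \vc{\delta}{t}$. Applying \cref{lemma:consensus_violation} and then the inductive hypothesis yields
\begin{align*}
\mathrm{CV}^{(t+1)}
&\le \beta\, \mathrm{CV}^{(t)} + (1-\beta)c_1 \sum_{k=1}^K \|\Delta \vc{\v_k}{t}\|_2^2\\
&\le \beta\, e_1 \vc{\delta}{t} + e_1 \cdot c c_1 \sum_{k=1}^K \|\Delta \vc{\v_k}{t}\|_2^2\\
&= e_1\bigl(\beta \vc{\delta}{t} + c c_1 \sum_{k=1}^K \|\Delta \vc{\v_k}{t}\|_2^2\bigr)
= e_1 \vc{\delta}{t+1},
\end{align*}
which closes the induction.

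There is essentially no obstacle here; the only thing worth flagging is the bookkeeping that $\vc{\v}{t}$ in the statement means the consensus vector $\vc{\BA\x}{t}$ (which is what $\frac{1}{K}\sum_k \vc{\v_k}{t}$ equals by \cref{lemma:DH}), so the quantity bounded on the left is genuinely the consensus violation. The role this lemma plays downstream is to convert bounds on per-iteration update magnitudes (which appear naturally via smoothness and the subproblem accuracy in \cref{lemma:sum_of_subproblems_reductionNewGH_CV}) into bounds on consensus violation; the free parameter $c$ lets one trade off between the two at the moment of summation, which is why the statement leaves $c$ arbitrary rather than fixing it here.
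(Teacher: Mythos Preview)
Your proof is correct and follows essentially the same approach as the paper: an induction on $t$ that invokes \cref{lemma:consensus_violation} for the recurrence on the consensus violation and then uses the defining recursion of $\vc{\delta}{t}$ together with $e_1 c = 1-\beta$ to close the step. Your clarification that $\vc{\v}{t}$ denotes the consensus vector $\BA\vc{\x}{t}$ is accurate and matches how the paper applies \eqref{eq:consensus_violation}.
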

\begin{proof}
	Let
	\begin{equation}
		a_t := \sum_{k=1}^K\norm{\Delta\vc{\v}{t}_k}_2^2,
		b_t := \sum_{k=1}^K\norm{\vc{\v}{t}_k -\vc{\v}{t}}_2^2
	\end{equation}
	We want to prove that
	\begin{align}\label{eq:relate_cv_with_updates:3}
		b_t \le e_1 \vc{\delta}{t}
	\end{align}
	First $t=0$, $b_0=\vc{\delta}{0}=0$. If the claim holds for time $t-1$, then
	$b_{t-1} \le e_1 \vc{\delta}{t-1}$. At time $t$, we have
    \begin{align}
        b_t 
        \stackrel{\eqref{eq:consensus_violation}}{\le}&
        \beta b_{t-1} + (1-\beta)c_1 a_{t-1} \\
        \le& \beta \frac{1-\beta}{c} \vc{\delta}{t-1}
		+ (1 - \beta) c_1 a_{t-1} \\
		\le& \frac{1-\beta}{c} (\beta \vc{\delta}{t-1} + c c_1 a_{t-1}) \\
		\stackrel{\eqref{eq:relate_cv_with_updates:1}}{\le}& e_1 \vc{\delta}{t}
    \end{align}
	Thus we proved the lemma. 
\end{proof}

\begin{lemma}\label{lemma:sum_of_subproblems_reductionNewGH_timeinvar_prime}
	Let $g_i$ be strongly convex with convexity parameter $\mu_g\ge 0$ with respect to the norm $\norm{\cdot}, \forall~i\in[n]$.
	Then for all iterations $t$ of outer loop, and $s\in[0, 1]$, it holds that
	\begin{equation}\label{eq:sum_of_subproblems_reductionNewGH_timeinvar_prime}
	\begin{split}
	&\mathbb{E}[\ooa(\vc{\x}{t}; \{\vc{\v_k}{t}\}_{k=1}^K)- \ooa(\vc{\x}{t+1}; \{\vc{\v_k}{t+1}\}_{k=1}^K)
	+ \frac{1+\beta}{2} \vc{\delta}{t} - \vc{\delta}{t+1}]\\
	\ge&
	\eta \left(
	s G_\mathcal{H}(\vc{\x}{t}; \{\textstyle \vc{\sum_{i=1}^K \mixingmat_{ki}\v_i}{t}\}_{k=1}^K)
	-  \frac{s^2 \bar{\sigma}'}{2\tau} \vc{R}{t}
	\right)
	\end{split}
	\end{equation}
	where $\alpha:=(1+\frac{(1-\beta)^2}{36(1+\Theta)\beta})^{-1} \in[0,1]$, $\eta:=\gamma(1-\Theta)(1-\alpha)$, $\bar{\sigma}':=(1 + \beta)\sigma'$
	and 
	\begin{equation}\label{eq:sum_of_subproblems_reductionNew_R1}
	\textstyle
	R^{(t)}:= - \frac{\tau\mu_g  (1-s)}{\bar{\sigma}'s}\norm{\vc{\uv}{t} - \vc{\xv}{t} }^2
	+ \sum_{k=1}^{K}\norm{\BA\vsubset{(\vc{\uv}{t} - \vc{\xv}{t})}{k}}^2
	\end{equation}
	for $\vc{\uv}{t}\in\R^n$ with 
	$\vc{\bar{\locgra}'}{t}:= \sum_{k=1}^K \nabla f(\sum_{i=1}^K \mixingmat_{ik} \vc{\v_i}{t})$
	\begin{equation}
	\vc{u_{i}}{t} \in \partial g_i^*(-\BA_i^\top \vc{\bar{\locgra}'}{t} ) \qquad k \text{ s.t. }i \in \Pk.
	\end{equation}
	where $\vc{\delta}{t}$ is defined in \cref{lemma:relate_cv_with_updates}.
\end{lemma}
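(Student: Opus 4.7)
The plan is to upgrade Lemma \ref{lemma:sum_of_subproblems_reductionNewGH_CV} by letting the auxiliary sequence $\delta^{(t)}$ from Lemma \ref{lemma:relate_cv_with_updates} serve as a Lyapunov potential that absorbs the residual consensus-violation term $-\tfrac{9\beta\eta}{2\tau\sigma'}\sum_k \norm{\v_k-\BA\x}^2$ appearing on the right-hand side there. The free constant $c>0$ entering the recursion defining $\delta^{(t)}$ will be tuned so that both the ``update-size'' and ``potential'' residuals vanish; the specific form of $\alpha$ in the statement then emerges naturally as the condition that makes such a $c$ exist.

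Concretely, I first use the recursion $\delta^{(t+1)} = \beta\delta^{(t)} + c c_1 \sum_k \norm{\Delta\v_k^{(t)}}^2$ with $c_1 = \gamma^2 K^2/(1-\beta)^2$ to rewrite
\[ \tfrac{1+\beta}{2}\delta^{(t)} - \delta^{(t+1)} = \tfrac{1-\beta}{2}\delta^{(t)} - c c_1 \sum_{k=1}^K \norm{\Delta\v_k^{(t)}}^2. \]
Adding this identity to the inequality of Lemma \ref{lemma:sum_of_subproblems_reductionNewGH_CV} and noting that $\BA\vsubset{\Delta\x}{k} = \Delta\v_k$ (since $\vsubset{\Delta\x}{k}$ vanishes outside $\Pk$), and then bounding $\sum_k \norm{\v_k-\BA\x}^2 \le \tfrac{1-\beta}{c}\delta^{(t)}$ via Lemma \ref{lemma:relate_cv_with_updates}, the combined inequality becomes
\[ \mathbb{E}\bigl[\ooa(\vc{\x}{t};\{\vc{\v_k}{t}\})-\ooa(\vc{\x}{t+1};\{\vc{\v_k}{t+1}\}) + \tfrac{1+\beta}{2}\delta^{(t)} - \delta^{(t+1)}\bigr] \]
\[ \ge \eta\bigl(sG_\mathcal{H} - \tfrac{s^2\bar\sigma'}{2\tau}R^{(t)}\bigr) + C_\Delta \sum_k \norm{\Delta\v_k^{(t)}}^2 + C_\delta\, \delta^{(t)}, \]
with $C_\Delta := \alpha\tfrac{\gamma\sigma_1'}{2\tau} - cc_1$ and $C_\delta \ge \tfrac{1-\beta}{2} - \tfrac{9\beta\eta(1-\beta)}{2\tau\sigma' c}$. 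The claimed inequality follows as soon as both $C_\Delta \ge 0$ and $C_\delta \ge 0$, i.e., whenever $c$ can be chosen inside the interval $\bigl[\tfrac{9\beta\eta}{\tau\sigma'},\; \tfrac{\alpha\sigma_1'(1-\beta)^2}{2\tau\gamma K^2}\bigr]$.

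The core technical step is verifying that this interval is non-empty for the prescribed $\alpha$. Substituting $\eta = \gamma(1-\Theta)(1-\alpha)$, $\sigma_1' = \tfrac{(1-\Theta)\sigma'}{2(1+\Theta)}$, and the safe relaxation $\sigma' \ge \gamma K$ recommended in \cref{sec:cola}, the feasibility condition simplifies to $\tfrac{\alpha}{1-\alpha} \ge \tfrac{36(1+\Theta)\beta}{(1-\beta)^2}$, which the choice $\alpha = (1+\tfrac{(1-\beta)^2}{36(1+\Theta)\beta})^{-1}$ achieves with equality; a valid $c$ therefore exists and both residual coefficients can be driven to zero. The main obstacle is the delicate bookkeeping of constants---particularly tracking how the factor $9$ from Lemma \ref{lemma:sum_of_subproblems_reductionNewGH_CV} interacts with the extra factors of $2$ from Young's inequality and the $(1-\Theta)/(1+\Theta)$ ratio inside $\sigma_1'$ to produce exactly $36 = 4\cdot 9$, so that the prescribed $\alpha$ is tight rather than merely sufficient. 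Everything else is routine substitution and regrouping.
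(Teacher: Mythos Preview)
Your proposal is correct and follows essentially the same route as the paper: starting from \cref{lemma:sum_of_subproblems_reductionNewGH_CV}, adding the potential increment $\tfrac{1+\beta}{2}\delta^{(t)}-\delta^{(t+1)}$, invoking \cref{lemma:relate_cv_with_updates} to trade the consensus-violation term for $\delta^{(t)}$, and then choosing $c$ and $\alpha$ so that both residual coefficients vanish. The paper carries out the same bookkeeping by fixing $c$ so that $f_1/(cc_1)=1$ (your upper endpoint $C_\Delta=0$) and then solving $f_2e_1+\beta=\tfrac{1+\beta}{2}$ for $\alpha$ (your lower endpoint $C_\delta=0$); both the paper and you implicitly use $\sigma'=\gamma K$ to obtain the stated formula for $\alpha$.
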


\begin{proof}
	In this proof we use $\v_k':= \sum_i \mixingmat_{ki} \v_i$.
	From \cref{lemma:sum_of_subproblems_reductionNewGH_CV} we know that
	\begin{align*}
		&\mathbb{E}\bigg[\vc{\ooa}{t} - \vc{\ooa}{t+1}
			- \alpha\frac{\gamma\sigma_1'}{2\tau} \sum_{k=1}^K \norm{\BA \Delta \vc{\vsubset{\x}{k}}{t} }_2^2
            + \frac{9\eta\beta}{ 2\tau\sigma'} \sum_{k=1}^K \norm{\vc{\v_k}{t} - \BA\vc{\x}{t} }_2^2
		\bigg] \\
		\ge&
		\eta \left(
		sG_\mathcal{H}(\vc{\xv}{t}; \{{\vc{\v_k'}{t}}\}_{k=1}^K)
		-  \frac{s^2 \bar{\sigma}'}{2\tau} \vc{R}{t}
		\right)
	\end{align*}
    Use the following notations to simplify the calculation
    \begin{equation}
        a_t := \sum_{k=1}^K\norm{\Delta\vc{\v}{t}_k}_2^2,
        b_t := \sum_{k=1}^K\norm{\vc{\v}{t}_k -\vc{\v}{t}}_2^2,
        f_1 := \alpha\frac{\gamma \sigma_1'}{2\tau},
        f_2 := \frac{9\eta\beta}{2\tau \sigma'}
    \end{equation}
    From \cref{lemma:relate_cv_with_updates} we know that 
    \begin{equation}\label{eq:sum_of_subproblems_reductionNew_R1:1}
        f_2 b_t - f_1 a_t \le f_2 e_1 \vc{\delta}{t} - f_1 (\vc{\delta}{t+1} - \beta \vc{\delta}{t}) / (cc_1)
        = (f_2 e_1  + \frac{f_1\beta}{cc_1})\vc{\delta}{t} - \frac{f_1}{cc_1}\vc{\delta}{t+1}
    \end{equation}
    Fix constant $c$ such that $\frac{f_1}{cc_1}=1$ in \eqref{eq:sum_of_subproblems_reductionNew_R1:1}
    \begin{equation}\label{eq:sum_of_subproblems_reductionNew_R1:2}
        c = \frac{f_1}{c_1} = \alpha\frac{(1-\beta)^2 \sigma_1'}{2\tau \gamma K^2}
	\end{equation}
	Fix $(f_2 e_1  + \frac{f_1\beta}{cc_1}) = \frac{1+\beta}{2} < 1$
	in \eqref{eq:sum_of_subproblems_reductionNew_R1:1},
    to determine $\alpha\in[0, 1]$. First consider $f_2e_1$ 
    \begin{equation}
        f_2e_1 = \frac{9\gamma(1-\Theta)(1-\alpha)\beta}{2\tau \sigma'} \frac{1-\beta}{c}
		\stackrel{\eqref{eq:sum_of_subproblems_reductionNew_R1:2}}{=}
		\frac{1-\alpha}{\alpha}
		\frac{9(1-\Theta)\beta}{1-\beta} \frac{\gamma K}{\sigma_1'}
    \end{equation}
	Then we have 
    \begin{equation}
        f_2 e_1  + \frac{f_1\beta}{cc_1} = \frac{1-\alpha}{\alpha} \frac{9(1-\Theta)\beta}{1-\beta} \frac{\gamma K}{\sigma_1'} + \beta = \frac{1 + \beta}{2} < 1.
    \end{equation}
    Thus we can fix $\alpha\in [0, 1]$ to be
    \begin{equation}
		\alpha:=\bigg(1+\frac{(1-\beta)^2}{36(1+\Theta)\beta}\bigg)^{-1}
    \end{equation}
    So when have these information.
    \begin{equation}
        f_2 b_t - f_1 a_t \le \frac{1 + \beta}{2} \vc{\delta}{t} - \vc{\delta}{t+1}
    \end{equation}
    Finally, using all of the previous equations we know 
    \begin{equation}
        \mathbb{E}\bigg[\vc{\ooa}{t} - \vc{\ooa}{t+1}
            + \frac{1 + \beta}{2} \vc{\delta}{t} - \vc{\delta}{t+1}
        \bigg]
        \ge
        \eta \left(
        sG_\mathcal{H}(\vc{\xv}{t}; \{{\vc{\v_k'}{t}}\}_{k=1}^K)
        -  \frac{s^2 \bar{\sigma}'}{2\tau} \vc{R}{t}
        \right)
    \end{equation}
\end{proof}

\subsection{Main theorems}\label{ssec:theorems}
Here we present the proofs of \cref{theorem:linear_rate} and \cref{theorem:sublinear_rate}.

\begin{lemma}\label{lemma:R_bound}
	If $g_i^*$ are $L$-Lipschitz continuous for all $i\in[n]$, then
	\begin{equation}\label{eq:R_bound_1}
	\forall~t: \vc{R}{t} \le 4 L^2 \sum_{k=1}^{K} \sigma_k n_k=4L^2\sigma,
	\end{equation}
	where $	\sigma_k:= \max_{\vsubset{\xv}{k}\in\R^n}
	\norm{\vsubset{\BA}{k}\vsubset{\xv}{k}}^2 / \norm{\vsubset{\xv}{k}}^2$.
\end{lemma}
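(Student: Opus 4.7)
The plan is to bound each of the two terms in $R^{(t)}$ separately. Since $\mu_g\ge 0$ and $s\in[0,1]$ (and $\bar{\sigma}'>0$), the first term $-\frac{\tau\mu_g(1-s)}{\bar{\sigma}'s}\|\mathbf{u}^{(t)}-\mathbf{x}^{(t)}\|^2$ is nonpositive and can simply be dropped. Thus it suffices to show
\begin{equation*}
\sum_{k=1}^{K}\bigl\|\mathbf{A}(\mathbf{u}^{(t)}-\mathbf{x}^{(t)})_{[k]}\bigr\|^2 \le 4L^2\sum_{k=1}^{K}\sigma_k n_k.
\end{equation*}

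Next I would invoke the definition of $\sigma_k$ in \eqref{eq:sigma}, which gives, for each $k$,
\begin{equation*}
\bigl\|\mathbf{A}_{[k]}(\mathbf{u}^{(t)}-\mathbf{x}^{(t)})_{[k]}\bigr\|^2 \le \sigma_k \,\bigl\|(\mathbf{u}^{(t)}-\mathbf{x}^{(t)})_{[k]}\bigr\|^2.
\end{equation*}
Note that $\mathbf{A}(\mathbf{u}^{(t)}-\mathbf{x}^{(t)})_{[k]}=\mathbf{A}_{[k]}(\mathbf{u}^{(t)}-\mathbf{x}^{(t)})_{[k]}$ because the restriction $[\cdot]_k$ zeroes out coordinates outside $\mathcal{P}_k$.

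The main remaining task is to bound each coordinate $(u_i^{(t)}-x_i^{(t)})^2$ by $4L^2$. Here I would use \cref{lemma:duality}: the hypothesis that $g_i^*$ is $L$-Lipschitz is equivalent to $g_i$ having $L$-bounded support. Since $u_i^{(t)}\in\partial g_i^*(-\mathbf{A}_i^\top \bar{\mathbf{g}}'^{(t)})$, and since finiteness of $g_i$ is maintained along the iterates (the algorithm's updates keep $x_i^{(t)}$ in the domain of $g_i$, so $|x_i^{(t)}|\le L$; likewise $|u_i^{(t)}|\le L$ because $u_i^{(t)}$ lies in the subdifferential set which is contained in $\operatorname{dom} g_i$, cf.\ the same argument used in \citep{Dunner:2016vga,smith2016cocoa}), we obtain $(u_i^{(t)}-x_i^{(t)})^2\le 2(u_i^{(t)})^2+2(x_i^{(t)})^2\le 4L^2$.

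Summing over $i\in\mathcal{P}_k$ yields $\|(\mathbf{u}^{(t)}-\mathbf{x}^{(t)})_{[k]}\|^2\le 4L^2 n_k$, and combining with the previous displays gives
\begin{equation*}
\sum_{k=1}^{K}\bigl\|\mathbf{A}(\mathbf{u}^{(t)}-\mathbf{x}^{(t)})_{[k]}\bigr\|^2 \le \sum_{k=1}^{K}\sigma_k\cdot 4L^2 n_k = 4L^2\sigma,
\end{equation*}
which completes the proof. The only subtle step is the bounded-support argument for $u_i^{(t)}$ and $x_i^{(t)}$; everything else is direct application of the definitions of $\sigma_k$ and nonnegativity.
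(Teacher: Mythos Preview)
Your proof is correct and follows essentially the same route as the paper: drop the first (nonpositive) term, apply the definition of $\sigma_k$ to factor out $\sigma_k$ from each block, and then bound $|u_i^{(t)}-x_i^{(t)}|^2\le 4L^2$ via the $L$-bounded support of $g_i$ (equivalently, $L$-Lipschitzness of $g_i^*$). The paper's version simply sets $\mu_g=0$ (since the lemma is only invoked in the non-strongly-convex theorem) whereas you more generally observe the first term is nonpositive for any $\mu_g\ge 0$; and you correctly write $\le$ where the paper's display has an (imprecise) equality in the $\sigma_k$ step.
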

\begin{proof}
	For general convex functions, the strong convexity parameter is $\mu_g =0$, and hence the definition \eqref{eq:sum_of_subproblems_reductionNew_R1} of the complexity constant $\vc{R}{t}$ becomes
	\begin{align*}
	\vc{R}{t}
	=
	\sum_{k=1}^{K} \norm{
		\BA \vsubset{(\vc{\u}{t} - \vc{\xv}{t})}{k}
	} ^2
	=
	\sum_{k=1}^{K} \sigma_k
	\norm{
		\vsubset{(\vc{\u}{t} - \vc{\xv}{t})}{k}
	} ^2
	\le
	\sum_{k=1}^{K} \sigma_k
	|\Pk|4L^2
	= 4L^2 \sigma
	\end{align*}
	Here the last inequality follows from $L$-Lipschitz property of $g^*$.
\end{proof}

\TheoremLinearRate*

\begin{proof}
	If $g_i(\cdot)$ are $\mu_g$-strongly convex, one can use the definition of $\sigma_k$ and $\sigma_\text{max}$ to find
	\begin{align}\label{eq:TheoremLinearRate:1}
	\notag
	\vc{R}{t}
	\le&
	- \frac{\tau\mu_g(1-s)}{\bar{\sigma}'s}\norm{\vc{\uv}{t} - \vc{\xv}{t} }^2
	+ \sum_{k=1}^{K}\norm{\BA\vsubset{(\vc{\uv}{t} - \vc{\xv}{t})}{k}}^2\\
	\le&
	\left(
	- \frac{\tau\mu_g(1-s)}{\bar{\sigma}'s}
	+\sigma_\text{max}
	\right)
	\norm{\vc{\uv}{t} - \vc{\xv}{t} }^2 .
	\end{align}
    If we set
    \begin{equation}
        s_0 = \frac{\tau\mu_g}{\tau\mu_g+\sigma_\text{max} \bar{\sigma}'}
    \end{equation}
    then $\vc{R}{t} \le 0$. The duality gap has a lower bound 
	duality gap
    \begin{equation}
        G_\hh(\vc{\x}{t}, \{ \vc{\v_k'}{t} \}_{k=1}^K)
        \ge \ooa(\vc{\x}{t}, \{\vc{\v_k'}{t}\}_{k=1}^K) - \ooa^*
        \ge \ooa(\vc{\x}{t+1}, \{\vc{\v_k}{t+1}\}_{k=1}^K) - \ooa^*
    \end{equation}
    and use \cref{lemma:sum_of_subproblems_reductionNewGH_timeinvar_prime}, we have
	\begin{align}\label{eq:TheoremLinearRate:2}
	\E[\vc{\ooa}{t} - \vc{\ooa}{t+1} + \frac{1 + \beta}{2} \vc{\delta}{t} - \vc{\delta}{t+1}]
	\ge
	\eta s_0
    G_\hh
	\ge
	\eta s_0
	(\vc{\ooa}{t+1} - {\ooa}^\star)
	\end{align}
    Then
    \begin{equation}
        \E[\vc{\ooa}{t} - {\ooa}^\star + \frac{1+\beta}{2}\vc{\delta}{t}]
        \ge
        (1 + \eta s_0)\E[\vc{\ooa}{t+1} - {\ooa}^\star + \vc{\delta}{t+1}]
    \end{equation}

	Therefore if we denote $\vc{\varepsilon}{t}_\hh:= \vc{\ooa}{t} - \ooa^\star + \vc{\delta}{t}$ we have recursively that
	\begin{align*}
	\E [\vc{\varepsilon}{t}_\hh]
	\le 
    \left(1 - \frac{\eta s_0}{1 + \eta s_0} \right)^{t}
	\vc{\varepsilon}{0}_\hh
	\le
	\exp\left( - \frac{\eta s_0}{1 + \eta s_0} t \right)
	\vc{\varepsilon}{0}_\hh
	\end{align*}

	The right hand side will be smaller than some $\varepsilon_\hh$ if 
	\begin{align*}
	T \ge  \frac{1 + \eta s_0}{\eta s_0}
	\log \frac{\vc{\varepsilon}{0}_\hh}{\varepsilon_\hh}
	\end{align*}
	Moreover, to bound the duality gap $\vc{G_\hh}{t}$, we have
	\begin{align*}
	\eta s_0
	\vc{G_\hh}{t}
	\stackrel{\eqref{eq:TheoremLinearRate:2}}{\le}
	\E [ \vc{\ooa}{t} - \vc{\ooa}{t+1} + \frac{1+\beta}{2}\vc{\delta}{t} - \vc{\delta}{t+1}]
	\le 
	\E [ \vc{\ooa}{t} - {\ooa}^* + \vc{\delta}{t}]
	\end{align*}
	Hence if $\varepsilon_\hh \le \eta s_0 \varepsilon_{G_\hh}$ then
	$\vc{G_\hh}{t} \le \varepsilon_{G_\hh}$. Therefore after 
	\begin{align*}
	T \ge  \frac{1 + \eta s_0}{\eta s_0}
	\log \left(
	\frac{1}{\eta s_0}
	\frac{\vc{\varepsilon_\hh}{0}}{\varepsilon_{G_\hh}}
	\right)
	\end{align*}
	iterations we have obtained a duality gap less then $\varepsilon_{G_\hh}$.

\end{proof} 

\TheoremSublinearRate*
\begin{proof}
We write $\vc{\ooa}{t}$ instead of $\ooa(\vc{\x}{t}; \{\vc{\v_k}{t}\}_{k=1}^K)$ and $\ooa^\star$ instead of $\ooa({\x}^\star; \{{\v_k}^\star\}_{k=1}^K)$.
We begin by estimating the expected change of feasibility for $\ooa$. We can bound this above by using \cref{lemma:sum_of_subproblems_reductionNewGH_timeinvar_prime} and the fact that $\OB(\cdot)$ is always a lower bound for $-\OA(\cdot)$ and then applying \eqref{eq:R_bound_1} to find
\begin{align}\label{eq:general_convex:2}
(1 + \eta s)\E [ \vc{\ooa}{t+1} - \ooa^\star + \vc{\delta}{t+1}]
\le 
( \vc{\ooa}{t} -\ooa^\star + \vc{\delta}{t} ) 
+ \eta \frac{\bar{\sigma}' s^2}{2\tau} 4L^2\sigma
\end{align}
Use \eqref{eq:general_convex:2} recursively we have
\begin{equation}\label{eq:general_convex:3}
\E[\vc{\ooa}{t} -\ooa^\star + \vc{\delta}{t}]
\le 
(1 + \eta s)^{-t}
( \vc{\ooa}{0} -\ooa^\star + \vc{\delta}{0} ) 
+ s \frac{4L^2\bar{\sigma}'\sigma}{2\tau}
\end{equation}
We know that $\vc{\delta}{0}=0$. Choose $s=1$ and
\begin{equation}
    t=t_0:= \max \left\{0, \left\lceil \frac{1 + \eta}{\eta} \log 
    \frac{2\tau ( \vc{\ooa}{0} -\ooa^\star ) }
    {4L^2\sigma \bar{\sigma}'} \right\rceil\right\}
\end{equation}

leads to
\begin{align}\label{eq:general_convex:4}
\E [ \vc{\ooa}{t} - \ooa^\star  + \vc{\delta}{t}]
\le & \frac{4L^2\bar{\sigma}'\sigma}{\tau}
\end{align}
Next, we show inductively that
\begin{equation}\label{eq:general_convex:5}
\forall~t\ge t_0: \E [ \vc{\ooa}{t} - \ooa^\star + \vc{\delta}{t}]
\le
\frac{4L^2\bar{\sigma}'\sigma}{\tau (1+\frac{1}{2} \eta(t-t_0 ))}.
\end{equation}
Clearly, \eqref{eq:general_convex:4} implies that \eqref{eq:general_convex:5} holds for $t=t_0$. Assuming that it holds for any $t\ge t_0$, we show that it must also hold for $t+ 1$. Indeed, using
\begin{equation}\label{eq:general_convex:6}
s =\frac{1}{1+\frac{1}{2} \eta(t-t_0)} \in [0, 1],
\end{equation}
we obtain
\begin{align*}
\E [ \vc{\ooa}{t+1} - \ooa^\star + \vc{\delta}{t+1}]
\le \frac{4L^2\sigma\bar{\sigma}'}{\tau}
\underbrace{
\left(\frac{1 + \frac{1}{2} \eta(t-t_0) -\frac{1}{2}\gamma(1-\Theta) }
{(1 + \frac{1}{2} \eta(t-t_0))^2}\right)}_{D}
\end{align*}
by applying the bounds \eqref{eq:general_convex:2} and \eqref{eq:general_convex:5}, plugging in the definition of $s$ \eqref{eq:general_convex:6}, and simplifying. We upper bound the term $D$ using the fact that geometric mean is less
or equal to arithmetic mean:
\begin{align*}
D =& \frac{1}{1 + \frac{1}{2} \eta(t + 1 -t_0) }
\underbrace{ 
	\frac{(1 + \frac{1}{2} \eta(t + 1-t_0)) (1 + \frac{1}{2} \eta(t -1-t_0)) }
	{(1 + \frac{1}{2} \eta(t -t_0))^2}
}_{\le 1}\\
\le& \frac{1}{1 + \frac{1}{2} \eta(t +1-t_0)}.
\end{align*}

We can  apply the results of \cref{lemma:sum_of_subproblems_reductionNewGH_timeinvar_prime} to get
\begin{align*}
\eta s G_\hh(\vc{\x}{t}, \{ \vc{\v_k}{t} \}_{k=1}^K ) \le 
\vc{\ooa}{t} - \vc{\ooa}{t+1}+  \vc{\delta}{t} - \vc{\delta}{t+1}
\end{align*}
Define the following iterate
\begin{align*}
\bar{\x} := \frac{1}{T-T_0} \sum_{t=T_0+1}^{T-1} \vc{\x}{t},  
\bar{\v}_k := \frac{1}{T-T_0}\sum_{t=T_0+1}^{T-1} \vc{\v}{t}_k, 
\bar{\w}_k := \frac{1}{T-T_0} \sum_{t=T_0+1}^{T-1} \nabla f(\vc{\v_k}{t})
\end{align*}
use \cref{lemma:R_bound} to obtain
\begin{align*}
\E [G_\hh(\bar{\x}, \{ \bar{\v}_k \}_{k=1}^K,\{\bar{\w}_k \}_{k=1}^K )]
\le& \frac{1}{T-T_0}\sum_{t=T_0}^{T-1} \E [G_\hh \left(  \vc{\x}{t}, \{ \vc{\v_k}{t} \}_{k=1}^K  \right) ] \\
\le& \frac{1}{\eta s} \frac{1}{T-T_0} 
\E [ \vc{\ooa}{T_0} - \ooa^\star + \vc{\delta}{T_0} ] + \frac{4L^2\sigma\bar{\sigma}'s}{2\tau}
\end{align*}
If $T\ge \lceil \frac{1}{\eta} \rceil + T_0$ such that $T_0\ge t_0$ we have
\begin{align*}
\E [G_\hh(\bar{\x}, \{ \bar{\v}_k \}_{k=1}^K,\{\bar{\w}_k \}_{k=1}^K )]
\le& \frac{1}{\eta s} \frac{1}{T-T_0}  
\left(  \frac{4L^2\bar{\sigma}'\sigma}{\tau (1+\frac{1}{2} \eta (T_0-t_0 )} \right)
+ \frac{4L^2\sigma\bar{\sigma}'s}{2\tau}\\
=& \frac{4L^2\sigma\bar{\sigma}'}{\tau}\left( 
\frac{1}{\eta s} \frac{1}{T-T_0} \frac{1}{ (1+\frac{1}{2} \eta (T_0-t_0 ))}
+\frac{s}{2}\right).
\end{align*}
Choosing
\begin{equation}\label{eq:general_convex:7}
s=\frac{1}{(T-T_0)\eta } \in [0, 1 ]
\end{equation}
gives us
\begin{equation}\label{eq:general_convex:8}
\E [G_\hh(\bar{\x}, \{ \bar{\v}_k \}_{k=1}^K,\{\bar{\w}_k \}_{k=1}^K )] \le
\frac{4L^2\sigma\bar{\sigma}'}{\tau}\left(
\frac{1}{1+\frac{1}{2} \eta (T_0-t_0 )}
+\frac{1 }{2} \frac{1}{(T-T_0)\eta }\right).
\end{equation}
To have right hand side of \eqref{eq:general_convex:8} smaller then $\varepsilon_{G_\hh}$ it is sufficient to choose $T_0$ and $T$ such that 
\begin{align}
\label{eq:general_convex:9:1}
\frac{4L^2\sigma\bar{\sigma}'}{\tau}\left(
\frac{1}{1+\frac{1}{2} \eta (T_0-t_0 )} \right)
\le& \frac{1}{2} \varepsilon_{G_\hh} \\
\label{eq:general_convex:9:2}
\frac{4L^2\sigma\bar{\sigma}'}{\tau}\left(\frac{1 }{2}
\frac{1}{(T-T_0)\eta }\right)
\le& \frac{1}{2} \varepsilon_{G_\hh} 
\end{align}
Hence if $T_0  \ge t_0 + \frac{2}{\eta } \left(\frac{8L^2\sigma\bar{\sigma}'}{\tau\varepsilon_{G_\hh}}-1 \right)$ and
$ T\ge T_0 + \frac{4L^2\sigma\bar{\sigma}'}{\tau\varepsilon_{G_\hh}\eta }$ then \eqref{eq:general_convex:9:1} and \eqref{eq:general_convex:9:2} are satisfied.

\end{proof}

\localcertificate*
\begin{proof}
	If the $\w_k$ variable in the duality gap \eqref{eq:GH} is fixed to $\w_k = \locgra_k := \nabla f(\v_k)$, then 
	using the equality condition of the Fenchel-Young inequality on $f$, the duality gap can be written as follows
	\begin{equation}\label{eq:prop:1}
	G_\mathcal{H} (\x; \{\v_k\}_{k=1}^K) :=
	\sum_{k=1}^{K} \left( \langle \v_k, \locgra_k \rangle + 
	\sum_{i \in\Pk} g_i(\x_i) + g_i^*(-\BA_i^\top\bar{\locgra})
	\right)
	\end{equation}
	where $\bar{\locgra}=\frac{1}{K}\sum_{k=1}^K \locgra_k$ is the only term locally unavailable. 
	\begin{equation}\label{eq:prop:11}
	G_\mathcal{H} \le
	\sum_{k=1}^{K} \left( \langle \v_k, \locgra_k \rangle + 
	\sum_{i \in\Pk} g_i(\x_i) + g_i^*(-\BA_i^\top\locgra_k)
	\right) 
	+ \left|\sum_{k=1}^{K}\sum_{i\in\Pk} \left(g_i^*(-\BA_i^\top \bar{\locgra}) - g_i^*(-\BA_i^\top \locgra_k)\right)\right|
	\end{equation}
	If both terms in \eqref{eq:prop:11} are less than $\varepsilon/2$, then $G_\mathcal{H} \le \varepsilon$. Since the first term can be calculated locally, we only need for all $k=1,\ldots K$
	\begin{equation}\label{eq:prop:12}
	\langle \v_k, \locgra_k \rangle + 
	\sum_{i \in\Pk} g_i(\x_i) + g_i^*(-\BA_i^\top\locgra_k)  \le \frac{\varepsilon}{2K}.
	\end{equation}
	Consider the second term in  \eqref{eq:prop:11}. Compute the difference between  $g_i^*(-\BA_i^\top\bar{\locgra})$  and $g_i^*(-\BA_i^\top {\locgra}_k)$
	\begin{equation}\label{eq:prop:2}
	|g_i^*(-\BA_i^\top \bar{\locgra}) - g_i^*(-\BA_i^\top \locgra_k)|
	\le L |-\BA_i^\top (\bar{\locgra}- \locgra_k)| \le L \norm{\BA_i}_2 \norm{\bar{\locgra}- \locgra_k}_2
	\end{equation}
	where we use \cref{lemma:duality} and $L$-Lipschitz continuity.
	Then sum up coordinates $i \in \Pk$ on node $k$ 
	\begin{equation}\label{eq:prop:3}
	\left|\sum_{i\in\Pk} \left(g_i^*(-\BA_i^\top \bar{\locgra}) - g_i^*(-\BA_i^\top \locgra_k)\right)\right|
	\le L \norm{\bar{\locgra}- \locgra_k}_2 \sum_{i\in\Pk}  \norm{\BA_i}_2.
	\end{equation}
	Sum up \eqref{eq:prop:3} for all $k=1,\ldots, K$ and apply the Cauchy-Schwarz inequality
	\begin{align}\label{eq:prop:6}
	\left|\sum_{k=1}^{K}\sum_{i\in\Pk} \left(g_i^*(-\BA_i^\top \bar{\locgra}) - g_i^*(-\BA_i^\top \locgra_k)\right)\right|
	\le L \sqrt{\sum_{k=1}^{K}\norm{\bar{\locgra}- \locgra_k}_2^2}
	\sqrt{\sum_{k=1}^{K}(\sum_{i\in\Pk}  \norm{\BA_i}_2)^2} .
	\end{align}
	
	We will upper bound $\sum_{i\in\Pk}  \norm{\BA_i}_2$  and $\norm{\bar{\locgra}- \locgra_k}_2$ separately.
	First we have
	\begin{equation}\label{eq:prop:4}
	\sum_{i\in\Pk}  \norm{\BA_i}_2 \le \sqrt{n_k} \norm{\vsubset{\BA}{k}}_F \le n_k \norm{\vsubset{\BA}{k}}_{\infty,2}  
	\le n_k \sqrt{\sigma_k},
	\end{equation}
	where we write $\norm{.}_{\infty,2}$ for the largest Euclidean norm of a column of the argument matrix, and then used the definition of $\sigma_k$ as in  \eqref{eq:sigma}.
	Let us write $\BG:=[\locgra_{1}; \cdots; \locgra_{K}]$, $\BE:=\frac{1}{K}[\one;\cdots; \one]$, then
	apply Young's inequality with $\delta$
	\begin{align*}
	\textstyle \sum_{k=1}^K \norm{\locgra_{k} - \bar{\locgra}}_2^2 =& \norm{\BG - \BG\BE}_F^2 \\
	\le&  (1 + \frac{1}{\delta})\norm{\BG - \BG\mixingmat}_F^2 + (1+\delta)\norm{\BG\mixingmat - \BG\BE}_F^2 \\
	=&(1 + \frac{1}{\delta})\norm{\BG - \BG\mixingmat}_F^2 +  (1+\delta)\norm{\BG(\BI -\BE)(\mixingmat - \BE)}_F^2 \\
	\le&(1 + \frac{1}{\delta})\norm{\BG - \BG\mixingmat}_F^2 + (1+\delta)\beta^2 \norm{\BG(\BI -\BE)}_F^2 \\
	=&\textstyle
	(1 + \frac{1}{\delta}) \sum_{k=1}^K \norm{\locgra_k - \frac{1}{|\mathcal{N}_k|} \sum_{j\in \mathcal{N}_k}  \locgra_j}_2^2
	+ (1+\delta)\beta^2 \sum_{k=1}^K \norm{\locgra_{k} - \bar{\locgra}}_2^2 
	\end{align*}
	Take $\delta:=(1-\beta) / \beta$, then we have
	\begin{align}\label{eq:prop:5}
	\sum_{k=1}^K \norm{\locgra_{k} - \bar{\locgra}}_2^2 
	\le&  \frac{1}{1-\beta} \sum_{k=1}^K \norm{\locgra_k - \frac{1}{|\mathcal{N}_k|} \sum_{j\in \mathcal{N}_k}  \locgra_j}_2^2
	+ \beta \sum_{k=1}^K \norm{\locgra_{k} - \bar{\locgra}}_2^2 \notag\\
	\le& \frac{1}{(1-\beta)^2} \sum_{k=1}^K \norm{\locgra_k - \frac{1}{|\mathcal{N}_k|} \sum_{j\in \mathcal{N}_k}  \locgra_j}_2^2
	\end{align}
	We now use \eqref{eq:prop:4} and \eqref{eq:prop:5} and impose
	\begin{equation}\label{eq:prop:13}
	\frac{1}{(1-\beta)^2} \sum_{k=1}^K \norm{\locgra_k - \frac{1}{|\mathcal{N}_k|} \sum_{j \in \mathcal{N}_k}  \locgra_j}_2^2
	\sum_{k=1}^{K} n_k^2 \sigma_{k} \le \left(\frac{\varepsilon}{2L} \right)^2
	\end{equation}
	then \eqref{eq:prop:6} is less than $\varepsilon / 2$. Finally, \eqref{eq:prop:13} can be guaranteed by imposing the following restrictions for all $k=1,\ldots, K$
	\begin{equation}\label{eq:prop:10}
	\norm{\locgra_k - \frac{1}{|\mathcal{N}_k|} \sum_{j\in \mathcal{N}_k}   \locgra_j}_2^2
	\le (1-\beta)^2\frac{\varepsilon^2}{4KL^2} \left(\sum_{k=1}^{K} n_k^2 \sigma_{k} \right)^{-1}
	\end{equation}
\end{proof}


%
%
%

\section{Experiment details}\label{sec:experiment_settings}

In this section we provide greater details about the experimental setup and implementations. All the codes are written in PyTorch (0.4.0a0+cc9d3b2) with MPI backend \citep{paszke2017automatic}.
In each experiment, we run centralized CoCoA for a sufficiently long time until progress stalled; then use their minimal value as the approximate  optima.

\paragraph{\diging.} \diging is a distributed algorithm based on inexact gradient and a gradient tracking technique. 
\citep{nedic2017achieving} proves  linear convergence of \diging when the distributed optimization objective is strongly convex  over time-varying graphs with a fixed learning rate. 
  In this experiments, we only consider the time-invariant graph. The stepsize is chosen via a grid search. \citep{nedic2017achieving} mentioned that the  EXTRA algorithm \citep{shi2015extra} is  almost identical to that of the DIGing algorithm when the same stepsize is chosen for both algorithms, so
we only present with \diging here.


\paragraph{\cola.} We implement \cola framework with local solvers from Scikit-Learn \citep{scikit-learn}. Their ElasticNet solver  uses coordinate descent internally. We note that since the framework and theory allow any internal solver to be used, \cola could benefit even beyond the results shown by using existing fast solvers. 
We implement \cocoa as a special case of \cola.
The aggregation parameter $\gamma$ is fixed to $1$ for all experiments.

\paragraph{ADMM.} Alternating Direction Method of Multipliers (ADMM) \citep{boyd2011distributed} 
is a classical  approach in distributed optimization problems.
Applying ADMM to decentralized settings \citep{Shi:2014js} involves solving
\begin{align*}\textstyle
\min_{x_i, z_{i j}}\sum_{i=1}^{L} f_i (x_i) \qquad
\text{s.t. } x_i = z_{ij}, x_j = z_{ij}, \qquad \forall~(i, j) \in \mathcal{E}
\end{align*}
where $z_{ij}$ is an auxiliary variable imposing the consensus constraint on neighbors $i$ and $j$.
We therefore employ the coordinate descent algorithm to solve the local problem.
The number of coordinates chosen in each round is the same as that of \cola.
We choose the penalty parameter from the conclusion of \citep{Shi:2014js}.

\paragraph{Additional experiments.}
We provide additional experimental results here. First the \textit{consensus violation}
$ \sum_{k=1}^{K} \norm{\v_k - \v_c}_2^2$ curve for \cref{fig:convergence} is displayed in~\cref{fig:consensus_violation}. As we can see, the consensus violation starts with 0 and soon becomes very large, then gradually drops down. This is because we are minimizing the sum of $\vc{\ooa}{t}$ and $\vc{\delta}{t}$, see the proof of \cref{theorem:linear_rate}. 
Then another model under failing nodes is tested in \cref{fig:time_varying_graph_reset} where $\vsubset{\x}{k}$ are initialized to 0 when node $k$ leave the network.
Note that we assume the leaving node $k$ will inform its neighborhood and modify their own local estimates so that the rest nodes still satisfy $\frac{1}{\text{\#nodes} }\sum_{k} \v_k = \v_c$. This failure model, however, oscillates and does not converge fast.

\begin{figure}
	\begin{minipage}{0.48\textwidth}
		\centering
		\includegraphics[width=1.0\textwidth, keepaspectratio]{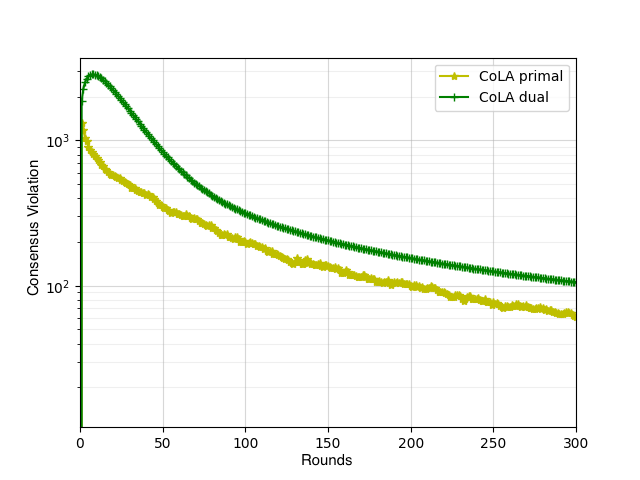}
		\caption{The consensus violation curve of \cola in \cref{fig:convergence}.\\~}
		\label{fig:consensus_violation}
	\end{minipage}
	~~~
	\begin{minipage}{0.48\textwidth}
		\centering
		\includegraphics[width=1.0\textwidth, keepaspectratio]{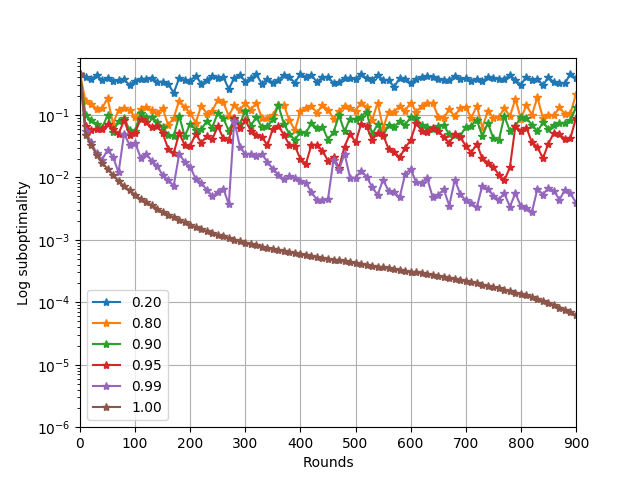}  
		\caption{Same settings as \cref{fig:time_varying_graph} except that $\vsubset{\x}{k}$ are reset after node $k$ leaving the network.\\~}
		\label{fig:time_varying_graph_reset}
	\end{minipage}
\end{figure}

\section{Details regarding extensions}
\label{app_extensions}

\subsection{Fault tolerance and time varying graphs} \label{ssub:ft_timevarying}
In this section we extend framework \cola to handle fault tolerance and time varying graphs.
Here we assume when a node leave the network, their local variables $\x$ are frozen.
We use same assumptions about the fault tolerance model in \citep{smith2017federated}. 
\begin{definition}[Per-Node-Per-Iteration-Approximation Parameter]\label{def:pnpiap}
	At each iteration $t$, we define the accuracy level of the solution calculated by node $k$ to its subproblem as
	\begin{equation}\label{eq:varying_theta}
	\theta_k^t:= \frac{\newsub(\vc{\Delta\xv}{t}_k; \vc{\vsubset{\xv}{k}}{t}, \vc{\v_k}{t})
		- \newsub({\Delta\xv}^\star_k; \vc{\vsubset{\xv}{k}}{t}, \vc{\v_k}{t})
	}{ \newsub(\0; \vc{\vsubset{\xv}{k}}{t}, \vc{\v_k}{t})
	- \newsub({\Delta\xv}^\star_k; \vc{\vsubset{\xv}{k}}{t}, \vc{\v_k}{t})}
	\end{equation}
	where ${\Delta\xv}^\star_k$ is the minimizer of the  subproblem $\newsub(\cdot; \vc{\vsubset{\xv}{k}}{t}, \vc{\v_k}{t})$.
	We allow this value to vary between [0, 1] with $\theta_k^t:=1$ meaning that no updates to subproblem 
	$\newsub$ are made by node $k$ at iteration $t$.
\end{definition}

The flexible choice of $\theta_k^t$ allows the consideration of stragglers and fault tolerance. We also need the following assumption on $\theta_k^t$.
\begin{assumption}[Fault Tolerance Model]\label{assumption:varying_theta}
	Let $\{ \vc{\x}{t} \}_{t=0}^T$ be the history of iterates until the beginning of iteration $T$.
	For all nodes $k$ and all iterations $t$, we assume
	$p_k^t := \P[\theta_k^t=1] \le p_\text{max} < 1$ and $\hat{\Theta}_k^T:=\E[\theta_k^T| \{ \vc{\x}{t} \}_{t=0}^T, \theta_k^T < 1] \le \Theta_\text{max}<1$.  
\end{assumption}
In addition we write $\bar{\Theta}:= p_\text{max} + (1 - p_\text{max}) \Theta_\text{max} < 1$. Another assumption on time varying model is necessary  in order to maintain the  same linear and sublinear convergence rate. It is from \cite[Assumption 1]{nedic2017achieving}: 
\begin{assumption}[Time Varying Model]\label{assumption:varying_beta} 
	Assume the mixing matrix $\mixingmat(t)$ is a function of time $t$.
	There exist a positive integer $B$ such that
	the spectral gap satisfies the following condition
	\begin{align*}\textstyle
	\sigma_{\text{max}}\left\{
	\prod_{i=t}^{t+B-1} \mixingmat(i) - \frac{1}{K} \one \one^\trans 
	\right\} \le \beta_\text{max} \qquad \forall~t \ge 0.
	\end{align*}
\end{assumption}
We change the \cref{alg_dcocoa} such that it performs gossip step for $B$ times between solving subproblems. In this way, the convergence rate  on time varying mixing matrix is similar to a static
graph with mixing matrix $\prod_{i=t}^{t+B-1} \mixingmat(i)$.
The sublinear/linear rate can be proved similarly.
\subsection{Data dependent aggregation parameter} \label{ssec:sigma}
\begin{definition}[Data-dependent aggregation parameter] In \cref{alg_dcocoa}, the aggregation parameter $\gamma$ controls the level of adding $\gamma$ versus averaging $\gamma := \frac{1}{K}$ of the partial solution from all machines. For the convergence discussed below to hold, the subproblem
	parameter $\sigma'$ must be chosen not smaller than
	\begin{equation}\label{eq:sigma_prime}
	\sigma' \ge \sigma'_\text{min} := \gamma \max_{\xv\in\R^n} \frac{\norm{\BA\xv}^2}{
		\sum_{k=1}^{K} \norm{\BA\vsubset{\xv}{k}}^2}
	\end{equation}
	The simple choice of $\sigma':=\gamma K$ is valid for \eqref{eq:sigma_prime}, closer to the actual bound given in $\sigma'_\text{min}$.
\end{definition}

\subsection{Hessian subproblem} \label{ssec:hessian}
If the Hessian matrix of $f$ is available, it can be used to define better local subproblems, as done in the classical distributed setting by  \citep{gargiani2017master,lee2017distributed,duenner2018trust,lee2018distributed}.  We use same idea in the decentralized setting, defining the improved  subproblem
\begin{equation}\label{eq:subproblem_modified_2}
	\begin{split}
		\newsub(\Delta\xv; \vsubset{\xv}{k}, \v_k)
			:=&\textstyle \frac{1}{K} f(\vv_k)
			+ \left\langle \sum_{l=1}^{K} \mixingmat_{kl}\nabla f(\vv_l), \BA\vsubset{\Delta\xv}{k} \right\rangle\\
			&\textstyle + \frac{1}{2}(\BA\vsubset{\Delta\xv}{k})^\top
				\left(\sum_{l=1}^{K}\mixingmat_{kl} \nabla^2 f(\vv_l)\right) \BA\vsubset{\Delta\xv}{k}
			+ \sum_{i \in \Pk} g_i(x_i + {\Delta x_i})
	\end{split}
\end{equation}
The sum of previous subproblems satisfies the following relations
\begin{align*}
	&\sum_{k=1}^{K}\newsub(\0; \vsubset{\vc{\xv}{t+1}}{k}, \vc{\v_k}{t+1})\\
	=&
	\frac{1}{K}\sum_{k=1}^{K} f\left(
		\sum_{l=1}^{K} \mixingmat_{kl} \vc{\vv_l}{t} + \gamma K \BA\vsubset{\Delta\xv}{k}
		\right)
	+
	\sum_{i\in\Pk} g_i(\vc{x_i}{t}+(\Delta\vsubset{\xv}{k})_i)\\
	\le&
	\frac{1}{K}
	\sum_{k=1}^{K}
	\sum_{l=1}^{K}
		\mixingmat_{kl}
		\left\{
			f(\vc{\vv_l}{t})
			+
			\langle \nabla f(\vc{\vv_l}{t}), \gamma K \BA\vsubset{\Delta\xv}{k}\rangle
			+
			\frac{1}{2} (\BA\vsubset{\Delta\xv}{k})^\top  \nabla^2 f(\vc{\vv_l}{t}) \BA\vsubset{\Delta\xv}{k}
		\right\}\\
	&+
	\sum_{i\in\Pk} g_i(\vc{x_i}{t}+(\Delta\vsubset{\xv}{k})_i)\\
	=&
	\sum_{k=1}^{K} \newsub(\Delta\xv; \vc{\vsubset{\xv}{k}}{t}, \vc{\v_k}{t})
	\le
	\sum_{k=1}^{K} \newsub(\0; \vc{\vsubset{\xv}{k}}{t}, \vc{\v_k}{t})
\end{align*}
This means that the sequence $\big\{ \sum_{k=1}^{K} \newsub(\0; \vc{\vsubset{\xv}{k}}{t}, \vc{\v_k}{t}) \big\}_{t=0}^\infty$ is monotonically non-increasing. Following the reasoning in this paper, we can have similar convergence guarantees for both strongly convex and general convex problems.
Formalizing all detailed implications here would be out of the scope of this paper, but the main point is that the second-order techniques developed for the \cocoa framework also have their analogon in the decentralized setting.

\end{document}